\newtheorem{theorem}{Theorem}
\newtheorem{proposition}[theorem]{Proposition}
\newtheorem{lemma}{Lemma}
\newtheorem{corollary}{Corollary}
\newtheorem{remark}{Remark}
\def\bP{\mathbb{P}}
\def\bE{\mathbb{E}}
\def\bR{\mathbb{R}}
\def\bN{\mathbb{N}}
\def\cF{\mathcal{F}}
\DeclareMathOperator{\cov}{Cov}
\DeclareMathOperator{\Var}{Var}
\def\ps@pprintTitle{%
 \let\@oddhead\@empty
 \let\@evenhead\@empty
 \def\@oddfoot{\centerline{\thepage}}%
 \let\@evenfoot\@oddfoot}
\begin{document}

\begin{frontmatter}

\title{Gaussian dependence structure pairwise goodness-of-fit testing based on conditional covariance and the 20/60/20 rule}

\author[label1]{Jakub Wo\'zny}

\author[label2]{Piotr Jaworski}

\author[label1]{Damian Jelito\corref{cor1}} \ead{Damian.Jelito@uj.edu.pl}
\author[label1]{Marcin Pitera}

\author[label3]{Agnieszka Wy\l oma\'nska}

\address[label1]{Institute of Mathematics, Jagiellonian University, S. {\L}ojasiewicza 6, 30-348 Krak{\'o}w, Poland}
\address[label2]{Institute of Mathematics, University of Warsaw, S. Banacha 2, 02-097 Warsaw, Poland}
\cortext[cor1]{Corresponding author.}
\address[label3]{Faculty of Pure and Applied Mathematics, Hugo Steinhaus Center, Wroc{\l}aw University of Science and Technology, Hoene-Wronskiego 13c, \\50-376 Wroc{\l}aw, Poland}

\begin{abstract}
We present a novel data-oriented statistical framework that assesses the presumed Gaussian dependence structure in a pairwise setting. This refers to both multivariate normality and normal copula goodness-of-fit testing. The proposed test clusters the data according to the 20/60/20 rule and confronts conditional covariance (or correlation) estimates on the obtained subsets. The corresponding test statistic has a natural practical interpretation, desirable statistical properties, and asymptotic pivotal distribution under the multivariate normality assumption. We illustrate the usefulness of the introduced framework using extensive power simulation studies and show that our approach outperforms popular benchmark alternatives. Also, we apply the proposed methodology to commodities market data.
\end{abstract}
\begin{keyword}
correlation \sep covariance \sep multivariate normality test \sep Gaussian copula \sep normal copula \sep  factor copula model \sep 20/60/20 rule \sep Markowitz portfolio optimisation \sep Incremental Risk Charge \sep Default Risk Charge
\end{keyword}
\end{frontmatter}

\section{Introduction}
Second-moment matrices are a common way to describe, characterise, or quantify dependence in the multivariate setting, see \cite{DroKot2001}, \cite{Mui2009}, and references therein. For example, this refers to factor copula models or optimal portfolio selection processes, which are often based on covariance/correlation matrix specifications, see, e.g., \cite{Car2009b} and \cite{WilBruCho2013}. From the modelling perspective, the usage of a particular dependence structure, either covariance-based or not, should be verified using statistical methods. In fact, a statistical procedure, commonly known as a {\it goodness-of-fit testing}, is typically performed to assess the suitability of the presumed dependence structure and its consistency with the available data. In the literature, there are multiple types of goodness-of-fit statistical tests focused on distribution or copula fit, including blanket tests, family-class-specific tests, and likelihood ratio tests. We refer to \cite{BerBak2006}, \cite{GenRemBea2009}, \cite{Fer2013}, and references therein for a review of the available methods with particular emphasis on the dependence structure testing. 
 
In practice, one should be careful when using covariance/correlation as a dependence structure fit metric, since linear dependence measures might fail to adequately capture various nonlinear characteristics. This problem is extensively discussed in the literature and is known to result in various modelling challenges. We refer to \cite{Sti1989} and \cite{Ald1995} for the historical background, and to \cite{Vel2015}, \cite{EmbMcnStr2002}, \cite{ZeeMas2002}, \cite{ZhaOkhZhoSon2016}, and \cite{TjoOtnSt02022} for a more recent analysis of properties and pitfalls resulting from the use of second-moment matrices for dependence characterisation. To mitigate this, at least from a theoretical perspective, the usage of the covariance/correlation matrix as a sole dependence indicator is often accompanied by explicit or implicit normality or ellipticity assumption, see \cite{PafKon2004} and \cite{GreLau2004}. As expected, this assumption is later used in the targeted goodness-of-fit testing, which facilitates efficient dimension reduction (e.g. through stochastic radius evaluation) and reliance on a well-known univariate distribution discrimination test statistics, see \cite{BenBon1980}, \cite{JasHAuMin2017}, \cite{AmeSen2020}, and references therein. 

In this context, despite its well-known flaws, the usage of multivariate normal distribution or normal copula is arguably the most common and pragmatic modelling choice, see e.g. \cite{DAS2024105310}, \cite{PENG2022104940}. Let us alone mention Markowitz's portfolio optimisation models or the Incremental Risk Charge (IRC) models used to quantify rating transition risk within the Market Risk capital framework, which are often based, directly or indirectly, on the normality setup, see \cite{McnFreEmb2010}. This is also reflected in the rich literature on targeted multivariate normality goodness-of-fit testing. We refer to \cite{EbnHen2020} for an excellent recent survey on this topic; see also \cite{henze2002invariant},  \cite{mecklin2004appraisal} and the references therein. Note that under the normality null hypothesis, the targeted normality goodness-of-fit tests are known to have better discriminatory power when compared to blanket tests focused, e.g., on generic distribution distance measurement; this also applies to normal copula goodness-of-fit setup, as the copula could be easily transferred to the multivariate normal distribution via the Rosenblatt normality transform, see \cite{MelBec2018} and \cite{MalSor2003}. That being said, to achieve the pivotal quantity property, the majority of powerful goodness-of-fit normality test statistics lack natural practical (non-mathematical) interpretation and they are often based on nontrivial normalisation schemes (utilising inverse empirical covariance matrix) and dimension reduction techniques, which motivates the need for further research in this area.

In this paper, as a potential remediation for the aforementioned problems, we introduce an intuitive, straightforward, and direct pairwise covariance-based goodness-of-fit test that can be used to assess the adequacy of multivariate normality or the normal copula distributional assumption. Our test is based on a very simple idea: we check whether the quantile-conditioned covariances on specific subsets of data are aligned. 
The new test could be seen as a multivariate dependence-oriented extension of the statistical framework developed in \cite{JelPit2021}, where conditional second-moment analysis was used to construct a powerful univariate normality testing framework based on the 20/60/20 rule. In the present setting, we focus on dependence assessment and assume that the marginal distribution fit is adequate (or is checked separately); this is roughly equivalent to the copula goodness-of-fit testing, see Remark~\ref{rem:margins}; we refer to~\cite{MalSor2003,GenRemBea2009,AmeSen2020} for a similar setting. See also \cite{SCHEPSMEIER201534} for a comprehensive review.

The key tool in the study is the conditional covariance-based test statistic $T_n$ introduced in Equation~\eqref{T.stat}. In a nutshell, it measures the empirical difference between the conditional covariance values under the 20/60/20 population split induced by a benchmark random variable defined in Equation~\eqref{eq:hatY}; this is used to establish a pairwise normality (normal copula) goodness-of-fit statistical test. The theoretical properties of $T_n$ are stated in Theorem~\ref{theorem}, which is the key theoretical result of the paper. We show that the test statistic introduced in this paper is an asymptotic pivotal quantity with the standard normal asymptotic distribution; this also indirectly shows that the conditional covariance distance measurement process leads to consistent estimates. In the second part of the paper, we compare the performance of $T_n$ against other benchmark goodness-of-fit alternatives. This is done via statistical power analysis using the Monte Carlo approach under popular alternative model specifications. Our analysis indicates that despite its simplicity, the test statistic $T_n$ often has the best discriminatory power, at least when faced with heavy-tailed dependence structure alternatives.

For completeness, it should be noted that our framework and test statistic design are based on a series of theoretical results that provide convenient analytical formulas of covariance/correlation matrices under the normality assumption. First, it has been shown in \cite{JawPit2015} that if a multivariate random vector $X$ is normally distributed, then conditional covariance matrices on specific subsets are equal to each other irrespective of the choice of the underlying parameter. Namely, if we introduce a benchmark random variable $Y$ that is a linear combination of the margins of $X$, and then split the population according to the values of $Y$ using a ratio close to 20/60/20, then the conditional covariance matrices of $X$ are equal to each other. Second, it was shown that a similar setup, but with potentially different ratios, is true for the elliptical case, see \cite{JawPit2017}. Third, conditional covariance/correlation matrices have been identified as efficient distribution classifiers, see \cite{JawPit2020}. Finally, the conditional correlation matrices could be used to obtain the complete characterisation of independence, see~\cite{JawJelPit2023}. Taken together, those results show that the comparison of conditional covariance/correlation matrices could be used to detect non-Gaussian behaviour, especially when it is linked to an increase of dependence in the tails; see \cite{JelPit2021} for details. For simplicity, we focus on bivariate dependency testing, but our framework effectively refers to (pairwise) multivariate setting, see Section~\ref{S:application2}.

This paper is organised as follows. In Section~\ref{S:preliminaries} we set up the terminology and notation linked to conditional covariance matrices and correlation matrices, and recall the results linking the conditional covariance framework to multivariate normality. Then, in Section~\ref{S:application2} we introduce a distribution characteristic used for the pairwise goodness-of-fit test, which is introduced later in Section~\ref{SS:Statistic}. In Section~\ref{S:power} we evaluate the performance of the introduced test using Monte Carlo simulations and in Section~\ref{S:empirical} we apply the proposed methodology to empirical data. Technical lemmas and proofs are deferred to~\ref{appendix.A}.

\section{Preliminaries}\label{S:preliminaries}
Let $(\Omega,\cF,\bP)$ be a probability space. Throughout this paper we use $X=(X_1,\ldots,X_k)^\top$ to denote a non-degenerate $k$-dimensional (column) random vector with finite second moments, i.e. $X_i\in L^2(\Omega,\cF,\bP)$, for $i=1,\ldots,k$, and fixed $k\in\bN\setminus \{0\}$. For a fixed {\it loading factor} vector $\alpha=(\alpha_1,\ldots,\alpha_k)^\top\in\mathbb{R}^k\setminus \{0\}$, we define a linear combination of margins of $X$, induced by $\alpha$, by setting
\begin{equation}\label{eq:Y.alpha}
Y:=\alpha_1X_1+\ldots+\alpha_kX_k,
\end{equation}
and refer to $Y$ as a {\it benchmark} of $X$; note that using the scalar product $\langle \cdot,\cdot\rangle$ notation, we may simply write $Y=\langle \alpha,X\rangle$. Next, for any given {\it quantile split} values $a,b\in [0,1]$, where $a<b$, we define a benchmark based {\it quantile conditioning set} given by 
\begin{equation}\label{eq:A.alpha}
A(a,b):=\{y\in\bR\colon F^{-1}_{Y}(a)<y<F^{-1}_{Y}(b)\},
\end{equation}
where $F^{-1}_{Y}$ denotes the (generalised) inverse of the cumulative distribution function (CDF) of $Y$; sometimes we also write $A_Y(a,b)$ to emphasise the dependence of $A(a,b)$ on $Y$. Note that the set $A(a,b)$ is scale-invariant, i.e. if we replace $Y$ with $\lambda Y$, for some $\lambda>0$, then $A(a,b)$ will not change. Because of that, the {\it loading factor} $\alpha\in \bR^k\setminus \{0\}$ could be parameterised using $k-1$ parameters corresponding e.g. to a normalised $\alpha$ satisfying $\Vert \alpha\Vert_{2} =1$. Still, for clarity, we decided to use the original notation. Next, if the choice of $Y$ as well as quantile splits $a$ and $b$ is fixed, we use the simplified notation $A:=A(a,b)$ and set
\begin{align*}
\mu_{A}& :=\mathbb{E}[X\mid Y\in A],\\
\cov[X\mid Y\in A] & :=\mathbb{E}[(X-\mu_{A})(X-\mu_{A})^\top \mid Y\in A],
\end{align*}
to denote the $A$-set conditional mean vector and conditional variance-covariance matrix, respectively. Note that these objects are well defined, since $\bP[Y\in A]= b-a>0$ and $X\in L^2(\Omega,\cF,\bP)$; the latter assumption is, in fact, only required for a limit situation when $a=0$ or $b=1$. 
With a slight abuse of notation, we sometimes write $\cov[X | A]$ instead of $\cov[X | Y\in A]$. For consistency, we use $\mu$ and $\cov[X]$ (or $\Sigma$) to denote the unconditional mean and variance-covariance matrix of $X$, respectively.  

The conditional variance-covariance matrix enjoys a very natural interpretation. Similarly to the unconditional variance-covariance matrix, it could be seen as a dependence measure of $X$, given that we subset the data based on $Y$. In particular, in the financial context, one can assume that $X$ describes the rates of returns of some assets and $Y$ is the rate of return of the associated index. Then, $\cov[X|Y\in A]$ measures the dependence between the returns given that the overall state of the market corresponds to the event ${Y\in A}$. For instance, the stressed market condition, where the index returns are particularly low, could be modelled by low values of $a$ and $b$, and the corresponding conditional covariance matrix described the dependence of $X$ in this extreme regime.

As usual, we use $\Phi$ to denote the CDF of the standard normal random variable and write $X\sim N_k(\mu,\Sigma)$ if we want to assume that $X$ follows a multivariate normal distribution with mean $\mu$ and covariance matrix $\Sigma$. 

\subsection{Quantifying the dependence structure:  copula function and conditional covariance matrices}

In this paper, we focus on the assessment of dependence structure. Consequently, to streamline the narrative, we assume that the choice of marginal distributions of $X$ is adequate and does not need to be checked. To better explain this assumption, we provide more comments about a situation where the margin modelling is split from the dependence modelling. Given a generic random variable $Z$, let $F_Z$ and $F^{-1}_Z$ denote its CDF and its (generalised) inverse, respectively. Also, let $H$ denote the multivariate CDF of the random vector $X$. Using this notation and Sklar's Theorem, we know that $H$ could be expressed as
\[
H(t_1,\ldots,t_k)=C(F_{X_1}(t_1),\ldots,F_{X_k}(t_k)),\qquad \, (t_1,\ldots,t_k)^\top\in\bR^k,
\]
where $C\colon [0,1]^k\to [0,1]$ denotes the copula function of $X$, see Theorem 2.10.9 in~\cite{nelsen2007introduction} for details. Then, we can simply say that we assume that the marginal CDFs of $X$ are known (up to affine transforms), but the choice of the underlying copula function $C$ requires verification. Moreover, we assume that the dependence structure is induced by the covariance matrix in the sense that the function $C$ could be effectively expressed as a function of $\cov [X]$; note that the parametrisation could be equivalently based on the correlation matrix since copulas are invariant to (positive) affine transformations of the margins. For example, this could refer to $C$ belonging to the class of normal or t-student copulas. This is a common setup within multi-factor copula models or portfolio optimisation frameworks, see Remark~\ref{rem:margins} and Remark~\ref{rem:IRC} in this paper and Section 11.5 in \cite{Hul2018}. Note that if we assume that all margins of $X$ are standard normal, then covariance matrix $\Sigma$ is equal to the correlation matrix and the copula of $X$ is given by $C(u_1,\ldots,u_k)=H(\Phi^{-1}(u_1),\ldots,\Phi^{-1}(u_k))$, see~\cite{nelsen2007introduction} for details.

Let us now comment on what is the interaction between the pre-assumed dependence structure encoded in $H$ and conditional covariance matrices. First, from \cite{JawPit2020}, we know that the law of the benchmark $Y$ can be recovered (up to an additive constant) if we are given the values of conditional covariance matrices for all quantile splits. Indeed, it suffices to note that the conditional variance of $Y$ on the set $A_Y(a,b)$ can be recovered directly from $\cov[X\mid Y\in A_Y(a,b)]$ and use Theorem~3.1 from \cite{JawPit2020}. In particular, this implies that we can fully recover the multivariate distribution of $X$ (up to additive shift), if we are given conditional covariance matrices for all splits and loading factors. For completeness, we formulate this result in Theorem~\ref{prop.Jaw2} which could be seen as a special case of Theorem 3.2 in \cite{JawPit2020}; we omit the proof for brevity.

\begin{theorem}[Conditional covariance matrices as distribution classifiers]\label{prop.Jaw2}
    Let $X$ and $\tilde X$ be random vectors such that
    \[
    \cov[X\mid Y\in A_Y]=\cov[\tilde X\mid\tilde Y\in A_{\tilde Y}],
    \]
    for any $0\leq a< b\leq 1$ and $\alpha\in \bR^{k}\setminus\{0\}$, where $Y=\langle \alpha,X\rangle$, $\tilde Y=\langle \alpha,\tilde X\rangle$, $A_Y=\left\{y\in\bR\colon F^{-1}_Y(a)< y< F^{-1}_Y(b) \right\}$, and $A_{\tilde Y}=\left\{y\in\bR\colon F^{-1}_{\tilde Y}(a)\leq  y\leq F^{-1}_{\tilde Y}(b) \right\}$. Then, there exists $c\in\bR^k$ such that $H_X(t)=H_{\tilde X+c}(t)$ for $t\in\bR^{k}$, i.e. the multivariate distributions of $X$ and $\tilde X$ coincide almost surely up to an additive shift. In particular, the copulas of $X$ and $\tilde X$ coincide.
\end{theorem}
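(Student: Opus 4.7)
The plan is to reduce the multivariate claim to the one-dimensional characterisation already recalled from \cite{JawPit2020}, and then lift the result back to $\bR^k$ via the Cram\'er--Wold device. I would proceed in three steps.

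\textbf{Step 1 (Reduction to univariate conditional variances).} Fix an arbitrary loading factor $\alpha\in\bR^k\setminus\{0\}$ and form the scalar benchmarks $Y=\langle\alpha,X\rangle$ and $\tilde Y=\langle\alpha,\tilde X\rangle$. For any admissible conditioning set $A$, the quadratic form identity
\[
\Var[Y\mid Y\in A]=\alpha^\top \cov[X\mid Y\in A]\,\alpha
\]
holds, and analogously for $\tilde Y$. Applying this with $A=A_Y(a,b)$ and $A=A_{\tilde Y}(a,b)$ and using the standing assumption yields $\Var[Y\mid Y\in A_Y(a,b)]=\Var[\tilde Y\mid\tilde Y\in A_{\tilde Y}(a,b)]$ for every $0\le a<b\le 1$.

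\textbf{Step 2 (Univariate reconstruction).} Invoke Theorem~3.1 of \cite{JawPit2020}, which asserts that the family of quantile-conditioned variances determines a univariate $L^2$ law up to an additive constant. This furnishes a scalar $c_\alpha\in\bR$ with $Y\stackrel{d}{=}\tilde Y+c_\alpha$. Taking expectations identifies the shift, namely $c_\alpha=\bE[Y]-\bE[\tilde Y]=\langle\alpha,\bE[X]-\bE[\tilde X]\rangle$, so if we set $c:=\bE[X]-\bE[\tilde X]\in\bR^k$ then $c_\alpha=\langle\alpha,c\rangle$ depends linearly on $\alpha$.

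\textbf{Step 3 (Cram\'er--Wold and conclusion).} For every $\alpha\in\bR^k\setminus\{0\}$ and every $t\in\bR$, Step 2 gives
\[
\bE\!\left[e^{it\langle\alpha,X\rangle}\right]=\bE\!\left[e^{itY}\right]=e^{it\,c_\alpha}\bE\!\left[e^{it\tilde Y}\right]=e^{it\langle\alpha,c\rangle}\bE\!\left[e^{it\langle\alpha,\tilde X\rangle}\right],
\]
so the characteristic functions of $X$ and $\tilde X+c$ coincide on every ray, hence on all of $\bR^k$. By uniqueness of characteristic functions, $X\stackrel{d}{=}\tilde X+c$, which is the desired identity $H_X(t)=H_{\tilde X+c}(t)$ on $\bR^k$. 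The equality of copulas is then immediate, since additive translation of each margin leaves the copula invariant (it preserves each marginal CDF up to translation and hence its rank transform).

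The only non-routine ingredient is Step~2, i.e.\ the univariate reconstruction result borrowed from \cite{JawPit2020}; this is the genuine analytic engine. Steps 1 and 3 are algebraic bookkeeping plus a standard Cram\'er--Wold argument, and the subtle point there is precisely the verification that the $\alpha$-dependent shift $c_\alpha$ is linear in $\alpha$, which is ensured by reading it off the means.
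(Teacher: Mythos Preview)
Your proposal is correct and matches exactly the informal argument the paper sketches in the paragraph preceding the theorem: recover $\Var[Y\mid Y\in A_Y]$ from the conditional covariance matrix via the quadratic form, invoke Theorem~3.1 of \cite{JawPit2020} to pin down the law of each $Y=\langle\alpha,X\rangle$ up to translation, and then vary $\alpha$. The paper itself omits the formal proof, deferring instead to Theorem~3.2 of \cite{JawPit2020}; your write-up simply supplies the missing details (the Cram\'er--Wold step and the linearity of $c_\alpha$ in $\alpha$), so there is nothing to compare beyond noting that you have filled in what the paper left implicit.
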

\noindent On the other hand, the usage of a single loading factor might be insufficient to recover the law of $X$. This can be easily checked by setting $Y=X_1$ and assuming $X_1$ is independent of the vector $(X_2,\ldots,X_k)$. Thus, while the set of all conditional covariance matrices is a distribution classifier, the single conditional covariance matric is its characteristics. In next section, we show how to use conditional covariances to construct efficient characteristics that measure increase of dependancy in the tail sets, which are in turn used for efficient normal distribution discrimination.

\begin{remark}[Marginal transforms and copulas]\label{rem:margins}
Assuming that the margins of $X$ are known, we might effectively assume that they are standard normal or could be transferred to standard normal using e.g. the Rosenblatt normality transform, see \cite{MelBec2018}. In fact, this is true for any (not necessarily normal) type of marginal distribution. Because of that, we often assume that the margins are consistent with the chosen copula family in a way that it facilitates a direct estimation of copula parameters based on the covariance (correlation) structure.
\end{remark}

\begin{remark}[Correlation matrix as a dependence structure characteristic]\label{rem:IRC}
Usage of the covariance (or correlation) matrix as a sole dependence measure is typically accompanied by additional assumptions. Within the portfolio optimisation setup, where $X$ is used to describe future asset returns, this is often accompanied by assuming a specific dependence structure imposed on $X$, using a multi-factor model, or using variance as a key risk indicator, see \cite{OweRab1983,EltGruBrogoe2014,Kou2020}. Furthermore, there are many models in which the correlation dependence structure is imposed indirectly. In particular, this refers to Incremental Risk Charge (IRC) and Default Risk Charge (DRC) models, which are often based on multi-factor copula models with correlation structure adequacy assumption (encoded in the copula function choice) being a key model aspect, see e.g. \cite[Section III.C]{EBA2012} or \cite[Section 6.3]{EGIM}. In particular, while the model output might be linked to a discrete-type outcome (e.g. changes in credit ratings), the dependence structure might be still calibrated using continuous-type data (e.g. asset correlations, spreads) which constitute $X$, see \cite{WilBruCho2013}, \cite{MarLutWeh2011}, \cite{GreLau2004}, and references therein.
\end{remark}

\subsection{Conditional covariance matrices and normality assumptions: the 20/60/20 rule}\label{S:CCV}

While a closed form formula for the conditional covariance matrix $\cov[X|A]$ is hard to obtain in general, it can be directly derived for the multivariate normal distributions. For completeness, we state it in Proposition~\ref{pr:cov.normal}; for the proof, see Theorem 4.1 and Example 4.2 in~\cite{JawPit2017}. 

\begin{proposition}\label{pr:cov.normal}
Let $X\sim N_k(\mu,\Sigma)$ and $\alpha\in\bR^k\setminus\{0\}$. Then, for any $0\leq a<b\leq 1$ and $A=A(a,b)$, we get
\begin{equation}\label{eq:covX.normal}
\cov[X\mid Y\in A]=\cov[X]+\left(\Var[Y\mid Y\in A]-\Var[Y]\right)\beta\beta^\top,
\end{equation}
where $\beta:=\cov[Y,X]/\Var[Y]=(\alpha\Sigma\alpha^\top)^{-1}\Sigma \alpha^\top$ is the vector of regression coefficient from orthogonal projection of margins of $X$ onto $Y$.
\end{proposition}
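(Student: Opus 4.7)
The plan is to use the standard Gaussian regression decomposition of $X$ onto the univariate benchmark $Y=\langle \alpha,X\rangle$, and exploit the fact that in the joint-normal setting, uncorrelated components are independent. Concretely, I would write
\[
X = \mu + \beta\,(Y-\bE[Y]) + Z,
\]
where $\beta=\Sigma\alpha/(\alpha^\top\Sigma\alpha)$ is the vector of regression coefficients (matching the formula $\beta=\cov[Y,X]/\Var[Y]$ in the statement), and $Z:=X-\mu-\beta(Y-\bE[Y])$ is the residual. Since $X$ is Gaussian and $Y$ is a linear function of $X$, the joint vector $(Y,Z^\top)^\top$ is Gaussian as well. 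A direct computation shows $\cov[Y,Z]=\cov[Y,X]-\beta\Var[Y]=0$, so by joint normality $Z$ is independent of $Y$.

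Once independence is established, conditioning on $\{Y\in A\}$ affects only the $\beta(Y-\bE[Y])$ term. The key step is then the covariance computation
\[
\cov[X\mid Y\in A] = \beta\,\beta^\top\,\Var[Y\mid Y\in A] + \cov[Z\mid Y\in A] + \text{cross terms},
\]
where the cross terms are of the form $\beta\,\cov[Y,Z\mid Y\in A]^\top$ and its transpose. By independence of $Z$ and $Y$, the conditional joint distribution of $(Y,Z)$ on $\{Y\in A\}$ is the product of the conditional law of $Y$ and the (unconditional) law of $Z$, so the cross terms vanish and $\cov[Z\mid Y\in A]=\cov[Z]$.

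Applying the same decomposition unconditionally gives $\cov[X]=\beta\beta^\top\Var[Y]+\cov[Z]$, so $\cov[Z]=\cov[X]-\beta\beta^\top\Var[Y]$. Substituting this into the previous display yields
\[
\cov[X\mid Y\in A] = \cov[X] + \bigl(\Var[Y\mid Y\in A]-\Var[Y]\bigr)\beta\beta^\top,
\]
which is exactly~\eqref{eq:covX.normal}.

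The only nontrivial point is the justification that conditioning on $\{Y\in A\}$ does not alter the law of $Z$ nor introduce cross-covariance with $Y$; this rests entirely on the independence of $Z$ and $Y$ inherited from joint normality, so I expect no real obstacle beyond carefully invoking the Gaussian ``uncorrelated implies independent'' property. Everything else is routine bilinear algebra of conditional covariances, and the closed-form identity $\beta=(\alpha^\top\Sigma\alpha)^{-1}\Sigma\alpha$ is obtained by a one-line computation of $\cov[Y,X]=\Sigma\alpha$ and $\Var[Y]=\alpha^\top\Sigma\alpha$.
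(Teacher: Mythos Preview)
Your argument is correct. The paper does not actually prove this proposition in-text; it cites Theorem~4.1 and Example~4.2 of \cite{JawPit2017}. However, the orthogonal decomposition you use --- writing $X$ as the sum of its projection onto $Y$ and an independent Gaussian residual --- is exactly the device the paper deploys in its appendix for the bivariate case (see the construction of $U_i$, $V_i$, $W_i$ and the representation~\eqref{eq:X.repr}). So your approach is essentially the same as the paper's underlying method, just carried out directly in $k$ dimensions and stated as a clean proof of the proposition rather than being deferred to an external reference.
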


From Proposition~\ref{pr:cov.normal} we infer that, under the normality assumption, it is relatively easy to derive $\cov[X\mid Y\in A(a,b)]$ for any quantile split pair $(a,b)$. Note that the dependence on $A(a,b)$ is in fact expressed only through the conditional variance of the benchmark, i.e. the value $\Var[Y\mid Y\in A(a,b)]$. Consequently, if $\Var[Y\mid Y\in A(a,b)]=\Var[Y\mid Y\in A(a',b')]$ for two quantile split pairs $(a,b)$ and $(a',b')$, then we get $\cov[X\mid A(a,b)]=\cov[X\mid A(a',b')]$. This simple observation could be used to recover {\it the $20/60/20$ rule} for the multivariate normal distribution. The rule states that if we split the probability space based on the values of benchmark $Y$ and follow the ratio close to 20/60/20, then the corresponding conditional covariance matrices of $X$ are equal to each other, see Theorem~\ref{Th:206020}.

\begin{theorem}[The 20/60/20 Rule]\label{Th:206020}
Let $X\sim N_k(\mu,\Sigma)$ and $\alpha\in\bR^k\setminus\{0\}$.  Then, we get
\begin{equation}\label{eq:206020}
\cov[X\mid Y\in A_1]=\cov[X\mid Y\in A_2]=\cov[X\mid Y \in A_3],
\end{equation}
where $A_1:=A(0,\tilde{q})$, $A_2:=A(\tilde{q},1-\tilde{q})$, and $A_3:=A(1-\tilde{q},1)$, for $\tilde{q}:=\Phi(\tilde{x})\approx 0.19808$, and $\tilde{x}$ being a unique positive solution to the equation $-x \Phi(x)-\phi(x)(1-2 \Phi(x))=0$.
\end{theorem}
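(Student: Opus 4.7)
The plan is to use Proposition~\ref{pr:cov.normal} to reduce the three matrix equalities to scalar identities about truncated variances of $Y$. By that proposition, for each $i\in\{1,2,3\}$ one has $\cov[X\mid Y\in A_i] = \cov[X] + (\Var[Y\mid Y\in A_i]-\Var[Y])\beta\beta^\top$, and the vector $\beta = \cov[Y,X]/\Var[Y]$ is nonzero because $\cov[X]$ is positive definite and $\alpha\neq 0$. Hence~\eqref{eq:206020} is equivalent to the scalar identity $\Var[Y\mid Y\in A_1] = \Var[Y\mid Y\in A_2] = \Var[Y\mid Y\in A_3]$. Since $Y=\langle\alpha,X\rangle$ is univariate normal with strictly positive variance and the sets $A_i$ depend on $Y$ only through its quantiles (hence are invariant under positive affine transformations of $Y$), I may assume without loss of generality that $Y = Z \sim N(0,1)$.

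With this reduction, setting $c := -\tilde{x} = -\Phi^{-1}(\tilde{q}) > 0$, the three conditioning sets become $(-\infty,-c)$, $(-c,c)$, and $(c,\infty)$. By symmetry of the standard normal density about the origin, $\Var[Z\mid Z<-c] = \Var[Z\mid Z>c]$, so it suffices to establish the single identity $\Var[Z\mid Z>c] = \Var[Z\mid -c<Z<c]$. I would then compute both sides explicitly using integration by parts for $\int z^2\phi(z)\,dz$ together with the standard truncated-mean formula, which yields
\begin{align*}
\Var[Z\mid Z>c] &= 1 + \frac{c\phi(c)}{1-\Phi(c)} - \left(\frac{\phi(c)}{1-\Phi(c)}\right)^{\!2},\\
\Var[Z\mid -c<Z<c] &= 1 - \frac{2c\phi(c)}{2\Phi(c)-1}.
\end{align*}

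Equating these two expressions, clearing denominators by multiplying through by $(1-\Phi(c))^2(2\Phi(c)-1)/\phi(c)$, and using the cancellation $(2\Phi(c)-1) + 2(1-\Phi(c)) = 1$, I expect the identity to collapse to $c(1-\Phi(c)) = \phi(c)(2\Phi(c)-1)$. The substitution $\tilde{x} = -c$, under which $\Phi(c)\mapsto 1-\Phi(\tilde{x})$ while $\phi$ is unchanged, then transforms this into exactly the defining relation $-\tilde{x}\Phi(\tilde{x}) - \phi(\tilde{x})(1-2\Phi(\tilde{x})) = 0$. The main obstacle here is bookkeeping rather than depth: the collapse in the previous step mixes three different denominators and one must track signs carefully. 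Uniqueness of the nontrivial root then follows from a monotonicity check on the function $c\mapsto \Var[Z\mid Z>c] - \Var[Z\mid -c<Z<c]$ on $(0,\infty)$, examining its behaviour at the endpoints to identify a single crossing and ruling out the spurious root $c=0$ (which corresponds to the degenerate split $\tilde{q}=1/2$).
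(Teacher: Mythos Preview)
Your proof is correct and follows exactly the strategy the paper sketches immediately before stating the theorem: invoke Proposition~\ref{pr:cov.normal} to reduce the matrix equalities to equality of the conditional variances of $Y$, standardise to $Y\sim N(0,1)$, and solve the resulting scalar identity. The paper does not give a detailed argument here, deferring to \cite{JawPit2015}, so your explicit computation of the two truncated variances and the algebraic collapse to $c(1-\Phi(c))=\phi(c)(2\Phi(c)-1)$ fills in precisely what is omitted. One small remark: your convention $c:=-\tilde{x}=-\Phi^{-1}(\tilde{q})>0$ is internally consistent and forces $\tilde{x}<0$, which is indeed what $\tilde{q}\approx 0.198$ requires; the paper's phrase ``unique positive solution'' is a slip, and your signs are the correct ones.
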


The split ratio $\tilde q / (1-2\tilde q)/ \tilde q$ presented in Theorem~\ref{Th:206020} is roughtly equal to 20/60/20 and is invariant, i.e. it is true for all values of the underlying parameters $\mu$, $\Sigma$, and $\alpha$, see \cite{JawPit2015} for more details and proofs.

To illustrate this phenomenon, in Figure~\ref{F:1} we present 20/60/20 population splits in the bivariate normal setting under different benchmark and correlation specifications. For any choice of $\alpha$ and the (unconditional) correlation $\rho$, the sample conditional covariance matrices for the subsamples (indicated by different colours) are close to each other.

\begin{figure}[htp!]
\centering
\includegraphics[width=0.9\textwidth]{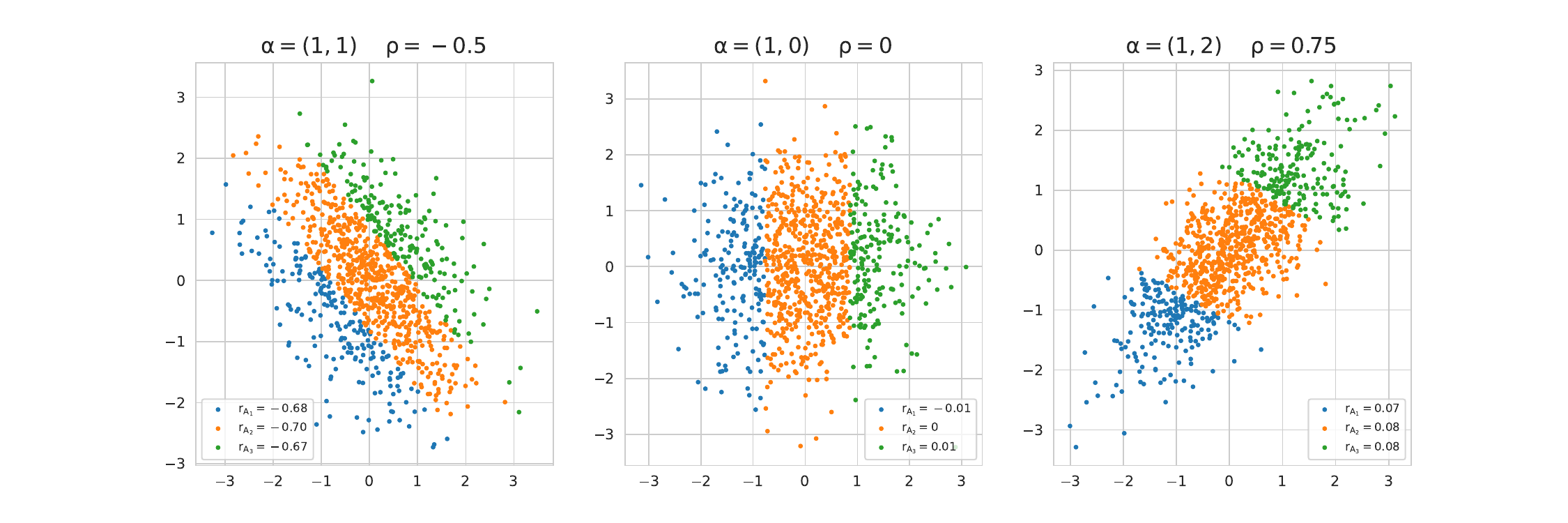}
\caption{The plot presents conditioned populations of $X\sim N_k(\mu,\Sigma)$ for $k=2$, $\mu=(0,0)^T$, and $\Sigma$ having unit diagonals and non-diagonal entry $\rho\in [1,1]$. Conditioning is induced by sets $A_1$, $A_2$, and $A_3$, for three exemplary values of $\alpha\in [0,1]^2\setminus\{0\}$ and $\rho\in [-1,1]$. The empirical bivariate covariance matrices for 20/60/20 ratio conditioned samples are close to each other and asymptotically equal.}\label{F:1}
\end{figure}
\FloatBarrier


\section{Pairwise conditional covariance and 20/60/20 rule induced equilibrium}\label{S:application2}

In the rest of the paper, we focus on a simplified bivariate ($k=2$) setting and compare conditional covariance values for a fixed pair of random variables. Rather than operating on copula functions, we work on generic distributions with pre-assumed normal margins; the margins parameters are assumed to be unknown, see Remark~\ref{rem:margins}. This means that we are effectively performing a bivariate normality goodness-of-fit testing assuming normality of margins.

Our setting corresponds to a more practical environment in which we want to cross-check each individual input of the correlation matrix separately; note that each unconditional covariance also depends only on the corresponding pair of random variables. Note that while this is a simplified testing setup in which we assess bivariate (sub-copula) building blocks, it should allow us to reduce problem dimension while effectively detecting local conditional covariance divergences. To be more precise, from now on we set $k=2$, and consider $X=(X_1,X_2)$ with
\begin{equation}\label{eq:Y.2dim}
Y=\alpha_1X_1+\alpha_2X_2,
\end{equation}
for some fixed $\alpha\in\bR^2\setminus\{0\}$. For any quantile conditioning set $A=A(a,b)$,\,$0\leq a<b\leq 1$, we use a simplified notation 
\[
\mu_A=(\mu_{A,1},\mu_{A,2})\quad \textrm{and}\quad r_A=\cov[X_1,X_2\mid Y\in A],
\]
to denote $A$-set conditional mean and conditional covariance, respectively. Similarly, we use $\mu$ and $r$ to denote their unconditional versions. Furthermore, we denote unconditional marginal variances of $(X_1,X_2)$ by $\sigma_1^2$ and $\sigma_2^2$, respectively, and the correlation between the margins by $\rho$. Recalling \eqref{eq:206020}, we know that if $X$ is multivariate normal, then we get $r_{A_1}=r_{A_2}=r_{A_3}$, 
where $A_1$, $A_2$, and $A_3$ are defined in Theorem~\ref{Th:206020} and correspond (approximately) to a 20/60/20 population split based on the benchmark value. From this we immediately get that the 20/60/20 rule induced {\it equilibrium} equality
\begin{equation}\label{eq:covariance.eq}
r_{A_1}-2 r_{A_2}+r_{A_3}=0,
\end{equation}
holds if $X$ is multivariate normal. Consequently, by studying the sample-based estimate of $r_{A_1}-2 r_{A_2}+r_{A_3}$, one can detect dependence structures which are different than normal, see \cite{JelPit2021} where a similar idea in the univariate setting was considered. In particular, the bigger the size of this value, the bigger the tail dependence measured by the conditional tail covariance.

From \eqref{eq:Y.2dim}, we see that the benchmark random variable $Y$ is defined separately for each pair of margin random variables. To assess the dependence structure of the initial $k$-dimensional vector, say $\tilde X=(\tilde X_1,\ldots,\tilde X_k)$, one might fix benchmark proportions and perform $\frac{k(k-1)}{2}$  tests for each pair of margins of $\tilde X$. As an alternative, one can also consider a generic benchmark, say $\tilde Y:=\tilde a_1\tilde X_1+\ldots\tilde a_k\tilde X_k$, project the initial vector to a bivariate space, and then perform more targeted testing. For instance, we can perform $k$ individual tests which assess dependence structure between the benchmark and each margin. Namely, for $i\in \{1,\ldots,k\}$, we can set $X=(X_1,X_2)$ with $X_1=\tilde X_i$ and $X_2=\tilde Y$, and perform the standard pairwise testing; note that if $\tilde X$ is multivariate normal, then so is $X$. In this setup the usage of marginal loading factor $a=(0,1)$ leads to $\tilde Y=Y$, i.e. we can consider non-altered quantile conditioning set in the pairwise testing.

\begin{remark}[Bivariate normality goodness-of-fit testing]\label{rem:bivariate.testing}
Note that while \eqref{eq:covariance.eq} is designed to focus on dependence structure characterisation, we can use the 20/60/20 rule to also assess the marginal fit. This can be done e.g. by incorporating the methodology from \cite{JelPit2021}, where univariate (margin) normality goodness-of-fit test based on conditional standard deviations and 20/60/20 rule was introduced. Considered together, this would relate to the assessment of the distance between the whole conditional covariance matrices rather than only its diagonal entries, see also Remark~\ref{rem:multidimensional}. That saying, for simplicity, in this paper we decided to focus on covariance structure assessment.
\end{remark}

\begin{remark}[Multivariate goodness-of-fit testing]\label{rem:multidimensional}
While in this paper we focus on pairwise testing, one can consider the $k$-dimensional setting and use the 20/60/20 rule to compare the whole $k\times k$ dimensional covariance matrices directly, using a single benchmark. For instance, this could be done via a formula similar to \eqref{eq:covariance.eq} given by
\begin{equation}\label{eq:test.multi}
\cov[X\mid Y\in A_1]-2\cov[X\mid Y\in A_2]+\cov[X\mid Y\in A_3].
\end{equation}
Namely, we expect that the sample-based estimate of \eqref{eq:test.multi} should be close to zero in any matrix norm. That saying, the derivation of the asymptotic distribution of the corresponding test statistic might be a highly challenging task as one would need to control the statistical error encoded in the whole $k$-dimensional sample vector. Thus, this approach is left for future research.
\end{remark}

\section{Pairwise goodness-of-fit test based of conditional covariance and the 20/60/20 rule}\label{SS:Statistic}

In this section we formally introduce the test statistic associated with~\eqref{eq:covariance.eq}. We start with setting up the statistical framework. Let $n\in\bN\setminus \{0\}$ and assume  that $((X_{i,1},X_{i,2}))_{i=1}^{n}$ is an independent and identically distributed sample from $X$. As in~\eqref{eq:Y.2dim}, we fix $\alpha=(\alpha_1,\alpha_2)^\top \in\mathbb{R}^2\setminus\{0\}$ and define the {\it benchmark sample} $(Y_i)_{i=1}^{n}$ given by
\begin{equation}\label{eq:hatY}
Y_i:=\alpha_1X_{i,1}+\alpha_2X_{i,2},\quad i=1,2,\ldots,n.
\end{equation}
In this section, with a slight abuse of notation, we use "$(i)$" to denote the index corresponding to $i$th order statistic of the sample $(Y_i)_{i=1}^{n}$. In this setting, while $(Y_{(i)})_{i=1}^{n}$ is the standard order statistic vector, the vector $(X_{(i),1})_{i=1}^{n}$ is ranked according to the values of the benchmarked sample $(Y_i)_{i=1}^{n}$ rather than $(X_{i,1})_{i=1}^{n}$; same applies to  $(X_{(i),2})_{i=1}^{n}$. 
Given quantile split values $a,b\in [0,1]$, where $a<b$, we can consider the empirical (ranked) sub-sample corresponding to $A=A(a,b)$. This sub-sample is simply given by $((X_{(i),1},X_{(i),2}))_{i=[na]+1}^{[n b]}$, where $[x]:=\max\{k\in\mathbb{Z}:k\leq x\}$ is the integer part of $x\in\bR$. Let us now provide formulas for the sample estimators of the conditional moments of $X$ on the set $A$. Namely, for $j\in\{1,2\}$ and $A=A(a,b)$, where $0\leq a<b\leq 1$, by setting $m_n:=[nb]-[na]$,  we define the sample conditional mean and sample conditional covariance, respectively, by
\begin{align}
\textstyle\hat{\mu}_{A,j} &\textstyle :=\frac{1}{m_n}\sum_{i=[na]+1}^{[nb]}X_{(i),j}\label{eq:cond.estimators:mu},\\
\textstyle \hat{r}_A & \textstyle :=\frac{1}{m_n}\sum_{i=[na]+1}^{[nb]}\left(X_{(i),1}-\hat{\mu}_{A,1}\right)\left(X_{(i),2}-\hat{\mu}_{A,2}\right).\label{eq:cond.estimators}
\end{align}
Also, we use $\hat{\mu}_{j}$ and $\hat{r}$ to denote the unconditional versions of \eqref{eq:cond.estimators:mu} and \eqref{eq:cond.estimators}, respectively,  which could be recovered by simply setting $a=0$ and $b=1$. We also use $\hat{\sigma}^2_{1}$ and $\hat{\sigma}^2_{2}$
to denote the unconditional sample variance estimators of $X_1$ and $X_2$, respectively.

Now, we define the test statistic based on \eqref{eq:covariance.eq}. Let the sets $A_1$, $A_2$, and $A_3$ be given as in Theorem~\ref{Th:206020}, and relate to the 20/60/20 split. Then, we can define the normalised sample equivalent of \eqref{eq:covariance.eq} by setting
\begin{equation}\label{T.stat}
    T_n:=\frac{\sqrt{n}}{\hat{\tau}}\left(\hat{r}_{A_1}-2\hat{r}_{A_2}+\hat{r}_{A_3}\right),
\end{equation}
where $\hat{\tau}$ is a normalising statistic which aim is to estimate the standard deviation of $\sqrt{n}(\hat{r}_{A_1}-2\hat{r}_{A_2}+\hat{r}_{A_3})$. Namely, we set
\begin{equation}\label{c.tau}
    \hat{\tau}^2:=\left(\frac{\hat{r}_{Y,1}\hat{r}_{Y,2}}{\hat{\sigma}^2_Y}\right)^2K_1 + \left(\frac{\hat{r}_{Y,1}^2\hat{\sigma}^2_2+2\hat{r}_{Y,1}\hat{r}_{Y,2}\hat{r}+\hat{r}_{Y,2}^2\hat{\sigma}^2_1}{\hat{\sigma}^2_Y} \right)K_2 + \left(\hat{\sigma}_1^2\hat{\sigma}_2^2+\frac{2\hat{r}_{Y,1}\hat{r}_{Y,2}\hat{r}}{\hat{\sigma}^2_Y}\right)K_3,
\end{equation}
where $\hat{r}_{Y,1}:=\alpha_1\hat{\sigma}^2_1+\alpha_2\hat{r}$, $\hat{r}_{Y,2}:=\alpha_1\hat{r}+\alpha_2\hat{\sigma}^2_2$, and  $\hat{\sigma}^2_Y:=\alpha_1^2\hat{\sigma}^2_1+\alpha_2^2\hat{\sigma}^2_2+2\alpha_1\alpha_2\hat{r}$ are the (unconditional) sample estimators of $\cov[Y,X_1]$, $\cov[Y,X_2]$,  and $\textrm{Var}[Y]$, respectively, and where $K_1\approx 22.0766$, $K_2\approx -29.8012$ and $K_3\approx 33.4424$ are fixed and determined by the 20/60/20 split, see Equation~\eqref{eq:th:final:variance} in \ref{appendix.A} for exact formulas.

Let us now explain the idea behind the definition of $T_n$. If the sample follows a bivariate normal distribution, the differences between the estimates $\hat{r}_{A_1}$, $\hat{r}_{A_2}$, $\hat{r}_{A_3}$ should be close to zero due to Equation \eqref{eq:covariance.eq}. The test statistic essentially confronts the conditional linear dependence in the central region ($A_2$) with the conditional linear dependence in the tails ($A_1$ and $A_3$); note that in this bivariate setting, the term "tail" is defined with the help of the univariate benchmark sample $(Y_i)$. For most common alternative choices, it is expected that the theoretical property \eqref{eq:covariance.eq} would not hold due to increased or decreased tail dependence, which should have a direct impact on \eqref{T.stat}.  

The test statistic $T_n$ assumes (implicitly) distribution symmetry, by combining left- and right-tail covariances $\hat r_{A_1}$ and $\hat r_{A_3}$ in a single formula. To account for a possible alternative distribution asymmetry, in addition to~\eqref{T.stat}, we introduce two complementary test statistics
\begin{equation}\label{T1.stat}
    L_n:=\frac{\sqrt{n}}{\hat{\eta}}\left(\hat{r}_{A_1}-\hat{r}_{A_2}\right)\quad\quad\textrm{and}\quad\quad R_n:=\frac{\sqrt{n}}{\hat{\eta}}\left(\hat{r}_{A_3}-\hat{r}_{A_2}\right),
\end{equation}
where $\hat{\eta}$ is a modified version of \eqref{c.tau} given by
\begin{equation}\label{c.delta}
  \hat{\eta}^2:=\left(\frac{\hat{r}_{Y,1}\hat{r}_{Y,2}}{\hat{\sigma}^2_Y}\right)^2\tilde K_1+\left(\frac{\hat{r}_{Y,1}^2\hat{\sigma}^2_2+2\hat{r}_{Y,1}\hat{r}_{Y,2}\hat{r}+\hat{r}_{Y,2}^2\hat{\sigma}^2_1}{\hat{\sigma}^2_Y} \right)\tilde K_2+\left(\hat{\sigma}_1^2\hat{\sigma}_2^2+\frac{2\hat{r}_{Y,1}\hat{r}_{Y,2}\hat{r}}{\hat{\sigma}^2_Y}\right)\tilde K_3,
\end{equation}
where $\tilde K_1\approx 8.8484$, $\tilde K_2\approx -11.9491$, and $\tilde K_3\approx 13.4091$ are normalising constants; formula for $\hat{\eta}$ is obtained in the same way as formula for $\hat{\tau}$, see Remark~\ref{rm:constants2} in~\ref{appendix.A} for details. These test statistics focus on the right tail or left tail dependence and could be recommended if one expects some form of asymmetry in the alternative distribution.

Now, we state the main theoretical result of this paper, which identifies the asymptotic distribution of $T_n$ (as well as $L_n$ and $R_n$) under the null hypothesis that the copula $X$ is Gaussian; this asymptotic distribution is a pivotal quantity, i.e. it is invariant with respect to the choice of all underlying parameters $\mu$, $\Sigma$, and $\alpha$. 

\begin{theorem}\label{theorem}
    Let $X\sim N_2(\mu,\Sigma)$, $\alpha\in\mathbb{R}^2\setminus\{0\}$, and $T_n$ be given by \eqref{T.stat}. Then, we get
    \begin{equation}
        T_n\xrightarrow{d}N(0,1), \quad n\to \infty.
    \end{equation}
    Moreover, the same result is true for $L_n$ and $R_n$.
\end{theorem}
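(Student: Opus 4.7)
My plan is to reduce Theorem~\ref{theorem} to an asymptotically linear representation of $\sqrt{n}(\hat r_{A_1}-2\hat r_{A_2}+\hat r_{A_3})$ and then apply the classical central limit theorem together with Slutsky's theorem. The population equilibrium $r_{A_1}-2r_{A_2}+r_{A_3}=0$ guaranteed by Theorem~\ref{Th:206020} already zeroes out the centering, so the task is to identify the $\sqrt{n}$-fluctuation and show it is asymptotically $N(0,\tau^2)$, where $\tau^2$ is the population limit of the plug-in variance estimator~\eqref{c.tau}.

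The first step is a linearization that separates the three sources of randomness in $\hat r_{A_j}$: the sample averaging over the conditioning set, the sample conditional means $\hat\mu_{A_j,\cdot}$, and the empirical quantiles $\hat q_\ell=\hat F_Y^{-1}(\ell)$ determining that set. I would replace $\hat q_\ell$ by the true quantiles $q_\ell=F_Y^{-1}(\ell)$ via the Bahadur representation
\[
\hat q_\ell-q_\ell=-\frac{1}{n\,f_Y(q_\ell)}\sum_{i=1}^{n}(\1_{\{Y_i\le q_\ell\}}-\ell)+o_p(n^{-1/2}),
\]
valid since $Y$ is normal and hence $f_Y(q_\ell)>0$, and plug in the standard $O_p(n^{-1/2})$ expansions for $\hat\mu_{A_j,\cdot}-\mu_{A_j,\cdot}$. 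Under the Gaussian null, I would additionally invoke the regression decomposition $X_i=\mu+\beta(Y_i-\bE Y)+U_i$ with $\beta=\cov[Y,X]/\Var[Y]\in\bR^2$ and $U=(U_1,U_2)^\top$ bivariate normal and independent of $Y$. Substituting into the Taylor expansion of $\hat r_{A_j}$ decouples its influence function into a $Y$-only component (linear in $\1_{\{Y_i\in A_j\}}$, $Y_i\1_{\{Y_i\in A_j\}}$, $Y_i^2\1_{\{Y_i\in A_j\}}$ together with the Bahadur pieces above) and a $U_1U_2$ component that, by $U\perp Y$, reduces to a plain conditional sum over independent summands.

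Combining the three pieces with weights $(1,-2,1)$ yields
\[
\sqrt{n}\,(\hat r_{A_1}-2\hat r_{A_2}+\hat r_{A_3})=\frac{1}{\sqrt n}\sum_{i=1}^{n}\Psi(X_i,Y_i)+o_p(1)
\]
for a centered influence function $\Psi$ of i.i.d.\ arguments. The Lindeberg CLT then gives asymptotic normality, and the variance calculation, carried out using the decomposition above together with the univariate 20/60/20 identities from~\cite{JelPit2021} applied to $Y$, collapses to the compact pivotal expression in~\eqref{c.tau}; the constants $K_1,K_2,K_3$ arise precisely from the moments of the truncated standard normal on the three regions of the 20/60/20 split. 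Consistency of the plug-in sample moments yields $\hat\tau\xrightarrow{p}\tau>0$, so Slutsky's theorem delivers $T_n\xrightarrow{d}N(0,1)$. The cases of $L_n$ and $R_n$ are handled identically after replacing the weight vector by $(1,-1,0)$ and $(0,-1,1)$; since under normality the 20/60/20 rule gives $r_{A_1}=r_{A_2}=r_{A_3}$, the centering still vanishes, and one merely recomputes the influence-function variance, producing the analogous constants $\tilde K_1,\tilde K_2,\tilde K_3$.

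The main obstacle, in my view, is the bookkeeping of the cross-dependence of $\hat r_{A_1},\hat r_{A_2},\hat r_{A_3}$ induced by the shared empirical quantiles $\hat q_{\tilde q}$ and $\hat q_{1-\tilde q}$. The Bahadur contributions at these two quantile levels appear simultaneously in all three influence functions, and it is the particular choice $\tilde q\approx 0.198$ that makes the resulting asymptotic variance collapse to the closed-form expression~\eqref{c.tau}. Tracking these cross-terms and verifying the telescoping identities at the 20/60/20 split, which rely on the defining equation $-\tilde x\,\Phi(\tilde x)-\phi(\tilde x)(1-2\Phi(\tilde x))=0$, constitute the computational core of the proof.
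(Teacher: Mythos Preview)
Your approach is essentially the paper's: the orthogonal decomposition $X_{i,j}=\mu_j+\sigma_j(\cos\gamma_j\,V_i+\sin\gamma_j\,W_i)$ with $W\perp V$ is exactly your regression $X_i=\mu+\beta(Y_i-\bE Y)+U_i$, and Lemma~\ref{lm:representations} delivers precisely the asymptotically linear (influence-function) representation you would obtain via Bahadur, after which the multivariate CLT plus Slutsky finishes the argument just as you outline. One small correction to your narrative: the special value $\tilde q\approx0.198$ is what makes the \emph{centering} $r_{A_1}-2r_{A_2}+r_{A_3}$ vanish (Theorem~\ref{Th:206020}); it plays no role in collapsing the asymptotic variance to the form~\eqref{c.tau}, whose structure holds for any symmetric split with $K_1,K_2,K_3$ simply evaluated at the chosen $\tilde q$.
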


From Theorem~\ref{theorem} we also get the consistency of the non-normalised equilibrium estimator of \eqref{eq:covariance.eq}, i.e. we know that $\hat r_{A_1}-2\hat r_{A_2} +\hat r_{A_3}\to 0$ in probability, as $n\to\infty$.  For brevity, we defer the proof of Theorem~\ref{theorem} to \ref{appendix.A}, where we also show several properties of sample conditional moments which could be of independent interest. 

For illustration purposes, in Figure~\ref{fig:Distributions} we present the estimated density of $T_n$ for various sample sizes and parameter choices. As one can see, the convergence to the standard normal distribution is relatively fast and stable within the considered sets of parameters; note that various choices of $\rho$ indicate various choices of correlations as we assumed standardised margins.

\begin{figure}[htp!]
    \begin{center}

    \includegraphics[width=0.9\linewidth]{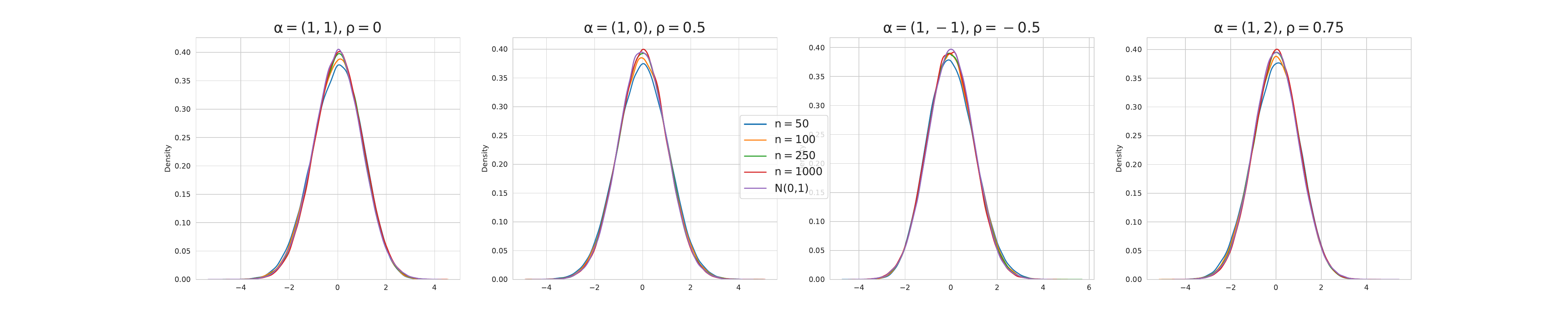}
    \end{center}
    \caption{Kernel density estimates of the distribution of $T_n$ under the normality assumption for different sample sizes and parameter choices. The results are based on 100,000 Monte Carlo samples of size $n$. As expected, all empirical densities are close to the standard normal density and the results are stable with respect to the underlying parameter choice.}
    \label{fig:Distributions}
\end{figure}

We conclude this section with three additional test statistics that could be seen as modifications of~\eqref{T.stat} and~\eqref{T1.stat} based on the decorrelation procedure. Namely, given an initial sample $(X_{i,1},X_{i,2})_{i=1}^n$, we introduce modified sample $(\tilde X_{i,1},\tilde X_{i,2})_{i=1}^n$, where
\[
\textstyle \tilde{X}_{i,1}:=\frac{X_{i,1}+X_{i,2}}{\hat\sigma(X_{1}+X_{2})}\quad\textrm{and}\quad \tilde{X}_{i,2}:=\frac{X_{i,1}-X_{i,2}}{\hat\sigma(X_{1}-X_{2})},\quad i=1, \ldots, n,
\]
for $\hat\sigma(X_{1}+X_{2})=(\hat\sigma_1^2+\hat\sigma_2^2+2\hat r)^{1/2}$ and $\hat\sigma(X_{1}-X_{2})=(\hat\sigma_1^2+\hat\sigma_2^2-2\hat r)^{1/2}$. By construction, the empirical covariance (correlation) of $(\tilde{X}_{i,1},\tilde{X}_{i,2})_{i=1}^n$ should always be close to zero since the margins of this sample have approximately unit variances. Given the modified sample $(\tilde X_{i,1},\tilde X_{i,2})_{i=1}^n$, we introduce a modified test statistic
\begin{equation}\label{T_tilde}
 \tilde T_n:= T_n((\tilde{X}_{i,1},\tilde{X}_{i,2})_{i=1}^n),
\end{equation}
where $T_n((\tilde{X}_{i,1},\tilde{X}_{i,2})_{i=1}^n)$ denotes test statistic $T_n$ applied to sample $(\tilde X_{i,1},\tilde X_{i,2})_{i=1}^n$. The test statistic $\tilde T_n$ might be seen as a stabilised version of the test statistic $T_n$ given in~\eqref{T.stat} in which the correlation effect is reduced. In particular, the definition of $\tilde{T}_n$ effectively induces data-dependent loading factors: If the benchmark $Y$ for the original test $T_n$ corresponds to the loading factors $\alpha=(1,1)$, then the benchmark for $\tilde T_n$ tests corresponds to the loading factors $
\tilde\alpha =( (\hat\sigma_1^2+\hat\sigma_2^2+2\hat r)^{-1/2}+(\hat\sigma_1^2+\hat\sigma_2^2-2\hat r)^{-1/2},(\hat\sigma_1^2+\hat\sigma_2^2+2\hat r)^{-1/2}-(\hat\sigma_1^2+\hat\sigma_2^2-2\hat r)^{-1/2})$ applied to the original sample; this depends on the estimated variances and the underlying empirical covariance.  As will be shown later, by introducing $\tilde T_n$, we get better stability of the test power, especially when the sample correlation is very high and margins strongly depend on each other, see Section~\ref{S:empirical} for details. Following a similar logic, we also introduce modifications of test statistics given in~\eqref{T1.stat}, by setting
\[
\tilde L_n:= L_n((\tilde{X}_{i,1},\tilde{X}_{i,2})_{i=1}^n)\quad\textrm{and}\quad \tilde R_n:= R_n((\tilde{X}_{i,1},\tilde{X}_{i,2})_{i=1}^n).
\]
It should be noted that by the argument similar to the one used in the proof of Theorem~\ref{theorem}, one can show the asymptotic normality of $\tilde{T}_n$, $\tilde{L}_n$, and $\tilde{R}_n$; the detailed proof is omitted for brevity.

\section{Power analysis}\label{S:power}
In this section, we assess the statistical power of the test statistics introduced in Section~\ref{SS:Statistic} and confront them with selected benchmark alternatives.  For simplicity, if not stated otherwise, we consider a bivariate setup with standard normal margins and varying dependence structure modelled by an appropriate copula function. 
Also, for the test statistics introduced in Section~\ref{SS:Statistic} we fix loading factors $\alpha=(1,1)$ which put equal weights to both coordinates and provide a generic check when no additional information is provided.

\subsection{Benchmark test statistics}\label{SS:benchmark}
In this section, we briefly introduce alternative benchmark statistics which are often used for multivariate normality testing. These statistics will be later included in the comparative statistical power assessment, where the power of the new procedures will be confronted with the power of benchmark statistics for various choices of alternative distributions. Namely, we consider four well-established benchmark statistics, i.e. Baringhaus–Henze–Epps–Pulley (BHEP), Anderson-Darling (AD), Cramer-von Mises (CM) and Mardia's skewness (MS) tests. They are proven to provide a comprehensive multivariate normality assessment for various choices of alternatives. For a more detailed discussion on the construction and properties of those statistics, we refer to a recent comprehensive survey paper \cite{EbnHen2020}, see also \cite{henze2002invariant,mecklin2004appraisal}. For completeness, we provide a brief description for each statistic in the generic $k$-dimensional setting. In order to achieve affine invariance, all the aforementioned statistics are based on the \emph{scaled residuals sample} $(Z_{i})_{i=1}^{n}$ given by
    $Z_{i}:=\hat \Sigma^{-\frac{1}{2}}(X_i-\hat \mu)$, $i\in \{1,2,\ldots,n\}$,
where $\hat\mu$ is the sample mean and $\hat \Sigma$ is the sample covariance matrix, which is (a.s.) symmetric and positively defined. 

The first benchmark, i.e. the BHEP test statistics, is based on the distance between the theoretical and empirical characteristic functions. Namely, we set 
\begin{equation}\label{eq:BHEP}
\textrm{BHEP}_{n,\beta}:=\int_{\mathbb{R}^d}|\hat \Psi(t)-\exp\left(-\tfrac{||t||^2}{2}\right)|^2\phi_\beta(t)\,dt,
\end{equation}
where $\hat \Psi(t):=\frac{1}{n}\sum_{i=1}^n \exp(\mathbf{i}t^T Z_{i})$ is the empirical characteristic function of $(Z_i)$, $\phi_\beta(t):=(2\pi\beta^2)^{-d/2}\exp(-\frac{||t||^2}{2\beta^2})$ is a  weight function depending on the smoothing parameter $\beta>0$, $\mathbf{i}$  stands for the imaginary unit, and $t\to \exp\left(-\frac{||t||^2}{2}\right)$ is a characteristic function of a standard multivariate distribution. In this paper, we take $\beta=1$. For details, we refer to \cite{henze1990class}.

The second benchmark, i.e. AD test statistics, is based on distance measurement between the theoretical and empirical distribution function of {\it square radii}; this relates to the distribution of the radius random variable $R$ in the stochastic elliptical representation, see~\cite{GomGomMar2003} for details. In a nutshell, we define sample $(R_i)_{i=1}^{n}$, where $R_{i}:=||Z_{i}||^2=Z_{i}Z_{i}^\top$, $i=1,\ldots,n$, and check if this sample follows the $\chi^2_k$ distribution which should be the case for the Gaussian distribution. Then, the AD test statistic is given by
\begin{equation}\label{eq:AD}
    \textrm{AD}_n:=\int_0^\infty\frac{n\left(\hat F_R(t)-F_{\chi^2_k}(t)\right)^2}{F_{\chi^2_k}(t)\left(1-F_{\chi^2_k}(t)\right)}\,dF_{\chi^2_k}(t),
\end{equation}
where $F_{\chi^2_k}$ is CDF of $\chi^2_k$, and $\hat F_R(t)
:=\frac{1}{n}\sum_{i=1}^n 1_{\{R_{i}\leq t\}}$, $t\in\bR$, is the sample distribution of $R$. Note that AD can be expressed in a simplified form as $\textrm{AD}_n=-n-\sum_{i=1}^{n}\left(\log F_{\chi^2_k}(R_{(i)})+\log(1-F_{\chi^2_k}(R_{(n+1-i)}))\right)$, where $R_{(i)}$ is the standard $i$th order statistic of $R$;
we refer to \cite{paulson1987some} for more details. 

The third benchmark, i.e. CM test statistics, is also based on {\it square radii} measurement but with a different weighting scheme. The CM statistic is given by
\[
\textrm{CM}_n:=\int_0^\infty n\left(\hat F_R(t)-F_{\chi^2_k}(t)\right)^2 \, dF_{\chi^2_k}(t),
\]
using the same notation as in the definition of AD test statistic. As before, this statistic could be also expressed in a simplified form as $\textrm{CM}_n=\frac{1}{12n}+\sum_{i=1}^n\left(F_{\chi^2_k}(R_{(i)})-\tfrac{2i-1}{2n}\right)^2$, see \cite{koziol1982class}. It should be noted that square radii distance-based type statistics were also considered in \cite{MalSor2003} and~\cite{GenRemBea2009}. The difference could be linked to a modified construction of radii and some additional data transformations (e.g. Rosenblatt transform or probability integral transform). Nevertheless, for simplicity, we decided to use only the present definition.

The fourth benchmark, i.e. MS test statistics, is essentially a measure of multivariate skewness which aims to detect whether the generalized skewness of the sample corresponds to the one induced by the multivariate normal distribution shape, i.e. if it is close to zero. The test statistic is given by 
\[
    \textrm{MS}_n:=\frac{1}{n^2}\displaystyle\sum_{i=1}^n \sum_{k=1}^n\left( Z_{i}^\top Z_{k}\right)^3,
\]
we refer to \cite{mardia1970measures} for more details.

\begin{remark}[Heavy-tailed and light-tailed hypothesis testing]\label{rm:fat_tail}
It should be noted that all test statistics discussed in this section could only be used with one-sided rejection regions, where large values of the test statistic lead to rejection of the null hypothesis. However, the proposed new test statistic~\eqref{T.stat} could be used with one-sided as well as two-sided rejection regions. More specifically, if it is expected that the dependence in the tails (measured by the conditional correlation) is bigger (resp. smaller) than the dependence in the central part of the sample, then the right-sided (resp. left-sided) rejection region could be used. This improves the flexibility of the proposed methodology and positively affects the statistical power. Also, note that imbalances in the tail dependence could be associated with heavy or light-tail phenomena with tailored asymmetric test statistics given in~\eqref{T1.stat}.    
\end{remark}


\subsection{Statistical power}\label{SS:power}

In this section we assess the statistical power of $T_n$, $R_n$, $\tilde T_n$ and $\tilde R_n$, and confront it with benchmark tests $\textrm{BHEP}_n$, $\textrm{AD}_n$, $\textrm{CM}_n$, and $\textrm{MS}_n$; note we do not present results for $L_n$ and $\tilde L_n$ as this could be easily transferred to $R_n$ and $\tilde R_n$ by a simple transformation of the data. The null hypothesis always corresponds to the bivariate normal distribution. For the alternative hypotheses, we choose various distributions with normal margins and non-normal copula functions; this choice is summarised in Table~\ref{table.copulas}.

\begin{table}[htp!]
\centering
\begin{tabular}{|l|l|l|}
\hline
Copula name      & Formula & Used parameters                  \\ \hline
Student's t & $C_{\nu,\rho}(u,v)=\frac{\Gamma\left(\frac{\nu+2}{2}\right)}{\Gamma\left(\frac{\nu}{2}\right)\sqrt{(1-\rho^2)(\pi\nu)^2}}\int_{-\infty}^{t^{-1}_\nu(u)}\int_{-\infty}^{t^{-1}_\nu(v)}\left(1+\frac{x_1^2+x_2^2-2\rho x_1x_2}{\nu(1-\rho^2)}\right)^{\frac{\nu+2}{2}}\,dx_1\,dx_2$ & \begin{tabular}[c]{@{}l@{}}$\rho\in\{0,0.3,0.5,0.8\}$\\
 $\nu\in\{3,5,10\}$ \end{tabular}\\ \hline
Frank            & $C_\theta(u,v)=-\frac{1}{\theta}\ln\left(1+\frac{\left(e^{-\theta u}-1\right)\left(e^{-\theta v}-1\right)}{e^{-\theta}-1}\right)$ & $\theta\in\{2,3.7,9\}$ \\ \hline 
Gumbel           & $C_\theta(u,v)=\exp\left(-\left[(-\log u)^\theta +(-\log v)^\theta\right]^{1/\theta}\right)$ & $\theta\in\{1.25,1.5,2.5\}$ \\ \hline

Joe              & $C_\theta(u,v)=1-\left[(1-u)^\theta +(1-v)^\theta-(1-u)^\theta(1-v)^\theta\right]^{1/\theta}$ & $\theta\in\{1.4,1.9,4.4\}$\\ \hline
Galambos         & $C_\theta(u,v)=uv\exp\left[(-\log u)^{-\theta}+(-\log v)^{-\theta}\right]^{-1/\theta}$ & $\theta\in\{0.5,0.8,1.8\}$\\ \hline
H\"{u}sler-Reiss & \begin{tabular}[c]{@{}l@{}}$C_\theta(u,v)=\exp\left(-\hat{u}\,\Phi\left[\frac{1}{\theta}+\frac{\theta}{2}\log\frac{\hat{u}}{\hat{v}}\right]-\hat{v}\,\Phi\left[\frac{1}{\theta}+\frac{\theta}{2}\log\frac{\hat{v}}{\hat{u}}\right]\right)$\\ with $\hat{u}=-\log u,\,\hat{v}=-\log v$ \end{tabular}& $\theta\in\{0.85,1.2,2.4\}$ \\ \hline
\end{tabular}
\caption{Summary of copulas used in the power analysis. In the table, by $t^{-1}_\nu$ we denote the quantile function of a standard univariate Student's $t$ distribution with $\nu$ degrees of freedom and $\Gamma$ is the Gamma function. The limiting cases for Gumbel, Joe and Frank copulas are independence copula for $\theta=1$ and complete dependence for $\theta\to\infty$. For Galambos and H\"{u}sler-Reiss copulas the limiting cases are independence copula for $\theta=0$ and complete dependence for $\theta\to\infty$  
.}\label{table.copulas}
\end{table}

First, we consider the Student't t copula with the number of degrees of freedom $\nu\in\{3,5,10\}$ and the (unconditional) correlation $\rho\in\{0,0.3,0.5,0.8\}$; see \cite{demarta2005t} for a detailed discussion. Note that this copula covers a symmetric fat-tailed setting. Second, we analyse the Frank copula with the parameter $\theta\in \{2,3.7,9\}$; this is an exemplary choice for a symmetric light-tailed framework. Next, we study the Gumbel copula with $\theta\in\{1.25,1.5,2.5\}$, the Joe copula with $\theta\in\{1.4,1.9,4.4\}$, the Galambos copula with $\theta\in\{0.5,0.8,1.8\}$ and the H\"{u}sler-Reiss copula with $\theta\in\{0.85,1.2,2.4\}$; these copulas could be used to model asymmetry in the bivariate distributions. For a more detailed discussion on Frank, Gumbel, and Joe copulas we refer to \cite{nelsen2007introduction}, while Galambos and H\"{u}sler-Reiss copulas are covered in Section 4.3 in \cite{genest2009rank}. It should be noted that all the copulas' parameters were chosen in the way that the unconditional correlation of the bivariate random vector with a given copula and standard normal margins would be approximately equal to $\{0.3,0.5,0.8\}$, respectively. For the {\it t}-copula we additionally include $\rho=0$; note that in the {\it t}-copula framework this encodes some form of dependence of the margins while for the other considered models, by setting $\rho=0$ we get the independence (and, consequently, the multivariate normality). Finally, it should be noted that the choice of alternative distributions in this paper is consistent with the ones in~\cite{GenRemBea2009} and~\cite{MalSor2003} to facilitate easier comparison of the results.

The performance estimates are based on the Monte Carlo samples generated in R 4.1.3 with {\it copula} library for copula random number generation. The implementation of the tests BHEP and MS was taken from {\it mnt} library while the remaining benchmark statistics were based on our own code. Since all the considered copulas satisfy the property $C(u,v)=C(v,u)$, $u,v\in (0,1)$, in the testing procedures we use $\alpha=(1,1)$ as a loading factor in $T_n$, $R_n$, $\tilde T_n$, and $\tilde R_n$. This effectively puts equal weight on both margins.  For illustrative purposes, in Figure~\ref{fig:alternatives} we present exemplary samples from the considered copulas with standard normal margins and we indicate the corresponding 20/60/20 split.

\begin{figure}[htp!]
\centering
\includegraphics[width=0.8\textwidth]{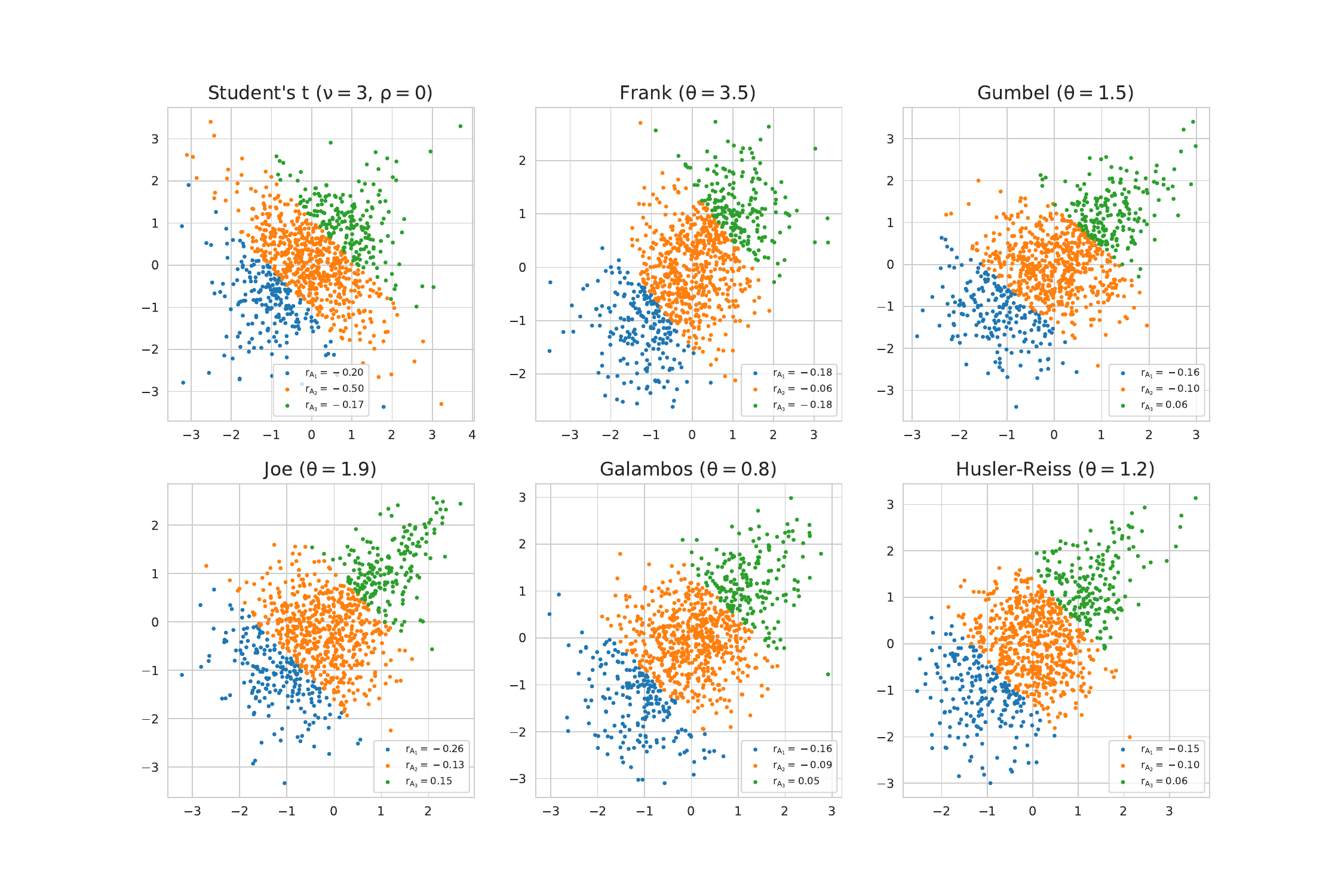}
\caption{The plot presents samples from selected alternative distributions, all with standard normal margins. The benchmark-induced subsamples for equal loading factors and 20/60/20 split, i.e. for sets $A_1$, $A_2$, and $A_3$, are marked in different colours. We denote in legends the conditional covariances for each conditioning set. As expected the quantile conditional covariances on those subsets are not close to each other. See Figure~\ref{F:1} for a similar plot under the null hypothesis.}
\label{fig:alternatives}
\end{figure}


In all cases, we consider four different sample sizes $n\in\{50,100,250,1000\}$. For $n\in\{50,100,250\}$ we used $N=1\,000\,000$ Monte Carlo samples (to evaluate test power) while for $n=1000$, due to time-consuming calculations, we used $N=200\,000$. For each test, first, we simulated the data from the null distribution (normal copula with independent standard normal margins) and estimated rejection regions with the test size $0.05$. Note that for some of the considered test statistics, the null distribution is known, at least in the asymptotic form. Still, we decided to use the simulated null distribution and the corresponding rejection region to obtain better control on the type I error. 

We start with assessing the empirical test size. More specifically, we evaluate the empirical type I error for various choices of the (unconditional) correlation coefficient $\rho$ and check if it is aligned with the chosen test size $0.05$.  In the simulations, we chose $\rho\in \{-0.8,-0.5,-0.3,0.0,0.3,0.5,0.8\}$ and used $\tilde N=200\,000$ (for $n\in\{50,100,250\}$) or $\tilde N=50\,000$ (for $n=1000$) Monte Carlo samples. 
The results are summarised in Table~\ref{T:check0}. For BHEP, AD, CM, and MS directly from the construction, the null distribution for any $n$ does not depend on $\rho$ and the results are very close to $0.05$ (up to the simulation error). For $T_n$ and $R_n$, only the asymptotic ($n\to\infty$) distribution is independent of the null distribution parameters. Still, the empirical type I error is in the reasonable range and converges to $0.05$ as the sample size increases. Also, it should be noted that,  for small sample sizes and tests $T_n$ and $R_n$, the results are asymmetric. More specifically, for $\rho$ close to $-1$, the empirical type one error is slightly below the assumed test size while for $\rho$ close to $1$, the empirical type I error is bigger than the assumed test size. This follows from the fact that, for $\rho\to 1$, the benchmark random variable $Y=\alpha_1X_1+\alpha_2X_2$ converges to $(\alpha_1+\alpha_2)X_1$ while for for $\rho\to -1$, we have $Y\to(\alpha_1-\alpha_2)X_1$, which results in the asymmetry of the test power. Finally, as expected, the empirical type I error for $\tilde T_n$ and $\tilde R_n$ is stable and close to $0.05$. This follows from the fact that these two test statistics are based on the de-correlated samples which makes them effectively invariant with respect to $\rho$.

\begin{table}[h!]
\centering
{\footnotesize
\begin{tabular}{|c c|c c c c |c c |c c|}
    \hline
      Correlation & n & BHEP & AD & CM & MS & $T_n$ & $R_n$ & $\tilde T_n$ & $\tilde R_n$
      \\
    \hline
    \multirow{4}{5em}{$\rho=-0.8$} & 50 & 0.050 & 0.050 & 0.050 & 0.048 & 0.048 & 0.046 & 0.050 & 0.050 \\
       & 100 & 0.049 & 0.050 & 0.050 & 0.050 & 0.048 & 0.048 & 0.050 & 0.050  \\
       & 250 & 0.049 & 0.050 & 0.050 & 0.049 & 0.049 & 0.049 & 0.050 & 0.049\\
       & 1000 & 0.051 & 0.049 & 0.051 & 0.050 & 0.050 & 0.051 & 0.050 & 0.051
       \\
    \hline
    \multirow{4}{5em}{$\rho=-0.5$} & 50 & 0.050 & 0.049 & 0.048 & 0.051 & 0.048 & 0.045 & 0.050 & 0.049 \\
       & 100 & 0.050 & 0.050 & 0.050 & 0.050 & 0.049 & 0.049 & 0.049 & 0.048  \\
       & 250 & 0.049 & 0.050 & 0.050 & 0.050 & 0.050 & 0.050 & 0.050 & 0.049\\
       & 1000 & 0.053 & 0.050 & 0.052 & 0.050 & 0.048 & 0.049 & 0.048 & 0.047
       \\
    \hline
    \multirow{4}{5em}{$\rho=-0.3$} & 50 & 0.050 & 0.050 & 0.050 & 0.050 & 0.049 & 0.047 & 0.050 & 0.049  \\
       & 100 & 0.049 & 0.050 & 0.050 & 0.050 & 0.050 & 0.049 & 0.049 & 0.049  \\
       & 250 & 0.049 & 0.050 & 0.050 & 0.050 & 0.050 & 0.050 & 0.050 & 0.050 \\
       & 1000 & 0.051 & 0.049 & 0.050 & 0.049 & 0.050 & 0.049 & 0.048 & 0.049 
       \\
    \hline
    \multirow{4}{5em}{$\rho=0$} & 50 & 0.050 & 0.050 & 0.050 & 0.050 & 0.050 & 0.050 & 0.050 & 0.050  \\
       & 100 & 0.049 & 0.050 & 0.051 & 0.050 & 0.050 & 0.050 & 0.050 & 0.050  \\
       & 250 & 0.050 & 0.050 & 0.050 & 0.051 & 0.050 & 0.050 & 0.050 & 0.049 \\
       & 1000 & 0.053 & 0.050 & 0.050 & 0.051 & 0.051 & 0.051 & 0.050 & 0.051 
       \\
    \hline
    \multirow{4}{5em}{$\rho=0.3$} & 50 & 0.050 & 0.050 & 0.050 & 0.050 & 0.054 & 0.055 & 0.050 & 0.050  \\
       & 100 & 0.050 & 0.050 & 0.050 & 0.050 & 0.052 & 0.052 & 0.049 & 0.050  \\
       & 250 & 0.049 & 0.050 & 0.050 & 0.049 & 0.051 & 0.051 & 0.050 & 0.049\\
       & 1000 & 0.052 & 0.051 & 0.051 & 0.050 & 0.050 & 0.050 & 0.048 & 0.047 
       \\
    \hline
    \multirow{4}{5em}{$\rho=0.5$} & 50 & 0.050 & 0.050 & 0.050 & 0.050 & 0.059 & 0.060 & 0.050 & 0.050 \\
       & 100 & 0.049 & 0.050 & 0.050 & 0.050 & 0.054 & 0.054 & 0.049 & 0.049  \\
       & 250 & 0.051 & 0.049 & 0.050 & 0.050 & 0.052 & 0.053 & 0.050 & 0.050 \\
       & 1000 & 0.052 & 0.050 & 0.050 & 0.049 & 0.050 & 0.051 & 0.049 & 0.049 
       \\
    \hline
    \multirow{4}{5em}{$\rho=0.8$} & 50 & 0.049 & 0.049 & 0.050 & 0.049 & 0.065 & 0.059 & 0.051 & 0.050 \\
       & 100 & 0.049 & 0.051 & 0.051 & 0.050 & 0.057 & 0.055 & 0.049 & 0.049 \\
       & 250 & 0.050 & 0.050 & 0.050 & 0.050 & 0.054 & 0.052 & 0.050 & 0.050\\
       & 1000 & 0.052 & 0.050 & 0.050 & 0.047 & 0.050 & 0.051 & 0.049 & 0.049 
       \\
    \hline
\end{tabular}}
\caption{Empirical type I error of the underlying test statistics for the samples of size $n$ from the bivariate Gaussian distribution with the correlation coefficient $\rho$ and at the confidence level $5\%$. The rejection threshold was based on $N= 1\,000\,000$ (for $n\in \{50,100,250\}$) or $N = 
200\,000$ (for $n=1000$) Monte Carlo samples from the bivariate normal distribution with independent standard margins.}\label{T:check0}
\end{table}

 Next, we assess the discriminatory power for test statistics $T_n$ and $\tilde{T}_n$ (for symmetric alternatives) and $R_n$ and $\tilde{R}_n$ (for asymmetric alternatives). As before, for every alternative distribution choice, we simulate $\tilde N=200\,000$ (for $n\in\{50,100,250\}$) or $\tilde N=50\,000$ (for $n=1000$) Monte Carlo samples and check the proportion of the samples for which the underlying test rejected the null hypothesis based on the simulated rejection region and the test size $0.05$.  The results for symmetric fat-tail alternatives (Student's t copula) are presented in Table~\ref{T:check1}, for symmetric slim-tail alternatives (Frank copula) in Table~\ref{T:check2}, and for asymmetric alternatives (Gumbel, Joe, Galambos, H\"{u}sler-Reiss copula) in Table~\ref{T:check3}. Note that, in Table~\ref{T:check1}--\ref{T:check2}, we use $T_n$  with both two-sided and one-sided rejection regions; the same applies to $R_n$ in Table~\ref{T:check3}. This is directly related to the interpretation of these statistics. In particular, if the dependence in the tails is expected to be bigger (resp. smaller) than the dependence in the central part, then we expect that the value of $T_n$ is large and it is reasonable to use a right-sided rejection region. However, note that the benchmark test statistics lack a similar interpretation and only a one-sided rejection region can be used; see Remark~\ref{rm:fat_tail} for details. Finally, it should be noted that, for better comparability of the results, when specifying the best performance in Tables~\ref{T:check1}--\ref{T:check3}, we exclude the new procedures with one-sided rejection regions.

\begin{table}[h!]
\centering
{\footnotesize
\begin{tabular}{|c| c c|c c c c c c | c |}
    \hline
      $\nu$ & $\rho$ & n & BHEP & AD & CM & MS & $T_n$ & $\tilde T_n$ & 
           $T_n$ 
           (RS) 
      \\
      \hline
      \multirow{16}{4em}{$\nu=3$} & \multirow{4}{5em}{$\rho=0$}  & 50 & 0.125 & 0.129 & 0.127 & 0.172 & \textbf{0.300} & 0.233 & 0.395  \\
       & & 100 & 0.187 & 0.206 & 0.208 & 0.206 & \textbf{0.492} & 0.449 & 0.594  \\
       & & 250 & 0.413 & 0.429 & 0.438 & 0.239 & 0.827 & \textbf{0.853} & 0.885  \\
       & & 1000 & 0.988 & 0.945 & 0.949 & 0.261 & 0.999 & \textbf{1.000} & 0.999  \\\cline{2-10}
       & \multirow{4}{5em}{$\rho=0.3$} & 50 & 0.137 & 0.133 & 0.131 & 0.188 & \textbf{0.288} & 0.231 & 0.374 \\
       & & 100 & 0.218 & 0.216 & 0.219 & 0.230 & \textbf{0.476} & 0.442 & 0.571   \\
       & & 250 & 0.481 & 0.451 & 0.456 & 0.270 & 0.811 & \textbf{0.845} & 0.869   \\
       & & 1000 & 0.994 & 0.953 & 0.956 & 0.296 & \textbf{0.999} & \textbf{0.999} & 0.999   \\\cline{2-10}
       & \multirow{4}{5em}{$\rho=0.5$} & 50 & 0.165 & 0.144 & 0.142 & 0.216 & \textbf{0.242} & 0.226 & 0.313   \\
       & & 100 & 0.267 & 0.237 & 0.240 & 0.271 & 0.394 & \textbf{0.429} & 0.485  \\
       & & 250 & 0.583 & 0.488 & 0.497 & 0.326 & 0.715 & \textbf{0.828} & 0.792  \\
       & & 1000 & 0.998 & 0.969 & 0.971 & 0.367 & 0.998 & \textbf{1.000} & 0.998   \\\cline{2-10}
       & \multirow{4}{5em}{$\rho=0.8$} & 50 & 0.226 & 0.174 & 0.173 & \textbf{0.274} & 0.124 & 0.214 & 0.158   \\
       & & 100 & 0.384 & 0.295 & 0.297 & 0.361 & 0.159 & \textbf{0.395} & 0.214   \\
       & & 250 & 0.751 & 0.581 & 0.587 & 0.460 & 0.271 & \textbf{0.776} & 0.361   \\
       & & 1000 & 0.999 & 0.988 & 0.988 & 0.554 & 0.711 & \textbf{1.000} & 0.800   \\
    \hline
    \multirow{16}{4em}{$\nu=5$} & \multirow{4}{5em}{$\rho=0$}  & 50 & 0.084 & 0.084 & 0.083 & 0.122 & \textbf{0.152} & 0.120 & 0.217   \\
       & & 100 & 0.102 & 0.114 & 0.113 & 0.143 & \textbf{0.234} & 0.210 & 0.321   \\
       & & 250 & 0.172 & 0.204 & 0.208 & 0.162 & 0.446 & \textbf{0.467} & 0.554   \\
       & & 1000 & 0.630 & 0.611 & 0.624 & 0.178 & 0.926 & \textbf{0.969} & 0.956   \\\cline{2-10}
       & \multirow{4}{5em}{$\rho=0.3$} & 50 & 0.088 & 0.085 & 0.084 & 0.128 & \textbf{0.154} & 0.121 & 0.212  \\
       & & 100 & 0.114 & 0.119 & 0.119 & 0.156 & \textbf{0.234} & 0.206 & 0.314  \\
       & & 250 & 0.199 & 0.211 & 0.217 & 0.181 & 0.435 & \textbf{0.463} & 0.539   \\
       & & 1000 & 0.706 & 0.629 & 0.644 & 0.202 & 0.913 & \textbf{0.966} & 0.948   \\\cline{2-10}
       & \multirow{4}{5em}{$\rho=0.5$} & 50 & 0.099 & 0.090 & 0.089 & \textbf{0.142} & 0.139 & 0.120 & 0.183   \\
       & & 100 & 0.131 & 0.125 & 0.126 & 0.173 & \textbf{0.195} & 0.206 & 0.263   \\
       & & 250 & 0.244 & 0.226 & 0.233 & 0.209 & 0.358 & \textbf{0.453} & 0.457   \\
       & & 1000 & 0.803 & 0.659 & 0.672 & 0.239 & 0.837 & \textbf{0.963} & 0.894   \\\cline{2-10}
       & \multirow{4}{5em}{$\rho=0.8$} & 50 & 0.120 & 0.100 & 0.098 & \textbf{0.169} & 0.091 & 0.118 & 0.109   \\
       & & 100 & 0.176 & 0.144 & 0.145 & \textbf{0.218} & 0.099 & 0.197 & 0.133   \\
       & & 250 & 0.354 & 0.267 & 0.275 & 0.277 & 0.136 & \textbf{0.430} & 0.197   \\
       & & 1000 & 0.931 & 0.739 & 0.749 & 0.348 & 0.331 & \textbf{0.947} & 0.440  \\
    \hline
    \multirow{16}{4em}{$\nu=10$} & \multirow{4}{5em}{$\rho=0$}  & 50 & 0.062 & 0.061 & 0.060 & 0.081 & \textbf{0.083} & 0.069 & 0.114   \\
       & & 100 & 0.067 & 0.069 & 0.068 & 0.092 & \textbf{0.102} & 0.093 & 0.150   \\
       & & 250 & 0.081 & 0.092 & 0.093 & 0.101 & 0.156 & \textbf{0.162} & 0.229  \\
       & & 1000 & 0.167 & 0.205 & 0.216 & 0.104 & 0.405 & \textbf{0.507} & 0.518  \\\cline{2-10}
       & \multirow{4}{5em}{$\rho=0.3$} & 50 & 0.064 & 0.062 & 0.061 & 0.085 & \textbf{0.088} & 0.070 & 0.114   \\
       & & 100 & 0.070 & 0.070 & 0.069 & 0.095 & \textbf{0.105} & 0.091 & 0.146   \\
       & & 250 & 0.086 & 0.093 & 0.095 & 0.105 & 0.158 & \textbf{0.163} & 0.228   \\
       & & 1000 & 0.187 & 0.206 & 0.218 & 0.108 & 0.401 & \textbf{0.499} & 0.514   \\\cline{2-10}
       & \multirow{4}{5em}{$\rho=0.5$} & 50 & 0.065 & 0.063 & 0.061 & 0.087 & \textbf{0.089} & 0.069 & 0.107  \\
       & & 100 & 0.073 & 0.071 & 0.071 & \textbf{0.101} & \textbf{0.101} & 0.091 & 0.135   \\
       & & 250 & 0.093 & 0.094 & 0.096 & 0.113 & 0.141 & \textbf{0.161} & 0.203   \\
       & & 1000 & 0.232 & 0.216 & 0.227 & 0.121 & 0.335 & \textbf{0.496} & 0.443   \\\cline{2-10}
       & \multirow{4}{5em}{$\rho=0.8$} & 50 & 0.070 & 0.064 & 0.063 & \textbf{0.096} & 0.076 & 0.069 & 0.083   \\
       & & 100 & 0.080 & 0.073 & 0.073 & \textbf{0.112} & 0.072 & 0.091 & 0.089  \\
       & & 250 & 0.113 & 0.100 & 0.102 & 0.132 & 0.079 & \textbf{0.159} & 0.112  \\
       & & 1000 & 0.335 & 0.237 & 0.248 & 0.148 & 0.129 & \textbf{0.483} & 0.198  \\
    \hline
\end{tabular}}
\caption{Test power for fat-tail symmetric alternatives (Student's t copula with $\nu$ degrees of freedom and the correlation coefficient $\rho$) at the confidence level $5\%$. The best performance is marked in bold. For completeness, we also added results for $T_n$ (RS), which stands for the test $T_n$ with the right-sided rejection region.
}\label{T:check1}
\end{table}

\begin{table}[h!]
\centering
{\footnotesize
\begin{tabular}{|l c|c c c c c c|c|}
    \hline
      Distribution & n & BHEP & AD & CM & MS & $T_n$ & $\tilde T_n$ & 
           $T_n$ (LS) 
           
      \\
    \hline
    \multirow{4}{*}{    \begin{tabular}{l}
            Frank \\
           $\theta=2$ \\
           $(\rho\approx0.3$)
      \end{tabular}} & 50 & 0.061 & 0.051 & 0.051 & 0.065 & \textbf{0.073} & 0.055 & 0.103  \\
       & 100 & 0.070 & 0.055 & 0.054 & 0.069 & \textbf{0.081} & 0.058 & 0.124  \\
       & 250 & 0.102 & 0.060 & 0.059 & 0.072 & \textbf{0.114} & 0.061 & 0.177  \\
       & 1000 & 0.356 & 0.081 & 0.082 & 0.076 & \textbf{0.280} & 0.076 & 0.389  \\
    \hline
       \multirow{4}{*}{    \begin{tabular}{l}
            Frank \\
           $\theta=3.7$ \\
           $(\rho\approx0.5$)
      \end{tabular}} & 50 & 0.088 & 0.067 & 0.066 & 0.098 & \textbf{0.127} & 0.075 & 0.198  \\
       & 100 & 0.123 & 0.083 & 0.080 & 0.111 & \textbf{0.195} & 0.086 & 0.290  \\
       & 250 & 0.256 & 0.124 & 0.123 & 0.123 & \textbf{0.394} & 0.115 & 0.518  \\
       & 1000 & 0.895 & 0.346 & 0.349 & 0.130 & \textbf{0.912} & 0.243 & 0.952  \\
    \hline
        \multirow{4}{*}{    \begin{tabular}{l}
            Frank \\
           $\theta=9$ \\
           $(\rho\approx0.8$)
      \end{tabular}} & 50 & \textbf{0.247} & 0.231 & 0.222 & 0.242 & \textbf{0.247} & 0.194 & 0.369  \\
       & 100 & \textbf{0.461} & 0.423 & 0.415 & 0.291 & 0.449 & 0.251 & 0.589  \\
       & 250 & \textbf{0.889} & 0.801 & 0.799 & 0.340 & 0.846 & 0.381 & 0.915  \\
       & 1000 & \textbf{1.000} & \textbf{1.000} & \textbf{1.000} & 0.375 & \textbf{1.000} & 0.784 & 1.000  \\
    \hline
\end{tabular}}
\caption{Test power for fat-tail symmetric alternatives (Frank copula) for the confidence level $5\%$. In the table, $\theta$ is the parameter value of the copula and $\rho$ is the corresponding (unconditional) correlation of a sample under the alternative hypothesis. The best performance is marked in bold. For completeness, we also added results for $T_n$ (LS), which stands for the test $T_n$ with the left-sided rejection region.
}\label{T:check2}
\end{table}

\begin{table}[h!]
\centering
{\footnotesize
\begin{tabular}{|l c|c c c c c c|c|}
    \hline
      Distribution & n & BHEP & AD & CM & MS & $R_n$ & $\tilde R_n$ & 
           $R_n$ 
           (RS)   \\
    \hline
        \multirow{4}{*}{    \begin{tabular}{l}
            Gumbel \\
           $\theta=1.25$ \\
           $(\rho\approx0.3$)
      \end{tabular}}  & 50 & 0.065 & 0.056 & 0.055 & 0.099 & \textbf{0.149} & 0.087 & 0.201   \\
       & 100 & 0.077 & 0.063 & 0.062 & 0.147 & \textbf{0.209}& 0.133 & 0.284   \\
       & 250 & 0.121 & 0.074 & 0.075 & 0.306 & \textbf{0.395} & 0.264 & 0.501\\
       & 1000 & 0.477 & 0.136 & 0.142 & 0.857 & \textbf{0.882} & 0.766 & 0.929 \\
    \hline
             \multirow{4}{*}{    \begin{tabular}{l}
            Gumbel \\
           $\theta=1.5$ \\
           $(\rho\approx0.5$)
      \end{tabular}} & 50 & 0.079 & 0.062 & 0.061 & 0.137 & \textbf{0.182} & 0.117 & 0.239   \\
       & 100 & 0.108 & 0.072 & 0.070 & 0.235 & \textbf{0.270} & 0.196 & 0.352   \\
       & 250 & 0.236 & 0.096 & 0.097 & \textbf{0.534} & 0.519 & 0.417 & 0.624  \\
       & 1000 & 0.908 & 0.230 & 0.241 & \textbf{0.992} & 0.972 & 0.940 & 0.986 \\
    \hline
             \multirow{4}{*}{    \begin{tabular}{l}
            Gumbel \\
           $\theta=2.5$ \\
           $(\rho\approx0.8$)
      \end{tabular}} & 50 & 0.162 & 0.089 & 0.088 & \textbf{0.279} & 0.124 & 0171 & 0.162  \\
       & 100 & 0.307 & 0.123 & 0.125 & \textbf{0.516} & 0.150 & 0.293 & 0.202 \\
       & 250 & 0.741 & 0.227 & 0.234 & \textbf{0.910} & 0.245 & 0.593 & 0.327\\
       & 1000 & \textbf{1.000} & 0.679 & 0.692 & \textbf{1.000} & 0.654 & 0.990 & 0.752\\
    \hline
           \multirow{4}{*}{    \begin{tabular}{l}
            Joe \\
           $\theta=1.4$ \\
           $(\rho\approx0.3$)
      \end{tabular}} & 50 & 0.086 & 0.060 & 0.059 & 0.166 & \textbf{0.238} & 0.126 & 0.314 \\
       & 100 & 0.130 & 0.064 & 0.064 & 0.320 & \textbf{0.382} & 0.233 & 0.484  \\
       & 250 & 0.321 & 0.081 & 0.082 & \textbf{0.725} & 0.716 & 0.504 & 0.802 \\
       & 1000 & 0.976 & 0.162 & 0.170 & \textbf{0.999} & \textbf{0.999} & 0.977 & 0.999 \\
    \hline
    \multirow{4}{*}{    \begin{tabular}{l}
            Joe \\
           $\theta=1.9$ \\
           $(\rho\approx0.5$)
      \end{tabular}} & 50 & 0.151 & 0.067 & 0.066 & \textbf{0.336} & 0.291 & 0.220 & 0.373 \\
       & 100 & 0.323 & 0.082 & 0.081 & \textbf{0.673} & 0.478 & 0.389 & 0.582\\
       & 250 & 0.836 & 0.117 & 0.121 & \textbf{0.987} & 0.850 & 0.749 & 0.909 \\
       & 1000 & \textbf{1.000} & 0.315 & 0.331 & \textbf{1.000} & \textbf{1.000} & 0.999 & 1.000  \\
    \hline
    \multirow{4}{*}{    \begin{tabular}{l}
            Joe \\
           $\theta=4.4$ \\
           $(\rho\approx0.8$)
      \end{tabular}} & 50 & 0.698 & 0.250 & 0.253 & \textbf{0.879} & 0.120 & 0.411 & 0.157   \\
       & 100 & 0.975 & 0.443 & 0.453 & \textbf{0.997} & 0.140 & 0.636 & 0.189 \\
       & 250 & \textbf{1.000} & 0.820 & 0.831 & \textbf{1.000} & 0.211 & 0.926 & 0.287 \\
       & 1000 & \textbf{1.000} & \textbf{1.000} & \textbf{1.000} & \textbf{1.000} & 0.554 & \textbf{1.000} & 0.663  \\
    \hline
   \multirow{4}{*}{    \begin{tabular}{l}
            Galambos \\
           $\theta=0.5$ \\
           $(\rho\approx0.3$)
      \end{tabular}} & 50 & 0.063 & 0.054 & 0.053 & 0.091 & \textbf{0.144} & 0.082 & 0.198  \\
    & 100 & 0.072 & 0.055 & 0.054 & 0.141 & \textbf{0.206} & 0.118 & 0.282 \\
       & 250 & 0.103 & 0.059 & 0.059 & 0.306 & \textbf{0.392} & 0.225 & 0.502\\
       & 1000 & 0.390 & 0.076 & 0.077 & 0.885 & \textbf{0.901} & 0.682 & 0.942  \\
    \hline
     \multirow{4}{*}{
     \begin{tabular}{l}
            Galambos \\
           $\theta=0.8$ \\
           $(\rho\approx0.5$)
      \end{tabular}} & 50 & 0.072 & 0.056 & 0.055 & 0.130 & \textbf{0.178} & 0.108 & 0.236  \\
    & 100 & 0.096 & 0.058 & 0.058 & 0.235 & \textbf{0.266} & 0.175 & 0.349 \\
       & 250 & 0.198 & 0.066 & 0.067 & \textbf{0.567} & 0.518 & 0.373 & 0.624 \\
       & 1000 & 0.818 & 0.080 & 0.083 & \textbf{0.996} & 0.977 & 0.907 & 0.989 \\
    \hline
     \multirow{4}{*}{
     \begin{tabular}{l}
            Galambos \\
           $\theta=1.8$ \\
           $(\rho\approx0.8$)
      \end{tabular}} & 50 & 0.145 & 0.075 & 0.075 & \textbf{0.267} & 0.120 & 0.159 & 0.157  \\
    & 100 & 0.270 & 0.098 & 0.099 & \textbf{0.510} & 0.147 & 0.266 & 0.199 \\
       & 250 & 0.681 & 0.158 & 0.164 & \textbf{0.918} & 0.236 & 0.547 & 0.317 \\
       & 1000 & 0.999 & 0.473 & 0.488 & \textbf{1.000} & 0.637 & 0.983 & 0.740  \\
    \hline
         \multirow{4}{*}{
     \begin{tabular}{l}
            H\"{u}sler-Reiss \\
           $\theta=0.85$ \\
           $(\rho\approx0.3$)
      \end{tabular}}
    & 50 & 0.059 & 0.052 & 0.052 & 0.086 & \textbf{0.140} & 0.079 & 0.191 \\
       & 100 & 0.069 & 0.053 & 0.053 & 0.136 & \textbf{0.200} & 0.109 & 0.277 \\
       & 250 & 0.102 & 0.053 & 0.053 & 0.300 & \textbf{0.386} & 0.205 & 0.496  \\
       & 1000 & 0.371 & 0.056 & 0.058 & 0.888 & \textbf{0.899} & 0.626 & 0.941   \\
    \hline
             \multirow{4}{*}{
     \begin{tabular}{l}
            H\"{u}sler-Reiss \\
           $\theta=1.2$ \\
           $(\rho\approx0.5$)
      \end{tabular}}
       & 50 & 0.064 & 0.051 & 0.051 & 0.116 & \textbf{0.174} & 0.099 & 0.231   \\
       & 100 & 0.085 & 0.053 & 0.053 & 0.217 & \textbf{0.258} & 0.155 & 0.343 \\
       & 250 & 0.166 & 0.053 & 0.054 & \textbf{0.555} & 0.514 & 0.319 & 0.625 \\
       & 1000 & 0.782 & 0.055 & 0.056 & \textbf{0.998} & 0.978 & 0.851 & 0.991 \\
    \hline
             \multirow{4}{*}{
     \begin{tabular}{l}
            H\"{u}sler-Reiss \\
           $\theta=2.4$ \\
           $(\rho\approx0.8$)
      \end{tabular}} & 50 & 0.117 & 0.059 & 0.060 & \textbf{0.236} & 0.119 & 0.131 & 0.155  \\
       & 100 & 0.216 & 0.064 & 0.064 & \textbf{0.495} & 0.143 & 0.215 & 0.194 \\
       & 250 & 0.580 & 0.071 & 0.073 & \textbf{0.932} & 0.225 & 0.447 & 0.306 \\
       & 1000 & \textbf{1.000} & 0.118 & 0.121 & \textbf{1.000} & 0.653 & 0.951 & 0.752 \\
    \hline
\end{tabular}}
\caption{Test power for fat-tail asymmetric alternatives for the confidence level $5\%$. In the table, $\theta$ is the parameter value of the copula and $\rho$ is the corresponding (unconditional) correlation of a sample under the alternative hypothesis. The best performance is marked in bold. For completeness, we also added results for $R_n$ (RS), which stands for the test $R_n$ with the right-sided rejection region.}\label{T:check3}
\end{table}

From Tables~\ref{T:check1}--\ref{T:check3} one can see that the proposed test statistics outperform the existing methodologies in the majority of the considered cases. In particular, in the fat-tail Student's t copula setting (Table~\ref{T:check1}), we can see that $T_n$ or $\tilde{T}_n$ give the best results in most of the considered parameter choices. In this framework, as expected, the power of all tests decreases with respect to $\nu$ (with fixed $\rho$ and $n$) as Student's t distribution with $\nu\to \infty$ converges to the Gaussian case. When we fix $\nu$ and $n$, it can be seen that the power of benchmark statistics is increasing with respect to $\rho$ while the power of $T_n$ decreases in $\rho$. However, for $\nu=3$, $\rho=0.3$, and $n=100$, the power of $T_n$ is more than twice as large as the power of benchmark statistics. This shows that the proposed methodology is able to detect non-normality even in the close-to-uncorrelated case (note that, however, here this does not correspond to the independence of margins), where the other tests fail. For high correlation, instead of $T_n$ it is advisable to use $\tilde T_n$ as its power is close to or even better than the power of the best benchmark (usually, BHEP or MS). Also, it should be noted that, for fixed $\nu$ and $n$, the power of $\tilde T_n$ is stable with respect to $\rho$. This follows directly from the fact that this test statistic uses de-correlated data. Finally, it should be noted that the test power is even better if we consider $T_n$  with right-sided rejection regions.

 For the slim-tailed Frank copula setting (Table~\ref{T:check2}), the best power is visible for BHEP and $T_n$. Again, the performance could be improved by using the test with left-sided rejection region. This follows from the fact that for slim-tailed alternative, we expect the covariance in the left tail to be smaller than the covariance in the central part, which results in low values of $T_n$. 
 
The results for asymmetric alternatives (Table~\ref{T:check3}) confirm the good performance of the proposed methodology, with the strongest competitors being BHEP and MS. In particular, for small correlation $T_n$ and $\tilde T_n$ gives the best power while for high $\rho$, the best performance is exhibited by MS.

 It should be noted that the results presented in Tables~\ref{T:check1}--\ref{T:check3} are consistent with the results from~\cite{MalSor2003, AmeSen2020,GenRemBea2009}, although there are some differences in the setting, parameter choice, and the included tests. In particular, comparing Table 1 in~\cite{MalSor2003} with the results in Table~\ref{T:check1} with $n=1000$ in the present paper, we can see that the proposed tests give similar or better results than the statistics considered therein. Similarly, comparing Table 1 in~\cite{GenRemBea2009} with the results for Frank (2) in Table~\ref{T:check2} for $n=100$, we can see that for a similar dependence parameter value, power of $T_n$ is close to $0.08$, while any test from~\cite{GenRemBea2009} cannot reach that level, even for bigger sample size $n=150$.
 
To better explain the discriminatory power results, in Figure~\ref{F:why.good} we show the empirical distributions of $T_n$ and BHEP under the null (Gaussian) and selected alternative (t-Student with 3 degrees of freedom and $\rho=0.3$) distribution for $n=250$. From the figure, one can see that the null and alternative distribution of BHEP significantly overlap while the distributions of null and alternative distribution of $T_n$ are distinctive which results in higher test power.

\begin{figure}[ht]
\centering
\includegraphics[width=0.5\textwidth]{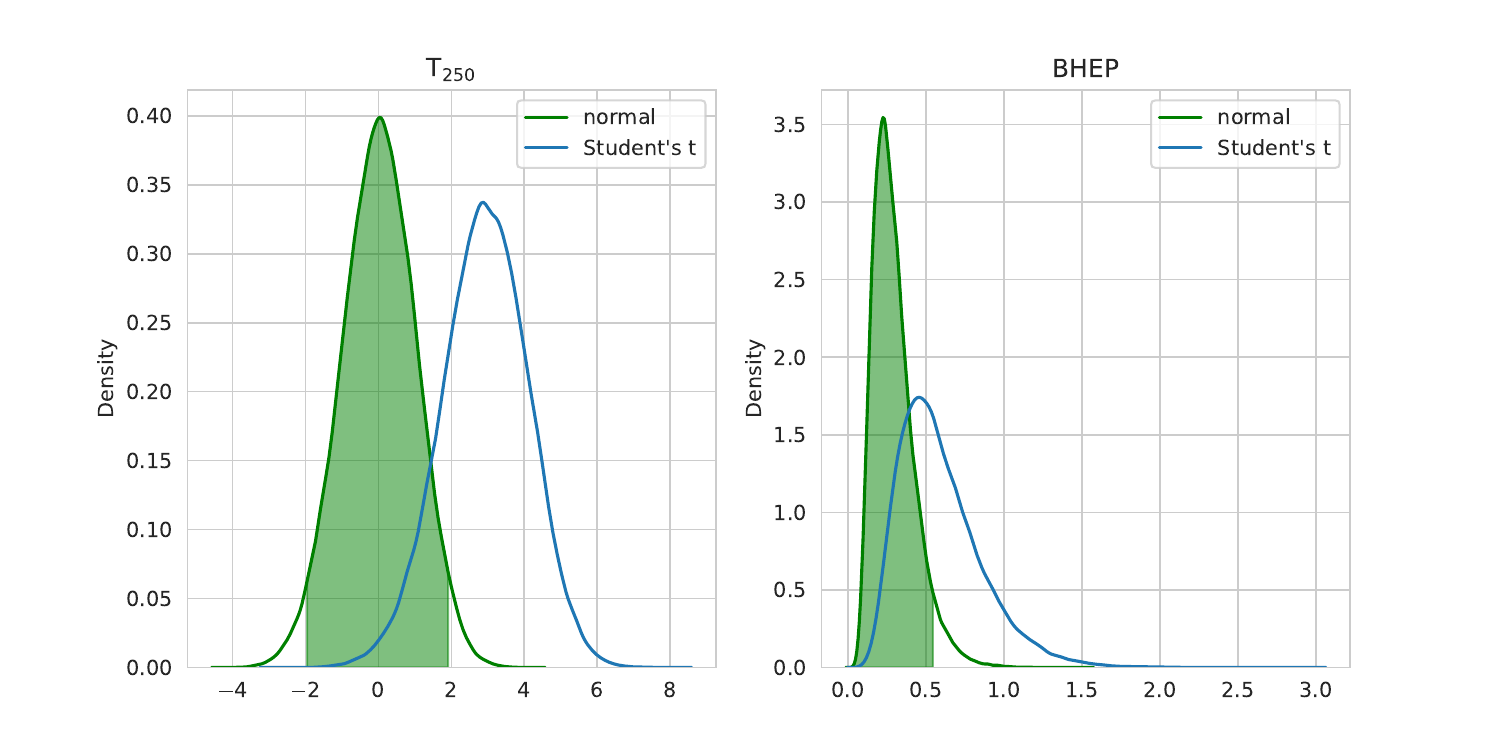}
\caption{Distributions of $T_n$ and $\textrm{BHEP}$ for $n=250$ under the normal copula compared with an exemplary alternative distribution, namely Student's t copula ($\nu=3$, $\rho=0.3$) with normal margins. The results are based on a Monte Carlo sample with N=100,000 elements. The shaded regions correspond to 95\% confidence interval for the null distribution.} 
\label{F:why.good}
\end{figure}


\section{Empirical example: metal commodities}\label{S:empirical}

In this section, we present how to apply our framework in practice, by introducing a simple empirical study. Following ~\cite{MalSor2003}, we decided to consider daily returns for selected metal commodities listed in the London Metal Exchange (LME). We check if our framework can detect the non-normality of the dependence structure in the data and how it compares to other benchmark methods. More specifically, we consider daily spot mid-prices for six metals commodities: aluminium (Al), copper (Cu), lead (Pb), nickel (Ni), tin (Sn) and zinc (Zn). The sample contains 2528 data points from 4.07.2010 to 31.12.2019. This choice of commodities is consistent with~\cite{MalSor2003}, where the data for the same metals from the time interval 1989--1997 were analysed and proven to often exhibit a non-normal behaviour.

For each metal, we compute logarithmic daily returns and, using the empirical CDF, we transform the marginals to the standard normal setting.  In other words, the sample data is transformed to have standard normal margins so that we can focus on the dependence structure check. For completeness, in Figure~\ref{F:empirical} we present the unconditional correlation matrix as well as an exemplary scatterplot for a pair of transformed returns.

\begin{figure}[htp!]
\centering
\begin{subfigure}{0.45\textwidth}  
\centering
\scalebox{0.7}{
\begin{tabular}{|c|c c c c c c|
}
\hline
Metal & Al   & Cu   & Pb   & Ni   & Sn   & Zn   \\
\hline
Al    & 1.00    & 0.58 & 0.51 & 0.49 & 0.39 & 0.56 \\
Cu    & 0.58 & 1.00    & 0.62 & 0.59 & 0.48 & 0.69 \\
Pb    & 0.51 & 0.62 & 1.00    & 0.48 & 0.43 & 0.71 \\
Ni    & 0.49 & 0.59 & 0.48 & 1.00    & 0.43 & 0.44 \\
Sn    & 0.39 & 0.48 & 0.43 & 0.43 & 1.00    & 0.44 \\
Zn    & 0.56 & 0.69 & 0.71 & 0.44 & 0.44 & 1.00 \\
\hline
\end{tabular}}
\end{subfigure}
\begin{subfigure}{0.45\textwidth}  
\centering
\includegraphics[width=0.5\textwidth]{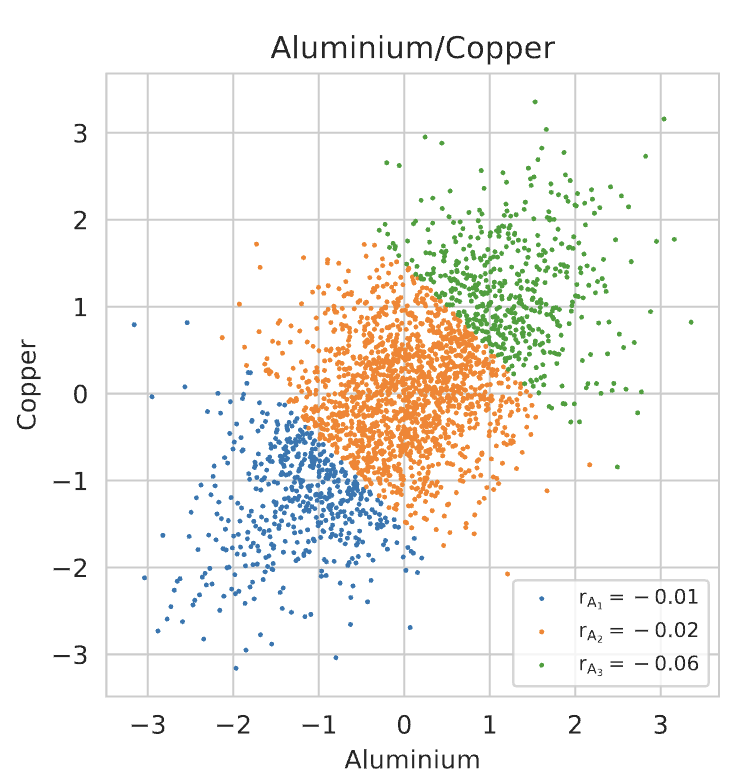}
\end{subfigure}
\caption{The plot presents the unconditional correlation matrix (left) as well as the scatterplot for a selected pair of metal commodities (right). Sample data is transformed to have standard normal margins so that we focus on the dependence structure check. The benchmark-induced subsamples for the loading factor $\alpha=(1,1)$, i.e. for the sets $A_1$, $A_2$, and $A_3$ in the right exhibit are marked in different colours.}
\label{F:empirical}
\end{figure}

Before performing a pairwise statistical test, let us investigate the structure of sample conditional correlations. Namely, we consider a multivariate equal-proportion benchmark applied to the entire data set. We take the loading factor $\alpha=(1,1,1,1,1,1)$, split the whole sample into three subsets according to benchmark values, and then compute the 20/60/20 rule-induced conditional correlation matrices. Potential differences in the conditional correlation matrices indicate sample non-normality, see Remark~\ref{rem:multidimensional}. The results are presented in Table~\ref{T:metal_cor1}.

\begin{table}[h!]
\centering
\scalebox{0.65}{
\begin{tabular}{|c|c c c c c c|}
\hline
Metal & Al    & Cu    & Pb    & Ni    & Sn    & Zn    \\
\hline
Al    & 1.00 & 0.25 & 0.21 & 0.10 & 0.09 & 0.19 \\
Cu    & 0.25 & 1.00 & 0.29 & 0.15 & 0.23 & 0.40 \\
Pb    & 0.21 & 0.29 & 1.00 & 0.11 & 0.13 & 0.50 \\
Ni    & 0.10 & 0.15 & 0.11 & 1.00 & 0.05 & 0.16 \\
Sn    & 0.09 & 0.23 & 0.13 & 0.05 & 1.00 & 0.08 \\
Zn    & 0.19 & 0.40 & 0.50 & 0.16 & 0.08 & 1.00 \\
\hline
\end{tabular}
\,\,\,\,\,\,
\begin{tabular}{|c|c c c c c c|}
\hline
Metal & Al    & Cu    & Pb    & Ni    & Sn    & Zn    \\
\hline
Al    & 1.00 & 0.17 & 0.07  & 0.06  & 0.00  & 0.11  \\
Cu    & 0.17 & 1.00 & 0.18  & 0.18  & 0.02  & 0.27  \\
Pb    & 0.07 & 0.18 & 1.00  & -0.01 & -0.03 & 0.37  \\
Ni    & 0.06 & 0.18 & -0.01 & 1.00  & 0.05  & 0.08  \\
Sn    & 0.00 & 0.02 & -0.03 & 0.05  & 1.00  & -0.03 \\
Zn    & 0.11 & 0.27 & 0.37  & 0.08  & -0.03 & 1.00  \\
\hline
\end{tabular}
\,\,\,\,\,\,
\begin{tabular}{|c|c c c c c c|}
\hline
Metal & Al    & Cu    & Pb    & Ni    & Sn    & Zn    \\
\hline
Al    & 1.00  & 0.12 & 0.02 & 0.09  & -0.05 & 0.13  \\
Cu    & 0.12  & 1.00 & 0.23 & 0.18  & 0.09  & 0.26  \\
Pb    & 0.02  & 0.23 & 1.00 & 0.04  & 0.10  & 0.40  \\
Ni    & 0.09  & 0.18 & 0.04 & 1.00  & 0.00  & -0.02 \\
Sn    & -0.05 & 0.09 & 0.10 & 0.00  & 1.00  & -0.03 \\
Zn    & 0.13  & 0.26 & 0.40 & -0.02 & -0.03 & 1.00  \\
\hline
\end{tabular}
}
\caption{Empirical conditional correlation matrix on $A_1$ (left), $A_2$ (center), and $A_3$ (right) of the  ''Gaussianised'' logarithmic daily returns for loading factor $\alpha=(1,1,1,1,1,1)$. For many pairs, the conditional correlation within the set $A_1$ is larger compared to the set $A_2$ or $A_3$, suggesting an increased dependence in the area of 20\% worst benchmark outcomes.}\label{T:metal_cor1}
\end{table}

From Table~\ref{T:metal_cor1} we infer that the dependence structure increases in the $A_1$ set (bottom 20\% benchmark subsample), at least compared to sets $A_2$ and $A_3$. This suggests that the usage of test statistics $L_n$ (or $\tilde L_n$) should give the best results and further pairwise test is advised. 

Next, we revert to the pairwise setup and perform the statistical testing procedure introduced in this paper. Based on the transformed data, we test if the bivariate copula for each pair of assets agrees with the Gaussian dependence structure. To do this, we apply the proposed tests $T_n,L_n$, $R_n$, $\tilde T_n,\tilde L_n$, and $\tilde R_n$ with loading factor $\alpha=(1,1)$  along with all benchmark tests introduced in Section~\ref{SS:benchmark}, i.e. BHEP, AD, CM, and MS. As in Section~\ref{SS:power}, first we simulate the distribution of the tests statistics under the null hypothesis based on $N=20\,000$ Monte Carlo samples with size $n=2527$. Then we estimate the p-value corresponding to each bivariate sample.

\begin{table}[h!]
\centering
\footnotesize{
\begin{tabular}{|c c| c c c c |c c c |c c c|}
\hline
   \multicolumn{2}{|c|}{Pair}   & BHEP           & AD         & CM    & MS    & $T_n$ & $L_n$      & $R_n$ & $\tilde T_n$   & $\tilde L_n$   & $\tilde R_n$ \\
   \hline
Al & Cu & 0.204          & 0.352      & 0.303 & 0.193 & 0.467 & 0.536      & 0.074 &  {0.073} &  {0.008}          &  {0.881}        \\
Al & Pb &  {0.034}          & 0.052      & 0.087 & 0.254 & 0.832 & 0.165      & 0.294 &  {0.042}          &  {0.026}     & 0.335        \\
Al & Ni & 0.301          & 0.641      & 0.539 & 0.180  & 0.506 & 0.132      & 0.654 & 0.108          &  {0.026} & 0.893        \\
Al & Sn & 0.740         & 0.660       & 0.539 & 0.085 & 0.550  &  {0.032}      & 0.229 &  0.121          &  {0.012} & 0.941        \\
Al & Zn &  {0.007}          &  {0.018}      &  {0.039} & 0.954 & 0.595 & 0.305      & 0.845 &  {0.004}     &  {0.021}     &  {0.026}   \\
Cu & Pb & 0.051          & 0.280       & 0.203 & 0.404 & 0.995 & 0.330       & 0.324 &  {0.001}     &  {0.001}     &  0.066        \\
Cu & Ni &  {0.001} &  {0.015}      &  {0.013} & 0.080  & 0.262 & 0.772      &  {0.039} & {0.000}          &  {0.000}          & {0.012}        \\
Cu & Sn &  {0.028}          &  {0.014}      &  {0.014} &  {0.004} &  {0.006} &  {0.000} & 0.827 &  {0.000}     &  {0.000}     &  0.094        \\
Cu & Zn &  {0.045}          & 0.452      & 0.424 &  {0.024} & 0.967 & 0.305      & 0.333 &  0.084          &  0.058     & 0.405        \\
Pb & Ni &  {0.032} & 0.313      & 0.208 & 0.077 & 0.340  &  {0.027}      & 0.470  & {0.001}          & {0.000}          & {0.004}        \\
Pb & Sn & 0.089         & 0.098      & 0.082 &  {0.017} &  {0.016} &  {0.002}      & 0.468 &  {0.001}     &  {0.000}     &  {0.038}   \\
Pb & Zn &  {0.000}     &  {0.001}      &  {0.002} & 0.474 &  {0.026} &  {0.016}      & 0.290  &  {0.000}     &  {0.004}     &  {0.001}        \\
Ni & Sn & 0.327          & 0.470       & 0.348 & 0.103 & 0.467 &  {0.033}      & 0.328 &  0.278          &  0.053 & 0.821        \\
Ni & Zn &  {0.000}     &  {0.000} &  {0.004} &  {0.001} & 0.407 & 0.217      &  {0.009} & 0.133          &  0.121 & 0.415        \\
Sn & Zn & 0.532         & 0.326      & 0.417 &  {0.002} &  {0.01}  &  {0.001}      & 0.380  &  {0.012}          &  {0.038}     & 0.066  \\ 
\hline
\multicolumn{2}{|c|}{\% rejections} & 53\% & 33\% & 33\% & 33\% & 27\% & 47\% & 13\% & 60\% & 80\% & 33\%\\
\hline
\end{tabular}
}
\caption{P-values of the corresponding test computed for each pair of metals. The last row, for each test, indicates the percentage of pairs (out of 15) when the null hypothesis was rejected at the level $0.05$. In most cases, test statistics $\hat T_n$ and $\hat L_n$ lead to materially smaller $p$-values.}\label{T:empirical}
\end{table}

The results are reported in Table~\ref{T:empirical}. It can be seen that for each pair, the Gaussianity of the dependence structure was rejected by at least one test at the level $0.05$. In fact, even if we account for multiple comparison effects, the Gaussian copula structure is rejected for at least 9 out of 15 pairs (note that the significance level adjusted by the most conservative Bonferroni correction rule should equal $0.005$). The biggest numbers of rejections were reported by $\tilde L_n$ (12), $\tilde T_n$ (9), BHEP (8), and $L_n$ (7). The relatively high rejection rate for $L_n$ could be associated with the observation that the negative returns tend to occur simultaneously on both coordinates. This increases the dependence in the left tail which results in a large value of $L_n$ and rejection of the null hypothesis. The rejection rate for $\tilde T_n$ and $\tilde L_n$ are even bigger which could be linked to the fact that the empirical (unconditional) correlation for the considered data is relatively high; see Figure~\ref{F:empirical}. This corresponds to the observations from Section~\ref{SS:power}, where we noted that often $\tilde T_n$ outperforms the other statistics when the data is highly correlated. Finally, it should be noted that the results presented in Table~\ref{T:empirical} are consistent with the findings from Section 4.2 in~\cite{MalSor2003}. Indeed, using the same set of commodities in the period 1989--1997, the authors noticed that the Gaussian copula is not the right choice for the modelling of the dependence between metals' returns. We also note that in most cases our testing framework gives much smaller $p$-values, when confronted with benchmark alternatives, which is consistent with the power study results presented in Section~\ref{SS:power}.


\appendix

\section{Mathematical framework and proofs}\label{appendix.A}

In this section, we provide a series of technical lemmas and results which constitute the proof of Theorem~\ref{theorem}. Also, we comment on the explicit values of the constants $K_1,K_2,K_3$ used in \eqref{c.tau}. To do this, we introduce a series of auxiliary results that could be of separate interest. In particular, we show the consistency and the asymptotic normality of the conditional covariance estimators in the bivariate normal setting.

First, let us recall some definitions and notations from Section~\ref{SS:Statistic}. Let $X=(X_i)_{i=1}^{\infty}$ be a sequence of two-dimensional random (sample) vectors, where $X_i=(X_{i,1},X_{i,2})$ are i.i.d. with normal distribution and the parameters given by
\begin{align}\label{eq:X_params_init}
\mathbb{E}(X_{i})=(\mu_1,\mu_2) \quad \sigma(X_{i})=(\sigma_1,\sigma_2), \quad \cov(X_{i,1},X_{i,2})=r,\quad \rho:=\frac{r}{\sigma_1\sigma_2} \in (-1,1),
\end{align}
for some $\mu_1, \mu_2, r\in\mathbb{R}$, $\sigma_1, \sigma_2>0$.
For a fixed vector $\alpha=(\alpha_1,\alpha_2)\in\mathbb{R}^2$, we consider a sequence of (sample) benchmarks $Y=(Y_i)_{i=1}^\infty$ given by 
\begin{align}\label{eq:Y}
    Y_i:= \alpha_1X_{i,1}+\alpha_2X_{i,2}
\end{align}
and a sequence $U=(U_i)_{i=1}^\infty$ given by $
    U_i:=\cov(Y_1,X_{1,2})X_{i,1}-\cov(Y_1,X_{1,1})X_{i,2}$.
By direct computation, for any $i\in \mathbb{N}\setminus \{0\}$, we have $\cov(Y_i,U_i)=0$ and consequently $Y_i$ and $U_i$ are independent (since they follow the bivariate normal distribution).
Now, let $V_i$ and $W_i$ denote the standardized versions of $Y_i$ and $U_i$ given by
\begin{align}\label{eq:V_i}
   V_i:=\frac{Y_i-\bE(Y_i)}{\sigma(Y_i)} \quad \text{and} \quad
   W_i:=\frac{U_i-\bE(U_i)}{\sigma(U_i)}.
\end{align}
Thus, for $j=1,2$, there exist $\beta_{1,j},\beta_{2,j}\in\bR$ such that, for $i=1,2,\ldots$, we have
    $\frac{X_{i,j}-\mu_j}{\sigma_j}=\beta_{1,j} V_i + \beta_{2,j} W_i.$
Since the variance of $\frac{X_{i,j}-\mu_j}{\sigma_j}$ is equal to one, we have $\beta_{1,j}^2+\beta_{2,j}^2=1$ for $j=1,2$. 
Thus, there exist $\gamma_1,\,\gamma_2\in[-\pi,\pi)$, such that $\beta_{1,j}=\cos(\gamma_j)$ and $\beta_{2,j}=\sin(\gamma_j)$, $j=1,2$. Hence, we obtain
\begin{align}\label{eq:X.repr}
    X_{i,j}=\mu_j+\sigma_j\left(\cos(\gamma_j)V_i+\sin(\gamma_j)W_i\right).
\end{align}
Furthermore, noting that $\cos(\gamma_j)=\cov\left(V_1,\frac{X_{1,j}-\mu_j}{\sigma_j}\right)$ and using~\eqref{eq:X.repr} we have 
\begin{align}\label{eq:X_params}
\cos(\gamma_j)=\frac{\cov(Y_1,X_{1,j})}{\sigma(Y_1) \sigma_j} \quad \text{and} \quad
\rho = \cos(\gamma_1 - \gamma_2).
\end{align}

Now, we fix $n\in\mathbb{N}\setminus \{0\}$ and examine the ordering of $X$ and $W$ based on the order statistics of $Y$. As in Section~\ref{S:application2}, by $Y_{(i)}$ we denote the $i$-th smallest  element of the sample $\left(Y_1,\ldots,Y_n\right)$. Similarly, $V_{(i)}$ is the $i$-th smallest of $\left(V_1,\ldots,V_n\right)$. By $(X_{(1)},\ldots, X_{(n)})$ we denote the sequence based on $(X_1,..,X_n)$ with the order given by the ordering of $(Y_1, \ldots, Y_n)$. Furthermore, we index $W_{(i:n)}$ so that it corresponds to the ordering of $Y_{(i)}$. More formally, we consider a family of random variables $G_i$ such that $Y_{(i)}(\omega)=Y_{G_i(\omega)}(\omega) $ and we set 
\begin{align}\label{eq.order}
    X_{(i)}(\omega):=X_{G_i(\omega)}(\omega) \quad \text{and}\quad
    W_{(i:n)}(\omega):=W_{G_i(\omega)}(\omega).
\end{align}
Note that, since $V_i$ and $Y_i$ are comonotonic, we have $V_{G_i(\omega)}(\omega)=V_{(i)}(\omega)$. Furthermore, the sequence $(W_i)_{i=1}^\infty$ is independent of the process $(Y_i)_{i=0}^\infty$, so $W_{(i:n)}$ are i.i.d. with distribution $N(0,1)$ and $(W_{(i:n)})_{i=1}^n$ is independent of $(V_{i})_{i=1}^n$. 

Now, recalling \eqref{eq:X.repr}, for any $i=1,\ldots, n$ and $j=1,2$, we obtain 
\begin{align*}
X_{(i),j} &= \mu_j+ \sigma_j ( \cos(\gamma_j) V_{(i)} + \sin(\gamma_j) W_{(i:n)}),\\ 
(X_{(i),j}-\mu_j)^2 &=  \sigma_j^2 ( \cos^2(\gamma_j) V_{(i)}^2 + \sin(2\gamma_j ) V_{(i)}W_{(i:n)} +\sin^2(\gamma_j) W_{(i:n)}^2),\\
(X_{(i),1}-\mu_1)(X_{(i),2}-\mu_2) &=  \sigma_1 \sigma_2 \left( \cos(\gamma_1) \cos(\gamma_2)V_{(i)}^2
+\sin(\gamma_1+\gamma_2) V_{(i)} W_{(i:n)}
 + \sin(\gamma_1)\sin(\gamma_2) W_{(i:n)}^2 \right).
\end{align*}
Plugging these into~\eqref{eq:cond.estimators:mu} and~\eqref{eq:cond.estimators}, for $j=1,2$, we obtain
\begin{align}
\hat{\mu}_{A,j} &= \mu_j+ \frac{\sigma_j}{m_n} \left( \cos(\gamma_j) \sum_{i=[na]+1}^{[nb]} V_{(i)} 
+ \sin(\gamma_j) \sum_{i=[na]+1}^{[nb]} W_{(i:n)} \right),\label{eq:mu_A:hat}\\
\hat{r}_{A} &= \frac{\sigma_1 \sigma_2}{m_n}
  \left(\cos(\gamma_1) \cos(\gamma_2)  \sum_{i=[na]+1}^{[nb]}V_{(i)}^2 
+\sin(\gamma_1+\gamma_2)  \sum_{i=[na]+1}^{[nb]}V_{(i)} W_{(i:n)}+
  \sin(\gamma_1)\sin(\gamma_2)\sum_{i=[na]+1}^{[nb]}W_{(i:n)}^2 \right)\nonumber\\
&\phantom{=}- (\hat{\mu}_{A,1}-\mu_1)(\hat{\mu}_{A,2}-\mu_2 ).\label{eq:r_A:hat}
\end{align}
In the following, to ease the notation, for any $0\leq a<b\leq1$, we define
\begin{align}
A_V&=A_V(a,b):=\left\{v\in\bR\colon \Phi^{-1}(a)<v<\Phi^{-1}(b)\right\},\\
\lambda_{A,1} &:= \mathbb{E}[V_1\mid V_1\in A_V]=\frac{1}{b-a}\int_a^b \Phi^{-1}(p) \, dp = \frac{1}{b-a}(\phi(\Phi^{-1}(a))-\phi(\Phi^{-1}(b))),\label{eq:lambda^1}\\
\lambda_{A,2} &:=\mathbb{E}[V_1^2\mid V_1\in A_V]=\frac{1}{b-a}\int_a^b (\Phi^{-1}(p))^2 \, dp = 1+\frac{1}{b-a}(\Phi^{-1}(a)\phi(\Phi^{-1}(a))-\Phi^{-1}(b)\phi(\Phi^{-1}(b)));\label{eq:lambda^2}
\end{align}
note that the right hand sides of~\eqref{eq:lambda^1} and~\eqref{eq:lambda^2} follow from the straightforward computation, see  Section 13.10.1 in~\cite{JohKotBal1994} for details.
Next, note that due to~\eqref{eq:V_i}, we have
\begin{align*}
    \mathbb{E}[V_1\mid V_1\in A_V] = \mathbb{E}[V_1\mid Y_1\in A_Y], \qquad\mathbb{E}[V^2_1\mid V_1\in A_V] = \mathbb{E}[V^2_1\mid Y_1\in A_Y],
\end{align*}
where we $A_Y=\left\{y\in\bR \colon F^{-1}_{Y_1}(a)<y<F^{-1}_{Y_1}(b)\right\}$.
Using this notation, later in Proposition~\ref{l.mur}, we will show that the sample moments given in~\eqref{eq:mu_A:hat} and~\eqref{eq:r_A:hat} are consistent and asymptotically normal estimators of 
\begin{align}
    \mu_{A,j}:&=\bE\left[X_{1,j}\mid Y_1\in A\right]= \mu_j+\sigma_j\cos(\gamma_j)\lambda_{A,1},\label{eq:mu_A}\\
    r_A:&=\cov\left[X_{1,1},X_{1,2}\mid Y_1\in A\right] =\sigma_1 \sigma_2 \left( \cos(\gamma_1) \cos(\gamma_2) (\lambda_{A,2}- (\lambda_{A,1})^2) + \sin(\gamma_1)\sin(\gamma_2) \right),\label{eq:r_A}
\end{align}
where, to find the expectations, we used~\eqref{eq:X.repr} and the fact that $V_i$ and $W_i$ are independent.
 
 We start the argument with a technical lemma which will be used multiple times in the proofs. In the lemma, for any sequence $(z_i)$ and $k,l\in \mathbb{N}$, we use the notation of a directed sum given by
    \begin{align}\label{eq:directed_sum}
        \mathcal{E}_{i=k}^l\,z_i:=\begin{cases} 
      \sum_{i=k+1}^l\,z_i, & \text{if } k<l, \\
      0, & \text{if } k=l, \\
      -\sum_{i=l+1}^k\,z_i, & \text{if } k>l;
   \end{cases}
    \end{align}
see~\cite{Stigler1973} and Section 4 in~\cite{JelPit2021} for a similar setup and more details.

\begin{lemma}\label{l:binom}
    Let $(I_k)$ be an i.i.d sequence of random variables following the Bernoulli distribution with the probability of success $a\in (0,1)$. For $n\in \mathbb{N}\setminus \{0\}$, define $A_n:=I_1+\ldots+I_n$. Let $(\tilde{W}_i)$ be a sequence of i.i.d. random variables, independent of $(I_k)$ with $\bE[\tilde{W}_i]=0$ and $\bE\left[\tilde{W}_i^2\right]=S^2<\infty$. Then, we have
    \begin{align*}
        \frac{1}{\sqrt{n}}\mathcal{E}_{i=[na]}^{A_n}\tilde{W}_i\xrightarrow{\bP}0,\qquad n\to\infty.
    \end{align*}
\end{lemma}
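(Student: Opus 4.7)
My plan is to exploit two separate pieces of randomness: the fluctuation of $A_n$ around $[na]$, controlled by the CLT for the binomial, and the fluctuation of the partial sum $\mathcal{E}_{i=[na]}^{m}\tilde{W}_i$ for a deterministic $m$, controlled by the $L^2$ bound. The key structural fact I will use is that $(\tilde W_i)$ is independent of $(I_k)$, hence of $A_n$. Heuristically, $|A_n-[na]|$ is of order $\sqrt{n}$, so the number of terms in the directed sum is $O_p(\sqrt{n})$, each term has bounded variance, and the standard deviation of the partial sum is $O_p(n^{1/4})$, which after division by $\sqrt{n}$ yields the desired convergence.

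Formally, I would fix $\varepsilon>0$ and for any $M>0$ write
\begin{equation*}
\bP\!\left(\left|\tfrac{1}{\sqrt{n}}\mathcal{E}_{i=[na]}^{A_n}\tilde{W}_i\right|>\varepsilon\right)
\leq \bP(|A_n-[na]|>M\sqrt{n}) + \bP\!\left(\left|\tfrac{1}{\sqrt{n}}\mathcal{E}_{i=[na]}^{A_n}\tilde{W}_i\right|>\varepsilon,\ |A_n-[na]|\leq M\sqrt{n}\right).
\end{equation*}
The first term can be made arbitrarily small (uniformly in $n$) by choosing $M$ large, using $\Var(A_n)=na(1-a)$ and Chebyshev (or the CLT applied to $A_n$).

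For the second term, I would condition on the event $\{A_n=m\}$ for each deterministic $m$ with $|m-[na]|\leq M\sqrt{n}$. By the independence of $(\tilde W_i)$ and $(I_k)$, under this conditioning the directed sum is still a sum of $|m-[na]|$ i.i.d.\ copies of a centered random variable with variance $S^2$. Chebyshev's inequality then gives
\begin{equation*}
\bP\!\left(\left|\tfrac{1}{\sqrt{n}}\mathcal{E}_{i=[na]}^{m}\tilde{W}_i\right|>\varepsilon\ \Big|\ A_n=m\right)\leq \frac{|m-[na]|\,S^2}{n\varepsilon^2}\leq \frac{MS^2}{\sqrt{n}\,\varepsilon^2}.
\end{equation*}
Averaging over the values of $A_n$ on the event $\{|A_n-[na]|\leq M\sqrt n\}$, the second term is bounded by $MS^2/(\sqrt n\,\varepsilon^2)\to 0$. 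Letting first $n\to\infty$ and then $M\to\infty$ concludes the argument.

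The only subtle point is handling the direction of the sum carefully: the three cases $A_n>[na]$, $A_n=[na]$, $A_n<[na]$ must be collapsed into a single variance bound, but this is immediate from the definition \eqref{eq:directed_sum} since $\Var(\mathcal{E}_{i=k}^{l}\tilde W_i)=|l-k|S^2$ regardless of sign. The genuine ingredient powering the proof is the independence of $(\tilde W_i)$ from $(I_k)$, which allows the conditioning argument; without it, the random summation bounds would couple with the summands and the variance estimate above would fail.
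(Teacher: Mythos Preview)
Your argument is correct. Both your proof and the paper's hinge on the same structural observation: by independence of $(\tilde W_i)$ from $(I_k)$, conditioning on $A_n=m$ reduces the directed sum to a deterministic sum of $|m-[na]|$ centered i.i.d.\ terms with variance $|m-[na]|S^2$, and $|A_n-[na]|$ is of order $\sqrt{n}$.

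The execution differs. You prove convergence in probability via a truncation: split on the event $\{|A_n-[na]|\le M\sqrt{n}\}$, control the complement by Chebyshev on $A_n$, and on the truncated event bound the conditional tail probability uniformly by $MS^2/(\sqrt{n}\,\varepsilon^2)$. The paper instead shows $L^2$-convergence directly: it computes
\[
\bE\!\left[\left(\tfrac{1}{\sqrt{n}}\mathcal{E}_{i=[na]}^{A_n}\tilde W_i\right)^{\!2}\right]=\frac{S^2}{n}\,\bE\bigl[|A_n-[na]|\bigr]
\]
and then argues that $\tfrac{1}{n}\bE[|A_n-[na]|]\to 0$ via a CLT/LLN decomposition. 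Your truncation route is arguably more transparent and avoids the somewhat elaborate limit computation the paper carries out for $\tfrac{1}{n}\bE[|A_n-[na]|]$; conversely, the paper's $L^2$ bound yields a slightly stronger mode of convergence. Either way, the proof goes through.
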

\begin{proof}
    We show that the sequence converges to $0$ in $L^2$, which, by Chebyshev's inequality, implies the convergence in probability. Using the fact that $(W_i)$ are independent of $(I_k)$, we have
    \begin{align}\label{eq:l:binom:2}
        \bE\left[\left(\frac{1}{\sqrt{n}}\mathcal{E}_{i=[na]}^{A_n}\tilde{W}_i\right)^2\right]&=\frac{1}{n}\sum_{k=0}^n\bE\left[\left(\mathcal{E}_{i=[na]}^{k}\tilde{W}_i\right)^21_{\{A_n=k\}}\right]\nonumber=\frac{1}{n}\sum_{k=0}^n\bE\left|\mathcal{E}_{i=[na]}^{k}\tilde{W}_i^2\right|\bP[A_n=k]\nonumber=\frac{S^2}{n}\sum_{k=0}^n\big|k-[na]\big|\,\bP[A_n=k]\nonumber\\
        &=S^2\left(\frac{[na]}{n}\left(\sum_{k=0}^{[na]}\bP[A_n=k]-\sum_{k=[na]+1}^n\bP[A_n=k]\right)+\sum_{k=[na]+1}^n\frac{k}{n}\,\bP[A_n=k]-\sum_{k=0}^{[na]}\frac{k}{n}\,\bP[A_n=k]\right).
    \end{align}
    For $n\in\mathbb{N}\setminus \{0\}$, let $N_n$ be a random variable with normal distribution $N\left(a,\sqrt{\frac{a(1-a)}{n}}\right)$. By the Central Limit Theorem, the distributions of $\frac{A_n}{n}$ and $N_n$ are asymptotically equal, i.e., for any $t\in \mathbb{R}$, we have $\lim_{n\to\infty} \left(\bP[\frac{A_n}{n}\leq t]-\bP[{N_n}\leq t]\right)=0$. Thus, we get
    \begin{align}\label{eq:l:binom:3}
        \lim_{n\to\infty}\left(\frac{[na]}{n}\left(\sum_{k=0}^{[na]}\bP[A_n=k]-\sum_{k=[na]+1}^n\bP[A_n=k]\right)\right)&=\lim_{n\to\infty}\frac{[na]}{n}\left(\bP[A_n\leq na]-\bP[A_n> na]\right)\nonumber\\
        &=a\lim_{n\to\infty}\left(\bP[N_n\leq a]-\bP[N_n> a]\right)= a(1/2-1/2)=0.
    \end{align}
    Next, we have
    \begin{align}\label{eq:l:binom:4}
        \lim_{n\to\infty}\left(\sum_{k=[na]+1}^n\frac{k}{n}\bP[A_n=k]-\sum_{k=0}^{[na]}\frac{k}{n}\bP[A_n=k]\right)=\lim_{n\to\infty}\left(\bE\left[\frac{A_n}{n}1_{\{A_n> na\}}\right]-\bE\left[\frac{A_n}{n}1_{\{A_n\leq  na\}}\right]\right).
    \end{align}
    Also, using the law of large numbers and the fact that $\left|\frac{A_n}{n}\right|\leq 1$, we get $\bE\left[\left|\frac{A_n}{n}-a\right|\right]\to 0$ as $n\to\infty$. Thus, noting that $\bE\left[\left|\frac{A_n}{n}-a\right|1_{\{A_n\leq na\}}\right]\leq\bE\left[\left|\frac{A_n}{n}-a\right|\right]$,
    we have 
$\bE\left[\left|\frac{A_n}{n}-a\right|1_{\{A_n\leq na\}}\right]\xrightarrow{n\to\infty} 0$, which implies
    \begin{align}\label{eq:l:binom:5}
        \lim_{n\to\infty}\bE\left[\frac{A_n}{n}1_{\{A_n\leq na\}}\right]=\lim_{n\to\infty}\bE\left[a1_{\{A_n\leq na\}}\right]=a\lim_{n\to\infty}\bP\left[\frac{A_n}{n}\leq a\right]=\frac{a}{2}.
    \end{align}
    In the same way, we can show that $  \bE\left[\frac{A_n}{n}1_{\{A_n> na\}}\right]\xrightarrow{n\to\infty}\frac{a}{2}$. 
    This combined with~\eqref{eq:l:binom:5} shows that \eqref{eq:l:binom:4} converges to $0$. Hence, recalling  \eqref{eq:l:binom:3}, we get that the right-hand side of \eqref{eq:l:binom:2}  converges to $0$, which concludes the proof.
\end{proof}

Now, we present a lemma related to the alternative representation of certain trimmed sums. To this end, for brevity, for any $i\in \mathbb{N}\setminus \{0\}$, we define the supplementary random variables
\begin{align}
        Z^A_{i,1}&:=(V_i-\lambda_{A,1})1_{\{V_i\in A_V\}}+(1_{\{V_i\leq \Phi^{-1}(a)\}}-a)(\Phi^{-1}(a)-\lambda_{A,1})+(b-1_{\{V_i\leq \Phi^{-1}(b)\}})(\lambda_{A,1}-\Phi^{-1}(b)),\label{eq:Z_1}\\ 
        Z^A_{i,2}&:=(V_i^2-\lambda_{A,2})1_{\{V_i\in A_V\}}+(1_{\{V_i\leq \Phi^{-1}(a)\}}-a)((\Phi^{-1}(a))^2-\lambda_{A,2})+(b-1_{\{V_i\leq \Phi^{-1}(b)\}})(\lambda_{A,2}-(\Phi^{-1}(b))^2)),\label{eq:Z_2}\\
         Z^A_{i,3}&:=V_iW_i 1_{\{V_i\in A_V\}},\label{eq:Z_3}\\
        Z^A_{i,4}&:=W_i1_{\{V_i\in A_V\}},\label{eq:Z_4}\\
        Z^A_{i,5}&:=(W_i^2-1)1_{\{V_i\in A_V\}}.\label{eq:Z_5}
    \end{align}
Now, we state certain properties linked to the trimmed sums; this will be used multiple times in the arguments. 

\begin{lemma}\label{lm:representations}
For any $n\in \mathbb{N}\setminus \{0\}$, we get
    \begin{align}
       \textstyle \frac{\sqrt{n}}{m_n}\sum_{i=[na]+1}^{[nb]}\left(V_{(i)}-\lambda_{A,1}\right) &\textstyle = \frac{\sqrt{n}}{m_n}\sum_{i=1}^{n}Z^A_{i,1}+E^A_{n,1},\label{eq:trim_V}\\
        \textstyle\frac{\sqrt{n}}{m_n}\sum_{i=[na]+1}^{[nb]}\left(V_{(i)}^2-\lambda_{A,2}\right) & \textstyle= \frac{\sqrt{n}}{m_n}\sum_{i=1}^{n}Z^A_{i,2}+E^A_{n,2},\label{eq:trim_V^2}\\
        \textstyle\frac{\sqrt{n}}{m_n}\sum_{i=[na]+1}^{[nb]}V_{(i)}W_{(i:n)} &\textstyle=\frac{\sqrt{n}}{m_n}\sum_{i=1}^{n}Z^A_{i,3}+E^A_{n,3},\label{eq:trim_VW}\\
        \textstyle\frac{\sqrt{n}}{m_n}\sum_{i=[na]+1}^{[nb]}W_{(i:n)}&\textstyle =\frac{\sqrt{n}}{m_n}\sum_{i=1}^nZ^A_{i,4}+E^A_{n,4},\label{eq:trim_W}\\
        \textstyle\frac{\sqrt{n}}{m_n}\sum_{i=[na]+1}^{[nb]}\left(W^2_{(i:n)}-1\right)& \textstyle=\frac{\sqrt{n}}{m_n}\sum_{i=1}^nZ^A_{i,5}+E^A_{n,5},\label{eq:trim_W^2}
        \end{align}
    where, for any  $k=1,\ldots, 5$, the sequences $(E^A_{n,k})$ are such that $E^A_{n,k}\overset{\bP}{\to} 0$ as $n\to\infty$. Moreover, for $k=1, \ldots, 5$, the sequences $(Z^A_{i,k})_{i\in \mathbb{N}}$ are i.i.d. with $\bE\left[Z^A_{i,k}\right]=0$.
\end{lemma}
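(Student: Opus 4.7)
The plan is to follow the classical device of representing trimmed order-statistic sums as i.i.d.\ sums plus a remainder that vanishes in probability, in the spirit of Stigler (1973) and the univariate analysis of \cite{JelPit2021}. Setting $\#_a:=\#\{i\le n:V_i\le\Phi^{-1}(a)\}$ and $\#_b$ analogously, the fundamental identity (valid for any measurable $\Psi$) is
\[
\sum_{i=[na]+1}^{[nb]} \Psi(V_{(i)},W_{(i:n)}) \;=\; \sum_{i:\,V_i\in A_V}\Psi(V_i,W_i) \;-\;\mathcal{E}_{i=\#_a}^{[na]}\Psi(V_{(i)},W_{(i:n)})\;+\;\mathcal{E}_{i=\#_b}^{[nb]}\Psi(V_{(i)},W_{(i:n)}),
\]
which rewrites the deterministic trimmed range $[[na]+1,[nb]]$ via the ``genuine'' inclusion range $[\#_a+1,\#_b]$ plus boundary pieces indexed by the rank gap at each endpoint. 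The proof of each of \eqref{eq:trim_V}--\eqref{eq:trim_W^2} will be an instance of this decomposition, with the main summation furnishing the i.i.d.\ sum of $Z^A_{i,k}$ and the two boundary pieces absorbed into either a further i.i.d.\ contribution (producing the indicator terms of $Z^A_{i,k}$) or into the negligible remainder $E^A_{n,k}$.

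For \eqref{eq:trim_V}--\eqref{eq:trim_V^2}, the main sum $\sum_{i:\,V_i\in A_V}\Psi(V_i)$ is i.i.d.\ and yields the $(V_i-\lambda_{A,1})1_{\{V_i\in A_V\}}$ and $(V_i^2-\lambda_{A,2})1_{\{V_i\in A_V\}}$ components of $Z^A_{i,1}$ and $Z^A_{i,2}$. Each boundary piece is split as $\Psi(V_{(i)})=\Psi(\Phi^{-1}(c))+[\Psi(V_{(i)})-\Psi(\Phi^{-1}(c))]$ for the relevant $c\in\{a,b\}$. The constant-in-$i$ part contributes $(\#_c-[nc])\Psi(\Phi^{-1}(c))$ (with the appropriate sign), and substituting $\#_c-[nc]=\sum_{i=1}^n(1_{\{V_i\le\Phi^{-1}(c)\}}-c)+([nc]-nc)$ converts it into an i.i.d.\ sum (the $O(1)$ deterministic slip being absorbed into $E^A_{n,k}$ after division by $\sqrt n$); combined with the $\lambda_{A,k}$-adjustment, this reproduces exactly the remaining indicator-factor terms of $Z^A_{i,1}$ and $Z^A_{i,2}$. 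The residual $\mathcal{E}_{i=\#_c}^{[nc]}[\Psi(V_{(i)})-\Psi(\Phi^{-1}(c))]$ is shown to be $o_p(\sqrt n)$ by combining the classical fact $|\#_c-[nc]|=O_p(\sqrt n)$ with the uniform closeness $\sup_{i}|V_{(i)}-\Phi^{-1}(c)|=O_p(1/\sqrt n)$ taken over the $O_p(\sqrt n)$ indices in question, exploiting smoothness of $\Phi^{-1}$ at $c\in(0,1)$.

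For \eqref{eq:trim_W}--\eqref{eq:trim_W^2}, the independence of $(W_i)$ from $(V_i)$ simplifies matters: the main sum directly yields the $W_i 1_{\{V_i\in A_V\}}$ and $(W_i^2-1)1_{\{V_i\in A_V\}}$ components of $Z^A_{i,4}$ and $Z^A_{i,5}$, while each boundary piece is of the form $\mathcal{E}_{i=\#_c}^{[nc]}g(W_{(i:n)})$ with i.i.d.\ mean-zero summands and a Bernoulli-sum-type random endpoint $\#_c$. Here Lemma~\ref{l:binom} applies directly, with $I_i:=1_{\{V_i\le\Phi^{-1}(c)\}}$ (success probability $c$) and $\tilde W_i:=W_i$ or $\tilde W_i:=W_i^2-1$: since $(W_i)$ is independent of $(I_i)$, the lemma gives $\frac{1}{\sqrt n}\mathcal{E}_{i=\#_c}^{[nc]}g(W_{(i:n)})\xrightarrow{\bP}0$ after a brief rewriting of the random-permutation structure of the concomitants.

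The principal obstacle is the mixed product \eqref{eq:trim_VW}, which I would treat by combining both approaches: the main summation immediately provides $V_iW_i 1_{\{V_i\in A_V\}}=Z^A_{i,3}$, and each boundary piece is further decomposed as
\[
V_{(i)}W_{(i:n)} \;=\; \Phi^{-1}(c)\,W_{(i:n)}\;+\;(V_{(i)}-\Phi^{-1}(c))\,W_{(i:n)};
\]
Lemma~\ref{l:binom} disposes of the first term after factoring out the constant $\Phi^{-1}(c)$, while the second is controlled by Cauchy--Schwarz together with the $O_p(1/\sqrt n)$ uniform closeness of $V_{(i)}$ to $\Phi^{-1}(c)$ over the boundary range and a second-moment bound on $W$. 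Finally, the ``moreover'' claims are immediate: each $Z^A_{i,k}$ is a measurable function of the i.i.d.\ pair $(V_i,W_i)$, hence $(Z^A_{i,k})_{i\in\bN}$ is i.i.d.; and $\mathbb{E}[Z^A_{i,k}]=0$ follows by direct integration using \eqref{eq:lambda^1}--\eqref{eq:lambda^2} together with $\mathbb{E}[W_i]=0$, $\mathbb{E}[W_i^2]=1$ and the independence of $V_i$ from $W_i$.
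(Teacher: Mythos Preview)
Your proposal is correct and follows essentially the same route as the paper: the same random-count decomposition via $\#_a,\#_b$ (the paper's $A_n,B_n$), the same direct appeal to Lemma~\ref{l:binom} for \eqref{eq:trim_W}--\eqref{eq:trim_W^2}, and the same split $V_{(i)}W_{(i:n)}=\Phi^{-1}(c)W_{(i:n)}+(V_{(i)}-\Phi^{-1}(c))W_{(i:n)}$ for the boundary pieces in \eqref{eq:trim_VW}. The only minor deviation is in bounding the residual $(V_{(i)}-\Phi^{-1}(c))W_{(i:n)}$ term: you invoke Cauchy--Schwarz together with the $O_p(n^{-1/2})$ quantile rate, whereas the paper uses the cruder bound $\max_i|V_{(i)}-\Phi^{-1}(c)|\cdot\tilde{\mathcal E}|W_{(i:n)}|$ with only $o_p(1)$ for the max, compensating via a centering-plus-CLT argument on $\tilde{\mathcal E}|W_{(i:n)}|$; both work and the difference is cosmetic.
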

\begin{proof}
    First, note that the representation stated in~\eqref{eq:trim_V} follows directly from the proof of Lemma 2 in~\cite{JelPit2021}; see the formulae following Equation (18) therein for details. Second, using the same argument applied to the squares, we also get~\eqref{eq:trim_V^2}.

    Let us now prove~\eqref{eq:trim_VW}. To ease the notation, for any $n\in \mathbb{N}\setminus \{0\}$, we define $A_n:=\sum_{i=1}^n 1_{\{V_i\leq \Phi^{-1}(a)\}}$ and $B_n:=\sum_{i=1}^n 1_{\{V_i\leq \Phi^{-1}(b)\}}$;
    note that $A_n$ and $B_n$ simply count the number of elements in the sample $(V_1, \ldots, V_n)$ which do not exceed the quantiles $\Phi^{-1}(a)$ and $\Phi^{-1}(b)$, respectively. Thus, these random variables follow the binomial distribution $B(n,a)$ and $B(n,b)$ respectively. Moreover, they are independent of $W_{(i:n)}$. Next, note that we have
    \begin{align*}
         \frac{\sqrt{n}}{m_n}\sum_{i=[na]+1}^{[nb]}V_{(i)}W_{(i:n)} &=\frac{\sqrt{n}}{m_n}\left(\sum_{i=A_n+1}^{B_n}V_{(i)}W_{(i:n)}+\mathcal{E}_{i=[na]}^{A_n}V_{(i)}W_{(i:n)}+\mathcal{E}_{i=B_n}^{[nb]}V_{(i)}W_{(i:n)}\right).
    \end{align*}
    Using the fact  that $W_{(i:n)}$ is sorted by the ordering induced by $V_i$ and recalling~\eqref{eq:Z_3}, we have 
$\sum_{i=A_n+1}^{B_n}V_{(i)}W_{(i:n)} = \sum_{i=1}^{n}V_iW_i 1_{\{V_i\in A_V\}}= \sum_{i=1}^{n}Z_{i,3}.$ Thus, we only need to show that
    \begin{align}\label{eq:conv:E_n^3}
    E^A_{n,3}:=\frac{\sqrt{n}}{m_n}\left(\mathcal{E}_{i=[na]}^{A_n}V_{(i)}W_{(i:n)}+\mathcal{E}_{i=B_n}^{[nb]}V_{(i)}W_{(i:n)}\right)\xrightarrow{\bP} 0, \quad n\to\infty.
    \end{align}
     We have
     \begin{align*}
         \frac{\sqrt{n}}{m_n}\mathcal{E}_{i=[na]}^{A_n}V_{(i)}W_{(i:n)}=\frac{\sqrt{n}}{m_n}\mathcal{E}_{i=[na]}^{A_n}\left(V_{(i)}-\Phi^{-1}(a)\right)W_{(i:n)}+\Phi^{-1}(a)\frac{\sqrt{n}}{m_n}\mathcal{E}_{i=[na]}^{A_n}W_{(i:n)}.
     \end{align*}
     Recalling that $m_n=[nb]-[na]$ and Lemma~\ref{l:binom}, we get $\Phi^{-1}(a)\frac{\sqrt{n}}{m_n}\mathcal{E}_{i=[na]}^{A_n}W_{(i:n)}=\Phi^{-1}(a)\frac{n}{m_n} \frac{1}{\sqrt{n}}\mathcal{E}_{i=[na]}^{A_n}W_{(i:n)}\xrightarrow{\bP}0$, so it is enough to show that
     \begin{align}\label{eq:38}
         \frac{\sqrt{n}}{m_n}\mathcal{E}_{i=[na]}^{A_n}\left(V_{(i)}-\Phi^{-1}(a)\right)W_{(i:n)}\xrightarrow{\bP}0.
     \end{align}
     Next, note that the sequence $W_{(i:n)}$ is i.i.d $N(0,1)$, so $\left|W_{(i:n)}\right|$ follow half-normal distribution with $\bE\left[\left|W_{(i:n)}\right|\right]=\sqrt{2 / \pi}$ and  bounded second moment. 
     By analogy to~\eqref{eq:directed_sum},  for any sequence $(z_i)$ and $k,l\in \mathbb{N}$, we define $\tilde{\mathcal{E}}_{i=k}^lz_i:=1_{\{k\neq l\}}\sum_{i=\min(k,l)+1}^{\max(k,l)}z_i$.
     Then, we have
     \begin{align*}
         0\leq\left|\frac{\sqrt{n}}{m_n}\mathcal{E}_{i=[na]}^{A_n}\left(V_{(i)}-\Phi^{-1}(a)\right)W_{(i:n)}\right|& \leq \frac{\sqrt{n}}{m_n}\tilde{\mathcal{E}}_{i=[na]}^{A_n}\left|V_{(i)}-\Phi^{-1}(a)\right|\left|W_{(i:n)}\right|\\
         &\leq\frac{\sqrt{n}}{m_n}\max\left\{\left|V_{(A_n)}-\Phi^{-1}(a)\right|,\left|V_{([na])}-\Phi^{-1}(a)\right|\right\}\tilde{\mathcal{E}}_{i=[na]}^{A_n}\left|W_{(i:n)}\right|\\
         &=\frac{n}{m_n}\max\left\{\left|V_{(A_n)}-\Phi^{-1}(a)\right|,\left|V_{([na])}-\Phi^{-1}(a)\right|\right\}\times \\
         &\phantom{=}\times \frac{1}{\sqrt{n}}\left(\tilde{\mathcal{E}}_{i=[na]}^{A_n}\left(\left|W_{(i:n)}\right|-\sqrt{2 / \pi}\right)+\left|A_n-[na]\right|\sqrt{2 / \pi}\right).
     \end{align*}
    By the consistency of empirical quantiles, we get
$\left|V_{([na])}-\Phi^{-1}(a)\right|\xrightarrow{\bP}0$ and $\left|V_{(A_n)}-\Phi^{-1}(a)\right|\xrightarrow{\bP}0$ as $n\to\infty$. Also, using the argument from Lemma~\ref{l:binom} applied to $\tilde{\mathcal{E}}_{i=k}^l$, we obtain $\frac{1}{\sqrt{n}}\tilde{\mathcal{E}}_{i=[na]}^{A_n}\left(\left|W_{(i:n)}\right|-\sqrt{\frac{2}{\pi}}\right)\xrightarrow{\bP}0$.
  Next, due to the central limit theorem and the continuous mapping theorem $\left|\frac{A_n-[na]}{\sqrt{n}}\right|$ converges in distribution to the non-degenerate folded normal random variable, so by Slutsky's theorem, we get $
        \sqrt{2/\pi}\max\left\{\left|V_{(A_n)}-\Phi^{-1}(a)\right|,\left|V_{([na])}-\Phi^{-1}(a)\right|\right\}\frac{|A_n-[na]|}{\sqrt{n}}\xrightarrow{\bP}0$, 
        which shows~\eqref{eq:38}.  Using a similar argument, we may show $        \frac{\sqrt{n}}{m_n}\mathcal{E}_{i=B_n}^{[nb]}V_{(i)}W_{(i:n)}\xrightarrow{\bP}0$,
    which together with~\eqref{eq:conv:E_n^3} concludes the proof of \eqref{eq:trim_VW}.
    
    Finally, we show~\eqref{eq:trim_W}; the proof of~\eqref{eq:trim_W^2} follows the same logic and is omitted for brevity.  First, note that
    \begin{align*}
        \frac{\sqrt{n}}{m_n}\sum_{i=[na]+1}^{[nb]}W_{(i:n)}&=\frac{\sqrt{n}}{m_n}\sum_{i=A_n+1}^{B_n}W_{(i:n)} + \frac{\sqrt{n}}{m_n}\mathcal{E}_{i=[na]}^{A_n}W_{(i:n)} + \frac{\sqrt{n}}{m_n}\mathcal{E}_{i=B_n}^{[nb]}W_{(i:n)}.
    \end{align*}
    Then, noting that $      \frac{\sqrt{n}}{m_n}\sum_{i=A_n+1}^{B_n}W_{(i:n)}=\frac{\sqrt{n}}{m_n}\sum_{i=1}^{n}W_i1_{\{V_i\in A_V\}}=\frac{\sqrt{n}}{m_n}\sum_{i=1}^{n}Z^A_{i,4}$, setting $E^A_{n,4}:=\frac{\sqrt{n}}{m_n}\mathcal{E}_{i=[na]}^{A_n}W_{(i:n)} + \frac{\sqrt{n}}{m_n}\mathcal{E}_{i=B_n}^{[nb]}W_{(i:n)}$, and using Lemma~\ref{l:binom}, we get $E^A_{n,4}\xrightarrow{\bP}0$ as $n\to\infty$, which shows of~\eqref{eq:trim_W}.  By direct computation, we get $\bE\left[Z_{i,k}^A\right]=0$ for $i\in \mathbb{N}$ and $k=1,\ldots,5$, which concludes the proof.  
\end{proof}

Recalling\eqref{eq:Z_1}--\eqref{eq:Z_5}, to ease the notation, we define 
\begin{equation}\label{eq:Z_i^A}
Z_i^A:=\left[Z^A_{i,1}, Z^A_{i,2}, Z^A_{i,3}, Z^A_{i,4}, Z^A_{i,5}\right]^\top,\quad i\in \mathbb{N}\setminus \{0\}.
\end{equation}
Then, using Lemma~\ref{lm:representations}, we get that the random vectors $Z_i^A$ are i.i.d. with zero mean, thus by the multivariate central limit theorem combined with Slutsky's theorem, we get
\begin{align}\label{eq:convergence}
    \frac{\sqrt{n}}{m_n}\sum_{i=1}^nZ^A_{i}&= \frac{n}{m_n}\sqrt{n}\frac{1}{n}\sum_{i=1}^nZ^A_{i}\xrightarrow{d}\left[J_{A,1},\,J_{A,2}, \,B_{A,1},\,B_{A,2},\,B_{A,3}\right]^\top,
\end{align}
 where $\left[J_{A,1},J_{A,2},B_{A,1},B_{A,2},B_{A,3}\right]^\top$ is a normally distributed random vector with $0$ mean and the covariance matrix $\Sigma^A$ given by $\Sigma^A(k,l)=\frac{1}{(b-a)^2} \cov\left[Z^A_{1,k},Z^A_{1,l}\right]$, $k,l=1,...,5$.

Now, we show how to use~\eqref{eq:convergence} to prove the asymptotic normality of the sample conditional mean and covariance. This could be seen as an extension of the results from Section 4 in~\cite{JelPit2021} for the bivariate case.

\begin{proposition}\label{l.mur}
Let $\hat{\mu}_{A,j}$, $\hat{r}_{A}$, $\mu_{A,j}$, and $r_A$ be given by~\eqref{eq:mu_A:hat},~\eqref{eq:r_A:hat},~\eqref{eq:mu_A}, and~\eqref{eq:r_A}, respectively. Then, we get
\begin{align}
\sqrt{n} ( \hat{\mu}_{A,j} -\mu_{A,j})&\xrightarrow{d}  
\sigma_j \left( \cos(\gamma_j) J_{A,1} + \sin(\gamma_j) B_{A,2}\right),\label{eq:l.mur:mu} \\
\sqrt{n} \left( \hat{r}_{A} - r_A\right)
&\xrightarrow{d}
\sigma_1\sigma_2 \left(\cos(\gamma_1) \cos(\gamma_2)  (J_{A,2}- 2\lambda_{A,1} J_{A,1})+\sin(\gamma_1+\gamma_2) \left(B_{A,1}-\lambda_{A,1}B_{A,2}\right)
+ \sin(\gamma_1)\sin(\gamma_2) B_{A,3}\right),\label{eq:l.mur:r}
\end{align}
where $J_{A,1},J_{A,2},B_{A,1},B_{A,2},B_{A,3}$ are given as in \eqref{eq:convergence}. In particular, the limiting random variables are normally distributed. 
\end{proposition}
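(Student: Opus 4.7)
\textbf{Proof plan for Proposition~\ref{l.mur}.}
For the conditional mean part \eqref{eq:l.mur:mu}, I will start from the explicit representation \eqref{eq:mu_A:hat} and subtract \eqref{eq:mu_A}, which gives
\[
\sqrt{n}(\hat\mu_{A,j}-\mu_{A,j})=\sigma_j\cos(\gamma_j)\,\frac{\sqrt{n}}{m_n}\sum_{i=[na]+1}^{[nb]}(V_{(i)}-\lambda_{A,1})+\sigma_j\sin(\gamma_j)\,\frac{\sqrt{n}}{m_n}\sum_{i=[na]+1}^{[nb]}W_{(i:n)}.
\]
Applying Lemma~\ref{lm:representations}, parts \eqref{eq:trim_V} and \eqref{eq:trim_W}, each trimmed sum is replaced (up to an error $E_{n,k}^A\overset{\bP}{\to}0$) by $\frac{\sqrt{n}}{m_n}\sum_{i=1}^n Z_{i,1}^A$ and $\frac{\sqrt{n}}{m_n}\sum_{i=1}^n Z_{i,4}^A$, respectively. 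The joint convergence \eqref{eq:convergence}, together with Slutsky's theorem, then yields the stated limit $\sigma_j(\cos(\gamma_j)J_{A,1}+\sin(\gamma_j)B_{A,2})$.

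For the conditional covariance part \eqref{eq:l.mur:r}, I will expand $\sqrt{n}(\hat r_A-r_A)$ using \eqref{eq:r_A:hat} and \eqref{eq:r_A}. The three ``main'' trigonometric blocks give
\[
\begin{aligned}
T_{n,1}&:=\sigma_1\sigma_2\cos(\gamma_1)\cos(\gamma_2)\,\frac{\sqrt{n}}{m_n}\sum_{i=[na]+1}^{[nb]}(V_{(i)}^2-\lambda_{A,2}),\\
T_{n,2}&:=\sigma_1\sigma_2\sin(\gamma_1+\gamma_2)\,\frac{\sqrt{n}}{m_n}\sum_{i=[na]+1}^{[nb]}V_{(i)}W_{(i:n)},\\
T_{n,3}&:=\sigma_1\sigma_2\sin(\gamma_1)\sin(\gamma_2)\,\frac{\sqrt{n}}{m_n}\sum_{i=[na]+1}^{[nb]}(W_{(i:n)}^2-1),
\end{aligned}
\]
and Lemma~\ref{lm:representations} (parts \eqref{eq:trim_V^2}, \eqref{eq:trim_VW}, \eqref{eq:trim_W^2}) together with \eqref{eq:convergence} identify their joint limit as $\sigma_1\sigma_2(\cos(\gamma_1)\cos(\gamma_2)J_{A,2}+\sin(\gamma_1+\gamma_2)B_{A,1}+\sin(\gamma_1)\sin(\gamma_2)B_{A,3})$. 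The remaining piece is the centering correction $-\sqrt{n}(\hat\mu_{A,1}-\mu_1)(\hat\mu_{A,2}-\mu_2)$ in \eqref{eq:r_A:hat}, which must be balanced against the $-\sigma_1\sigma_2\cos(\gamma_1)\cos(\gamma_2)\lambda_{A,1}^2$ piece contained in $r_A$ via \eqref{eq:r_A}.

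The main obstacle is the careful handling of this cross term. Writing $\bar V:=\frac{1}{m_n}\sum V_{(i)}$ and $\bar W:=\frac{1}{m_n}\sum W_{(i:n)}$, \eqref{eq:mu_A:hat} gives
\[
(\hat\mu_{A,1}-\mu_1)(\hat\mu_{A,2}-\mu_2)=\sigma_1\sigma_2\bigl(\cos(\gamma_1)\cos(\gamma_2)\bar V^2+\sin(\gamma_1+\gamma_2)\bar V\bar W+\sin(\gamma_1)\sin(\gamma_2)\bar W^2\bigr).
\]
Since $\bar V-\lambda_{A,1}=O_{\bP}(n^{-1/2})$ and $\bar W=O_{\bP}(n^{-1/2})$ by Lemma~\ref{lm:representations} and \eqref{eq:convergence}, I will use the factorisations $\bar V^2-\lambda_{A,1}^2=2\lambda_{A,1}(\bar V-\lambda_{A,1})+(\bar V-\lambda_{A,1})^2$ and $\bar V\bar W=\lambda_{A,1}\bar W+(\bar V-\lambda_{A,1})\bar W$ to isolate the contributions of order $n^{-1/2}$ and discard the $O_{\bP}(n^{-1})$ remainders after multiplication by $\sqrt n$. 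This gives
\[
\sqrt n\bigl[(\hat\mu_{A,1}-\mu_1)(\hat\mu_{A,2}-\mu_2)-\sigma_1\sigma_2\cos(\gamma_1)\cos(\gamma_2)\lambda_{A,1}^2\bigr]=\sigma_1\sigma_2\bigl(2\cos(\gamma_1)\cos(\gamma_2)\lambda_{A,1}\,\sqrt n(\bar V-\lambda_{A,1})+\sin(\gamma_1+\gamma_2)\lambda_{A,1}\sqrt n\,\bar W\bigr)+o_{\bP}(1).
\]
Recognising $\sqrt n(\bar V-\lambda_{A,1})\xrightarrow{d}J_{A,1}$ and $\sqrt n\,\bar W\xrightarrow{d}B_{A,2}$ jointly with $T_{n,1},T_{n,2},T_{n,3}$ via \eqref{eq:convergence} and the continuous mapping theorem, the combined limit becomes exactly the expression stated in \eqref{eq:l.mur:r}. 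Joint normality of the limits follows from the multivariate central limit theorem already invoked in \eqref{eq:convergence}.
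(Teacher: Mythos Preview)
Your proposal is correct and follows essentially the same approach as the paper: both arguments reduce the trimmed sums in $\hat\mu_{A,j}$ and $\hat r_A$ to the i.i.d.\ sums of $Z_{i,k}^A$ via Lemma~\ref{lm:representations}, handle the quadratic correction $(\hat\mu_{A,1}-\mu_1)(\hat\mu_{A,2}-\mu_2)$ through the expansion $\bar V^2-\lambda_{A,1}^2=2\lambda_{A,1}(\bar V-\lambda_{A,1})+(\bar V-\lambda_{A,1})^2$ and the analogous splitting of $\bar V\bar W$, and then invoke the joint CLT~\eqref{eq:convergence} together with Slutsky's theorem. The only cosmetic difference is that the paper packages the result as $M_n^A\cdot\frac{\sqrt n}{m_n}\sum_i Z_i^A+H_n^A$ with an explicit remainder $H_n^A\xrightarrow{\bP}0$, whereas you phrase the same remainders in $o_{\bP}(1)$ notation.
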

\begin{proof}

First, we show~\eqref{eq:l.mur:mu}. Recalling Lemma~\ref{lm:representations} and \eqref{eq:convergence}, by the multivariate central limit theorem combined with Slutsky's theorem, we get 
\begin{align*}
\sqrt{n} ( \hat{\mu}_{A,j} -\mu_{A,j})&=\sqrt{n} ( \hat{\mu}_{A,j} - \mu_j - \sigma_j \cos(\gamma_j) \lambda_{A,1})\nonumber\\
&=\frac{\sigma_j\sqrt{n}}{m_n} \left( \cos(\gamma_j) \sum_{i=[na]+1}^{[nb]} V_{(i)} 
+ \sin(\gamma_j) \sum_{i=[na]+1}^{[nb]} W_{(i:n)}-m_n\cos(\gamma_j) \lambda_{A,1}\right)\nonumber\\
&=\sigma_j\left(\cos(\gamma_j)\frac{\sqrt{n}}{m_n}\sum_{i=[na]+1}^{[nb]} \left(V_{(i)}-\lambda_{A,1}\right)+ \sin(\gamma_j)\frac{\sqrt{n}}{m_n}\sum_{i=[na]+1}^{[nb]} W_{(i:n)}\right)\nonumber\\
& = \sigma_j\left(\cos(\gamma_j)\frac{\sqrt{n}}{m_n}\sum_{i=1}^n Z^A_{i,1}+ \sin(\gamma_j)\frac{\sqrt{n}}{m_n}\sum_{i=[na]+1}^{[nb]} Z^A_{i,4}\right)+\sigma_j \cos (\gamma_j)E^A_{n,1}+\sigma_j \sin (\gamma_j)E^A_{n,4}\nonumber\\
&= \sigma_j\begin{bmatrix}
    \cos(\gamma_j), & \sin(\gamma_j)
\end{bmatrix}\begin{bmatrix}
    \frac{\sqrt{n}}{m_n}\displaystyle\sum_{i=1}^n Z^A_{i,1} \\
    \frac{\sqrt{n}}{m_n}\displaystyle\sum_{i=1}^n Z^A_{i,4}
\end{bmatrix} + \sigma_j\cos (\gamma_j)E^A_{n,1} + \sigma_j \sin (\gamma_j)E^A_{n,4}\nonumber\\
&\xrightarrow{d} \sigma_j \left( \cos(\gamma_j) J_{A,1} + \sin(\gamma_j) B_{A,2}\right),
\end{align*}
which concludes the proof of~\eqref{eq:l.mur:mu}.

\indent Now, we show \eqref{eq:l.mur:r}. First, note that 
 from~\eqref{eq:trim_V}, by direct computation, we get
\begin{align*}
    \sqrt{n}\left(\left(\frac{1}{m_n}\sum_{i=[na]+1}^{[nb]}V_{(i)}\right)^2-(\lambda_{A,1})^2\right) = 2\lambda_{A,1}\frac{\sqrt{n}}{m_n}\sum_{i=1}^n Z^A_{i,1} +\sqrt{n}\left(\frac{1}{m_n}\sum_{i=1}^n Z^A_{i,1}\right)^2 + \frac{1}{\sqrt{n}}\left(E^A_{n,1}\right)^2 +  2\lambda_{A,1}E^A_{n,1} + \frac{2}{m_n}E^A_{n,1}\sum_{i=1}^{n}Z_{i,1}^A.
\end{align*}
Also, multiplying both hand sides of~\eqref{eq:trim_V} by $\frac{1}{\sqrt{n}}$ and using the law of large numbers, we obtain
\begin{align}\label{eq:WLLN}
    \frac{1}{m_n}\sum_{i=[na]+1}^{[nb]}V_{(i)}=\frac{1}{m_n}\sum_{i=1}^n Z^A_{i,1}+\frac{1}{\sqrt{n}}E^A_{n,1}+\lambda_{A,1}\xrightarrow{\bP}\lambda_{A,1}.       
\end{align}
Then, recalling~\eqref{eq:Z_1}--\eqref{eq:Z_5}, we get
\begingroup
\allowdisplaybreaks
\begin{align}\label{eq:l.mur:r:1}
   \sqrt{n} \left( \hat{r}_{A} - r_A\right)&= \sqrt{n} \left( \hat{r}_{A} - \sigma_1 \sigma_2 \left( \cos(\gamma_1) \cos(\gamma_2) (\lambda_{A,2}- (\lambda_{A,1})^2) + \sin(\gamma_1)\sin(\gamma_2) \right)\right)\nonumber\\
&=\sqrt{n}\left(\frac{\sigma_1 \sigma_2}{m_n}
  \left(\cos(\gamma_1) \cos(\gamma_2)  \sum_{i=[na]+1}^{[nb]}V_{(i)}^2 
+\sin(\gamma_1+\gamma_2)  \sum_{i=[na]+1}^{[nb]}V_{(i)} W_{(i:n)}+
  \sin(\gamma_1)\sin(\gamma_2)\sum_{i=[na]+1}^{[nb]}W_{(i:n)}^2 \right)\right.\nonumber\\
&\phantom{=\sqrt{n}}- \frac{\sigma_1\sigma_2}{m^2_n}{}\left(\cos(\gamma_1) \sum_{i=[na]+1}^{[nb]} V_{(i)} 
+ \sin(\gamma_1) \sum_{i=[na]+1}^{[nb]} W_{(i:n)}\right)\left(\cos(\gamma_2) \sum_{i=[na]+1}^{[nb]} V_{(i)} 
+ \sin(\gamma_2) \sum_{i=[na]+1}^{[nb]} W_{(i:n)}\right)\nonumber\\
&\phantom{=\sqrt{n}}\left.- \sigma_1 \sigma_2 \left( \cos(\gamma_1) \cos(\gamma_2) (\lambda_{A,2}- (\lambda_{A,1})^2) + \sin(\gamma_1)\sin(\gamma_2) \right)\right)\nonumber\\
&=\sigma_1\sigma_2\left(\cos(\gamma_1)\cos(\gamma_2)\left(\frac{\sqrt{n}}{m_n}\sum_{i=[na]+1}^{[nb]}\left(V^2_{(i)}-\lambda_{A,2}\right)-\sqrt{n}\left(\left(\frac{1}{m_n}\sum_{i=[na]+1}^{[nb]}V_{(i)}\right)^2-(\lambda_{A,1})^2\right)\right)\right.\nonumber\\
&\phantom{=\sigma_1\sigma_2}+\sin(\gamma_1+\gamma_2)\left(\frac{\sqrt{n}}{m_n}\sum_{i=[na]+1}^{[nb]}V_{(i)}W_{(i:n)}-\frac{\sqrt{n}}{m_n^2}\sum_{i=[na]+1}^{[nb]}V_{(i)}\sum_{i=[na]+1}^{[nb]}W_{(i:n)}\right)\nonumber\\
&\phantom{=\sigma_1\sigma_2}\left.+\sin(\gamma_1)\sin(\gamma_2)\left(\frac{\sqrt{n}}{m_n}\sum_{i=[na]+1}^{[nb]}\left(W^2_{(i:n)}-1\right)-\frac{\sqrt{n}}{m_n^2}\left(\sum_{i=[na]+1}^{[nb]}W_{(i:n)}\right)^2\right)\right)\nonumber\\
&=\sigma_1\sigma_2\left(\cos(\gamma_1)\cos(\gamma_2)\left(\frac{\sqrt{n}}{m_n}\sum_{i=1}^{n}Z^A_{i,2}-2\lambda_{A,1}\frac{\sqrt{n}}{m_n}\sum_{i=1}^{n}Z^A_{i,1}\right)\right.\nonumber\\
&\phantom{=\sigma_1\sigma_2}\left.+\sin(\gamma_1+\gamma_2)\left(\frac{\sqrt{n}}{m_n}\sum_{i=1}^{n}Z^A_{i,3}-\frac{1}{m_n}\sum_{i=[na]+1}^{[nb]}V_{(i)}\frac{\sqrt{n}}{m_n}\sum_{i=1}^{n}Z^A_{i,4}\right)+\sin(\gamma_1)\sin(\gamma_2)\frac{\sqrt{n}}{m_n}\sum_{i=1}^{n}Z^A_{i,5}\right)+H^A_n\nonumber\\
&=M^A_n \frac{\sqrt{n}}{m_n}\sum_{i=1}^nZ^A_{i}+H^A_n,
\end{align}
\endgroup
where we set
\begin{align}
M^A_n&:=\sigma_1\sigma_2\left[
    -2\lambda_{A,1}\cos(\gamma_1)\cos(\gamma_2),\, \cos(\gamma_1)\cos(\gamma_2),\, \sin(\gamma_1+\gamma_2),\, -\frac{\sin(\gamma_1+\gamma_2)}{m_n}\sum_{i=[na]+1}^{[nb]}V_{(i)},\, \sin(\gamma_1)\sin(\gamma_2)
\right]\label{eq:M_n}\\
\label{eq:H_n}
    H^A_n&:=\sigma_1\sigma_2\left(\cos(\gamma_1)\cos(\gamma_2)\left(E^A_{n,2}+\sqrt{n}\left(\frac{1}{m_n}\sum_{i=1}^{n}Z_{i,1}^A\right)^2+ E^A_{n,1}\left(\frac{1}{\sqrt{n}}E^A_{n,1}+2\lambda_{A,1}+\frac{2}{m_n}\sum_{i=1}^{n}Z_{i,1}^A\right)\right)\right.\nonumber\\
    &\phantom{=}\left.+\sin(\gamma_1+\gamma_2)\left(E^A_{n,3}-\frac{1}{m_n}\sum_{i=[na]+1}^{[nb]}V_{(i)}E^A_{n,4}\right)+\sin(\gamma_1)\sin(\gamma_2)\left(E^A_{n,5}+\frac{\sqrt{n}}{m_n^2}\left(\sum_{i=[na]+1}^{[nb]}W_{(i:n)}\right)^2\right)\right).
\end{align}
By Lemma~\ref{lm:representations} and \eqref{eq:WLLN}, we have $H_n^A\xrightarrow{\bP}0$ and
\begin{align*}
    M^A_n\xrightarrow{\bP}\sigma_1\sigma_2\left[
    -2\lambda_{A,1}\cos(\gamma_1)\cos(\gamma_2),\, \cos(\gamma_1)\cos(\gamma_2),\, \sin(\gamma_1+\gamma_2),\, -\sin(\gamma_1+\gamma_2)\lambda_{A,1},\, \sin(\gamma_1)\sin(\gamma_2)
\right].
\end{align*}
Consequently, combining \eqref{eq:l.mur:r:1} with Slutsky's Theorem and \eqref{eq:convergence}, we obtain
\begin{align}\label{eq:conv_r_A}
    \sqrt{n}\left(\hat{r}_{A}-r_A\right)&=M^A_n \frac{\sqrt{n}}{m_n}\sum_{i=1}^nZ^A_{i}+H^A_n\nonumber\\
    &\xrightarrow{d}\sigma_1\sigma_2 \left(\cos(\gamma_1) \cos(\gamma_2)  (J_{A,2}- 2\lambda_{A,1} J_{A,1})+\sin(\gamma_1+\gamma_2) \left(B_{A,1}-\lambda_{A,1}B_{A,2}\right)
+ \sin(\gamma_1)\sin(\gamma_2) B_{A,3}\right),
\end{align}
which concludes the proof.
\end{proof}

From Proposition~\ref{l.mur} we conclude that the estimators $\hat{\mu}_{A,j}$ and $\hat{r}_{A}$ are asymptotically normal. Thus, as the asymptotic normality implies the consistency of the estimator, we get the following corollary.
\begin{corollary}
    We have $\hat{\mu}_{A,j}\xrightarrow{\mathbb{P}} \mu_{A,j}$ and $\hat{r}_{A}\xrightarrow{\mathbb{P}}r_A$.
\end{corollary}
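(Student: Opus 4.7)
The plan is to derive this consistency statement as an immediate consequence of Proposition~\ref{l.mur}, which has already established the asymptotic normality of $\hat{\mu}_{A,j}$ and $\hat{r}_A$ after $\sqrt{n}$-scaling. The general principle at work is the standard observation that if $\sqrt{n}(\hat{\theta}_n - \theta)$ converges in distribution to a (proper, finite) random variable, then $\hat{\theta}_n - \theta$ converges to zero in probability; this is the classical implication "asymptotic normality $\Rightarrow$ consistency."

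More concretely, I would first invoke the convergence stated in~\eqref{eq:l.mur:mu}, namely
\[
\sqrt{n}(\hat{\mu}_{A,j} - \mu_{A,j}) \xrightarrow{d} \sigma_j\bigl(\cos(\gamma_j) J_{A,1} + \sin(\gamma_j) B_{A,2}\bigr),
\]
where the right-hand side is a (finite) normal random variable by the last sentence of Proposition~\ref{l.mur}. Since convergence in distribution to a proper limit implies tightness, the sequence $\sqrt{n}(\hat{\mu}_{A,j} - \mu_{A,j})$ is $O_P(1)$. Writing
\[
\hat{\mu}_{A,j} - \mu_{A,j} = \frac{1}{\sqrt{n}}\cdot \sqrt{n}(\hat{\mu}_{A,j} - \mu_{A,j}),
\]
and applying Slutsky's theorem to the product of the deterministic null sequence $1/\sqrt{n}$ with the tight sequence, yields $\hat{\mu}_{A,j} - \mu_{A,j} \xrightarrow{\mathbb{P}} 0$, which is the desired convergence $\hat{\mu}_{A,j} \xrightarrow{\mathbb{P}} \mu_{A,j}$.

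The very same argument, applied instead to~\eqref{eq:l.mur:r}, delivers $\hat{r}_A \xrightarrow{\mathbb{P}} r_A$. Since both statements reduce to this single two-line Slutsky argument, there is essentially no substantive obstacle: all the analytical work (handling the trimmed sums, controlling the remainder terms $E^A_{n,k}$ and $H^A_n$, and identifying the limiting Gaussian vector) has already been carried out in Lemma~\ref{lm:representations} and Proposition~\ref{l.mur}. The only thing worth being careful about is noting explicitly that the limits appearing in~\eqref{eq:l.mur:mu} and~\eqref{eq:l.mur:r} are almost surely finite (which is automatic since they are linear combinations of coordinates of the Gaussian vector in~\eqref{eq:convergence}), so that the tightness step is justified.
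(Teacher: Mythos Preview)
Your proposal is correct and follows exactly the same approach as the paper: the paper simply remarks that asymptotic normality (established in Proposition~\ref{l.mur}) implies consistency, and your write-up just spells out this standard implication via tightness and Slutsky's theorem.
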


Finally, we prove Theorem~\ref{theorem}.

\begin{proof}[{\bf Proof of Theorem~\ref{theorem}}]
In the proof we focus on $T_n$, the arguments for $L_n$ and $R_n$ are similar and we omit them for brevity.

First, let us define $T_n^\prime:=\sqrt{n}\left(\hat{r}_{A_1}+\hat{r}_{A_3}-2\hat{r}_{A_2}\right)$ and show that
\begin{align}\label{eq:52}
    T_n^\prime\xrightarrow{d}T_\infty&:=\sigma_1\sigma_2 
    \left(\cos(\gamma_1) \cos(\gamma_2)  \left(J_{A_1,2}-2J_{A_2,2}+J_{A_3,2}- 2\lambda_{A_1,1}J_{A_1,1} - 2\lambda_{A_3,1} J_{A_3,1}\right)\right. \nonumber\\
    &\phantom{=\sigma_1\sigma_2}+\sin(\gamma_1+\gamma_2) \left(B_{A_1,1}-\lambda_{A_1,1}B_{A_1,2}-2B_{A_2,1}+B_{A_3,1}-\lambda_{A_3,1}B_{A_3,2}\right) \nonumber\\
    &\phantom{=\sigma_1\sigma_2}\left.+\sin(\gamma_1)\sin(\gamma_2)  \left(B_{A_1,3}-2B_{A_2,3}+B_{A_3,3}\right)\right),
\end{align}
    where $B_{A_k,i}$ and $J_{A_k}^j$ are as in \eqref{eq:convergence}. In particular, this implies that $T_\infty$ is normally distributed as by~\eqref{eq:convergence} it is a linear combination of the multivariate normal random variables. To see that~\eqref{eq:52} holds, we recall that by Theorem~\ref{Th:206020}, we have $r_{A_1}=r_{A_2}=r_{A_3}$.
Thus, as in the proof of the Proposition~\ref{l.mur}, we obtain
\begin{align*}
    T_n^\prime&=\sqrt{n}\left(\hat{r}_{A_1}-r_{A_1} +\hat{r}_{A_3}-r_{A_3} -2\hat{r}_{A_2}+2r_{A_2}\right)=\left[M^{A_1}_n,\, -2M^{A_2}_n,\, M^{A_3}_n\right] \frac{\sqrt{n}}{m_n}\sum_{i=1}^n\begin{bmatrix}
           Z^{A_1}_i\\ Z^{A_2}_i\\ Z^{A_3}_i 
        \end{bmatrix} +H^{A_1}_n-2H^{A_2}_n+H^{A_3}_n,  
\end{align*}
where $Z_i^{A_k} $, $M_n^{A_k}$, and $ H_n^{A_k}$ are as in~\eqref{eq:Z_i^A}, \eqref{eq:M_n}, and \eqref{eq:H_n}, respectively. Then, recalling the argument leading to~\eqref{eq:conv_r_A} and using the multivariate central limit theorem combined with Slutsky's theorem, we get $T_n'\xrightarrow{d}T_\infty$ as $n\to\infty$, which shows~\eqref{eq:52}.

   Second, we provide a formula for the variance of $T_\infty$. By direct computation, we get
        \begin{align}\label{eq:th:final:variance}
    \Var(T_\infty)=\sigma_1^2\sigma_2^2\left(\cos^2(\gamma_1)\cos^2(\gamma_2)C_1+\sin^2(\gamma_1+\gamma_2)C_2+\sin^2(\gamma_1)\sin^2(\gamma_2)C_3\right),
    \end{align}
    where we have
    \begin{align*}
        C_1:&=\Var\left[J_{A_1,2}-2J_{A_2,2}+J_{A_3,2}- 2\lambda_{A_1,1} J_{A_1,1} - 2\lambda_{A_3,1} J_{A_3,1}\right]\\
        &=2\kappa_{A_1}+4\kappa_{A_2}-8\lambda_{A_1,1}\xi_{A_1}+2\tilde{q}(\lambda_{A_1,2})^2 + 2\tilde q(1-\tilde q)\left(\left(\tilde M_{1,2,2}\right)^2+4\left(\tilde M_{2,2,2}\right)^2\right) + 2\left(\tilde q\tilde M_{1,2,2}\right)^2+4\tilde q(\lambda_{A_2,2})^2\\
        &\phantom{=}+8\tilde M_{2,2,2} \left(\tilde M_{2,2,1}\right)^2\left(\lambda_{A_2,2}\left(2\tilde q  -\lambda_{A_2,2}\right)-\tilde q\right) + 8(\lambda_{A_1,1})^2\left(2\tilde q \lambda_{A_1,2}-\tilde q (\lambda_{A_1,1})^2+\tilde q(1-\tilde q)\tilde M_{1,1,1}\right)\\
        &\phantom{=} -8\lambda_{A_1,1}\left(\tilde q(1-\tilde q)\tilde M_{1,1,1}\tilde M_{1,2,2}+\tilde a^2 \left(M_{1,2,1}\right)^2\tilde M_{1,1,1}\right)\\
        &\phantom{=}+ 16\lambda_{A_1,1}\tilde N_1\left((1-2\tilde q)\left((1-\tilde q)\lambda_{A_1,1} + \tilde q \lambda_{A_2,2}-\tilde q-\lambda_{A_1,1}\lambda_{A_2,2}\right) +\tilde N_2(1-2\tilde q)\lambda_{A_1,2} \right)\\  
        C_2:&=\Var\left[B_{A_1,1}-\lambda_{A_1,1}B_{A_1,2}-2B_{A_2,1}+B_{A_3,1}-\lambda_{A_3,1}B_{A_3,2}\right]=2\frac{\lambda_{A_1,2}-(\lambda_{A_1,1})^2}{\tilde{q}}+4\frac{\lambda_{A_2,2}}{1-2\tilde{q}},\\
        C_3:&=\Var\left[B_{A_1,3} + B_{A_3,3} -2B_{A_2,3} \right]=\frac{4}{\tilde{q}}+\frac{8}{1-2\tilde{q}},
    \end{align*}
    with $\kappa_{A_i}:=\bE\left[V_1^4\mid V_1\in A_1\right],\, \xi_{A_i}:=\bE\left[V_1^3\mid V_1\in A_1\right],\, \tilde a := \Phi^{-1}(\tilde q),\, \tilde M_{i,j,k} := \tilde a^k -\lambda_{A_i,j},\, \tilde N_k:=\tilde q-\tilde a^k$.
By numerical evaluation we obtain $C_1\approx3.1989$, $C_2\approx3.6412$, and $C_3\approx33.4424$. Now, we reexpress~\eqref{eq:th:final:variance} in terms of the parameters of the underlying random vector $(X_{1,1},X_{1,2},Y)$. Recalling~~\eqref{eq:X_params}, by extensive computations, we obtain 
\begin{align*}
    \sigma_1^2\sigma_2^2\cos^2(\gamma_1)\cos^2(\gamma_2)&=\frac{\cov^2(Y_1,X_{1,1})\cov^2(Y_1,X_{1,2})}{\sigma^4(Y_1)},\\
    \sigma_1^2\sigma_2^2\sin^2(\gamma_1+\gamma_2)&=\sigma_1^2\sigma_2^2(\cos^2(\gamma_1)+\cos^2(\gamma_2)+2\cos(\gamma_1-\gamma_2)\cos(\gamma_1)\cos(\gamma_2)-4\cos^2(\gamma_1)\cos^2(\gamma_2))\nonumber\\
        &=\frac{\sigma_2^2\cov^2(Y_1,X_{1,1})+\sigma_1^2\cov^2(Y_1,X_{1,2})+2\cov(X_{1,1},X_{1,2})\cov(Y_1,X_{1,1})\cov(Y_1,X_{1,2})}{\sigma^2(Y_1)}\nonumber\\
        &\phantom{=}-4\frac{\cov^2(Y_1,X_{1,1})\cov^2(Y_1,X_{1,2})}{\sigma^4(Y_1)},\\   
\sigma_1^2\sigma_2^2\sin^2(\gamma_1)\sin^2(\gamma_2)&=\left(\sigma_1^2-\frac{\cov^2(Y_1,X_{1,1})}{\sigma^2(Y_1)}\right)\left(\sigma_2^2-\frac{\cov^2(Y_1,X_{1,2})}{\sigma^2(Y_1)}\right).
    \end{align*}
Plugging these into~\eqref{eq:th:final:variance}, we get
\begingroup
\allowdisplaybreaks
    \begin{align*}
        \Var(T_\infty)&=\frac{\cov^2(Y_1,X_{1,1})\cov^2(Y_1,X_{1,2})}{\sigma^2(Y_1)}C_1\\
        &\phantom{=}+\Bigg(\frac{\sigma_2^2\cov^2(Y_1,X_{1,1})+\sigma_1^2\cov^2(Y_1,X_{1,2})+2\cov(X_{1,1},X_{1,2})\cov(Y_1,X_{1,1})\cov(Y_1,X_{1,2})}{\sigma^2(Y_1)}\\
        &\phantom{=}-4\frac{\cov^2(Y_1,X_{1,1})\cov^2(Y_1,X_{1,2})}{\sigma^4(Y_1)}\Bigg)C_2+\left(\sigma_1^2-\frac{\cov^2(Y_1,X_{1,1})}{\sigma^2(Y_1)}\right)\left(\sigma_2^2-\frac{\cov^2(Y_1,X_{1,2})}{\sigma^2(Y_1)}\right)C_3\\
        &=\frac{\cov^2(Y_1,X_{1,1})\cov^2(Y_1,X_{1,2})}{\sigma^4(Y_1)}(C_1-4C_2+C_3)\\
        &\phantom{=}+\left(\frac{\sigma_2^2\cov^2(Y_1,X_{1,1})+2\cov(X_{1,1},X_{1,2})\cov(Y_1,X_{1,1})\cov(Y_1,X_{1,2})+\sigma_1^2\cov^2(Y_1,X_{1,2})}{\sigma^2(Y_1)}\right)\left(C_2-C_3\right)\\
        &\phantom{=}+\left(\sigma_1^2\sigma_2^2+2\frac{\cov(X_{1,1},X_{1,2})\cov(Y_1,X_{1,1})\cov(Y_1,X_{1,2})}{\sigma^2(Y_1)}\right)C_3.
    \end{align*}
\endgroup
 Next, we set $K_1:=C_1-4C_2+C_3\approx 22.0766$, $K_2:=C_2-C_3\approx -29.8012$, $K_3:=C_3 \approx 33.4424$. Also, we recall $\hat{\tau}^2$ from \eqref{c.tau} and note that this is a consistent estimator of $\Var(T_\infty)$, i.e we have $\frac{1}{\hat{\tau}}\sqrt{\Var(T_\infty)}\xrightarrow{\mathbb{P}}1$. Thus, from Slutsky's theorem combined with~\eqref{eq:52}, we obtain
    \begin{align*}
        T_n=\frac{1}{\hat{\tau}}T_n^\prime=\frac{\sqrt{\Var(T_\infty)}}{\hat{\tau}}\frac{T_n^\prime}{\sqrt{\Var(T_\infty)}}\xrightarrow{d}\frac{T_{\infty}}{\sqrt{\Var(T_\infty)}}\sim N(0,1),
    \end{align*}
    which concludes the proof.
\end{proof}
\begin{remark}\label{rm:constants2}
    By the same argument, we can prove that statistics $L_n$ and $R_n$ defined in~\eqref{T1.stat} converge asymptotically to standard normal distribution. Also, due to the symmetry, the normalizing statistic given by~\eqref{c.delta} is common for both $R_n$ and $L_n$ and we have $\tilde K_1:=\tilde C_1-4\tilde C_2+\tilde C_3$, $\tilde K_2:=\tilde C_2-\tilde C_3$, $\tilde K_3:=\tilde C_3$ with 
    \begin{align*}
        \tilde C_1:&=\Var\left[J_{A_1,2}-J_{A_2,2}- 2\lambda_{A_1,1} J_{A_1,1}\right]\\
        &=\kappa_{A_1}+\kappa_{A_2}-4\lambda_{A_1,1}\xi_{A_1}+\tilde{q}\left((\lambda_{A_1,2})^2+(\lambda_{A_2,2})^2\right) + \tilde q(1-\tilde q)\left(\left(\tilde M_{1,2,2}\right)^2+2\left(\tilde M_{2,2,2}\right)^2\right)\\
        &\phantom{=}+2\tilde M_{2,2,2} \left(\tilde M_{2,2,1}\right)^2\left(\lambda_{A_2,2}\left(2\tilde q  -\lambda_{A_2,2}\right)-\tilde q\right)+4(\lambda_{A_1,1})^2\left(2\tilde q \lambda_{A_1,2}-\tilde q (\lambda_{A_1,1})^2+\tilde q(1-\tilde q)\tilde M_{1,1,1}\right)\\
        &\phantom{=} +4\lambda_{A_1,1}\left(\tilde q(1-\tilde q)\tilde M_{1,1,1}\tilde M_{1,2,2}\right)+4\lambda_{A_1,1}\tilde N_1\left((1-2\tilde q)\left((1-\tilde q)\lambda_{A_1,1} + \tilde q \lambda_{A_2,2}-\tilde q-\lambda_{A_1,1}\lambda_{A_2,2}\right) +\tilde N_2(1-2\tilde q) \right)\\
        &\phantom{=}-2\tilde N_2\left((1-2\tilde q)\left(\tilde q+\lambda_{A_1,1}\lambda_{A_2,2}-(1-\tilde q)\lambda_{A_1,1} - \tilde q \lambda_{A_2,2}\right)+\tilde N_2(1-2\tilde q)\lambda_{A_1,2}\right)\approx 1.2792,\\
        \tilde C_2:&=\Var\left[B_{A_1,1}-\lambda_{A_1,1}B_{A_1,2}-B_{A_2,1}\right]=\frac{\lambda_{A_1,2}-(\lambda_{A_1,1})^2}{\tilde{q}}+\frac{\lambda_{A_2,2}}{1-2\tilde{q}}\approx 1.4599,\\
        \tilde C_3:&=\Var\left[B_{A_1,3}-B_{A_2,3}\right]= 2\frac{1-\tilde q}{\tilde{q}(1-2\tilde q)}\approx 13.4091.
    \end{align*}
\end{remark}

\section*{Acknowledgements}
 Marcin Pitera, Jakub Wo\'zny, and Agnieszka Wy\l oma\'nska acknowledge support from the National Science Centre, Poland, via project 2020/37/B/HS4/00120. Damian Jelito acknowledge support from the National Science Centre, Poland, via project 2020/37/B/ST1/00463. Part of the work of Jakub Woźny was funded by the program Excellence Initiative – Research University at the Jagiellonian University in Kraków.

\bibliography{bibliography}

\end{document}